\newtheorem{fact}[theorem]{Fact}
\newtheorem{conjecture}[theorem]{Conjecture}
\newcommand{\N}{\ensuremath{\mathbb{N}}}
\newcommand{\Z}{\ensuremath{\mathbb{Z}}}
\newcommand{\NP}{\ensuremath{\mathsf{NP}}}
\newcommand{\W}[1]{\ensuremath{\mathsf{W[#1]}}}
\newcommand{\timesc}{\ensuremath{\times_H}}
\newcommand{\homs}[2]{\mbox{\ensuremath{\mathsf{Hom}(#1 \to #2)}}}
\newcommand{\embs}[2]{\mbox{\ensuremath{\mathsf{Emb}(#1 \to #2)}}}
\newcommand{\subs}[2]{\mbox{\ensuremath{\mathsf{Sub}(#1 \to #2)}}}
\newcommand{\strembs}[2]{\mbox{\ensuremath{\mathsf{StrEmb}(#1 \to #2)}}}
\newcommand{\indsubs}[2]{\mbox{\ensuremath{\mathsf{IndSub}(#1 \to #2)}}}
\newcommand{\auts}[1]{\ensuremath{\mathsf{Aut}(#1)}}
\newcommand{\cphom}{\ensuremath{\mathsf{cp}\text{-}\mathsf{Hom}}}
\newcommand{\cphoms}[2]{\ensuremath{\mathsf{cp}\text{-}\mathsf{Hom}}(#1 \to #2)}
\newcommand{\cpsubs}[2]{\ensuremath{\mathsf{cp}\text{-}\mathsf{Sub}}(#1 \to #2)}
\newcommand{\cpemb}{\ensuremath{\mathsf{cp}\text{-}\mathsf{Emb}}}
\newcommand{\cpstrembs}[2]{\ensuremath{\mathsf{cp}\text{-}\mathsf{StrEmb}}(#1 \to #2)}
\newcommand{\cpindsubs}[2]{\ensuremath{\mathsf{cp}\text{-}\mathsf{IndSub}}(#1 \to #2)}
\newcommand{\clique}{\ensuremath{\textsc{Clique}}}
\newcommand{\homsprob}{\ensuremath{\textsc{Hom}}}
\newcommand{\cphomsprob}{\ensuremath{\textsc{cp-Hom}}}
\newcommand{\indsubsprob}{\ensuremath{\textsc{IndSub}}}
\newcommand{\cpindsubsprob}{\ensuremath{\textsc{cp-IndSub}}}
\newcommand{\fptred}{\ensuremath{\leq^{\mathrm{fpt}}_{\mathrm{T}}}}
\title{Counting Induced Subgraphs:\newline An Algebraic Approach to \#W[1]-hardness}
\titlerunning{Counting Induced Subgraphs: An Algebraic Approach to \#W[1]-hardness\footnote{All authors of this paper are students.}}
\author{Julian Dörfler}{Saarbrücken Graduate School of Computer Science, Saarland Informatics Campus (SIC), Germany}{s8judoer@stud.uni-saarland.de}{https://orcid.org/0000-0002-0943-8282}{}
\author{Marc Roth}{Cluster of Excellence (MMCI), Saarland Informatics Campus (SIC), Saarbrücken, Germany}{mroth@mmci.uni-saarland.de}{https://orcid.org/0000-0003-3159-9418}{}
\author{Johannes Schmitt}{ETH Z\"urich, Switzerland \and \url{https://people.math.ethz.ch/~schmittj/} }{johannes.schmitt@math.ethz.ch}{https://orcid.org/0000-0001-5774-3508}{The third author has received funding from the European Research Council (ERC) under the EU’s
Horizon 2020 research and innovation programme (No~786580).}
\author{Philip Wellnitz}{Max Planck Institute for Informatics, Saarland Informatics Campus (SIC), Saarbrücken, Germany \and \url{https://people.mpi-inf.mpg.de/~wellnitz/}}{wellnitz@mpi-inf.mpg.de}{https://orcid.org/0000-0002-6482-8478}{}
\authorrunning{J. Dörfler and M. Roth and J. Schmitt and P. Wellnitz}
\keywords{counting complexity, edge-transitive graphs, graph homomorphisms, induced subgraphs, parameterized complexity}%mandatory
\begin{document}

\maketitle

\begin{abstract}
    We study the problem $\#\indsubsprob(\Phi)$ of counting all induced subgraphs of size $k$ in a graph $G$ that satisfy the property $\Phi$. This problem was introduced by Jerrum and Meeks and shown to be $\#\W{1}$-hard when parameterized by $k$ for some families of properties $\Phi$ including, among others, connectivity~[JCSS~15] and even- or oddness of the number of edges [Combinatorica 17]. Very recently~[IPEC~18], two of the authors introduced a novel technique for the complexity analysis
    of $\#\indsubsprob(\Phi)$, inspired by the ``topological approach to evasiveness'' of Kahn, Saks and Sturtevant~[FOCS~83] and the framework of graph motif parameters due to Curticapean, Dell and Marx~[STOC~17], allowing them to prove hardness of a wide range of properties $\Phi$. In this work, we refine this technique for graph properties that are non-trivial on edge-transitive graphs with a prime power number of edges. In particular, we fully classify the case of monotone bipartite graph
    properties:
    It is shown that, given \emph{any} graph property $\Phi$ that is closed under the removal of vertices and edges, and that is non-trivial for bipartite graphs, the problem $\#\indsubsprob(\Phi)$ is $\#\W{1}$-hard and cannot be solved in time $f(k)\cdot n^{o(k)}$ for any computable function $f$, unless the Exponential Time Hypothesis fails. This holds true even if the input graph is restricted to be bipartite and counting is done modulo a fixed prime. A similar result is shown for properties that are closed under the removal of edges only.
\end{abstract}

% Philip says: Please, break your lines. Please.
% Marc says: I will try to do better XD
\section{Introduction}
The study of the computational complexity of counting problems was initiated by Valiant's seminal work about the complexity of computing the permanent~\cite{Valiant79}. In contrast to a decision problem which requires to \emph{verify} the existence of a solution, a counting problem asks to compute the \emph{number} of solutions.
Counting complexity theory is particularly interesting for problems whose decision versions are solvable efficiently but whose counting versions are intractable. One such example is the problem of finding/counting perfect matchings, whose decision version is solvable in polynomial time~\cite{Edmonds65} and whose counting version is as least as hard as every problem in the Polynomial Hierarchy $\mathsf{PH}$ with respect to polynomial-time Turing reductions~\cite{Valiant79,Toda91}.
In this work, we consider the following problem which was first introduced by Jerrum and Meeks~\cite{JerrumM15}:
Fix a graph property $\Phi$, given a graph $G$ and a positive integer $k$, compute the number of all induced subgraphs of $G$ with $k$ vertices that satisfy~$\Phi$.
We denote this problem by $\#\indsubsprob(\Phi)$ and remark that, strictly speaking, $\#\indsubsprob(\Phi)$ is the \emph{unlabeled} version of $p\text{-}\#\textsc{InducedSubgraphWithProperty}(\Phi)$ as defined in~\cite[Section~1.3.1]{JerrumM17}. In particular, our properties only depend on the isomorphism type of a graph and not on any labeling of the vertices.

We study the \emph{parameterized complexity} of $\#\indsubsprob(\Phi)$ depending on the property~$\Phi$. The underlying framework, known as \emph{parameterized counting complexity theory}, was introduced independently by Flum and Grohe~\cite{FlumG04} and McCartin~\cite{McCartin06}, and constitutes a hybrid of (classical) computational counting and parameterized complexity theory.
Here, the method of parameterization allows us to perform a multivariate analysis of the complexity of $\#\indsubsprob(\Phi)$: Instead of the distinction between polynomial-time solvable and $\NP$-hard cases, we search for properties~$\Phi$ for which the problem is solvable in time~$f(k)\cdot n^{O(1)}$, where $n$ is the number of vertices of the graph and~$f$ can be any computable function.
If this is the case, the problem is called \emph{fixed-parameter tractable}. Unfortunately, the only known cases of $\Phi$ for which $\#\indsubsprob(\Phi)$ is fixed-parameter tractable are trivial in the sense that there are only finitely many $k$ such that $\Phi$ is neither true nor false on the set of all graphs with $k$ vertices.
On the contrary, it is easy to see that $\#\indsubsprob(\Phi)$ is most likely not fixed-parameter tractable if~$\Phi$ encodes a problem whose decision version is already known to be hard. An example of the latter is the property of being a complete graph. In this case, the problem $\#\indsubsprob(\Phi)$ is identical to the problem of counting cliques of size $k$, for which even the decision version, that is, \emph{finding} a clique of size $k$ in a graph with $n$ vertices, cannot be done in time $f(k)\cdot n^{o(k)}$, unless the Exponential Time Hypothesis fails~\cite{Chenetal05,Chenetal06}.

The first non-trivial hardness result of $\#\indsubsprob(\Phi)$ was given by Jerrum and Meeks for~$\Phi$ the property of being connected~\cite{JerrumM15}. Note that, in this case, the decision version of the problem can be solved efficiently as, on input $G$ and $k$, one only has to decide whether there exists a connected component of $G$ of size at least $k$. This result initiated a line of research in which Jerrum and Meeks proved fixed-parameter tractability of $\#\indsubsprob(\Phi)$ to be unlikely for the property of having an even (or odd) number of edges~\cite{JerrumM17}, for properties that induce low edge densities~\cite{JerrumM15b} and for properties that are closed under the addition of edges and whose (edge-)minimal elements have large treewidth~\cite{Meeks16}.
More precisely, all of those results established hardness for the parameterized complexity class $\#\W{1}$, which can be seen as the parameterized counting equivalent of $\mathsf{NP}$.
In a recent breakthrough result~\cite{CurticapeanDM17}, Curticapean, Dell and Marx have shown, that for every graph property $\Phi$, the problem $\#\indsubsprob(\Phi)$ is either fixed-parameter tractable or hard for $\#\W{1}$, that is, there are no cases of intermediate difficulty.
On the downside, they did not provide an explicit criterion for $\#\W{1}$-hardness that allows to pin down the complexity of $\#\indsubsprob(\Phi)$, given a concrete property~$\Phi$.  \linebreak
\noindent However,~combining the framework of~\cite{CurticapeanDM17} with tools from the ``topological approach to evasiveness'' by Kahn, Saks and Sturtevant~\cite{KahnSS84}, two of the authors of the current paper established $\#\W{1}$-hardness for a wide range of properties, including, for example, all non-trivial properties that are closed under the removal of edges and false on odd cycles~\cite{RothS18}.
Taken together, the above results suggest the following conjecture.
\begin{conjecture}\label{con:main_conjecture}
Let $\Phi$ be a computable graph property satisfying that there are infinitely many positive integers $k$ such that $\Phi$ is neither true nor false on all graphs with $k$ vertices. Then $\#\indsubsprob(\Phi)$ is $\#\W{1}$-hard.
\end{conjecture}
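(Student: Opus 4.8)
The plan is to follow the ``algebraic approach'' announced in the title: express $\#\indsubsprob(\Phi)$ in the homomorphism basis, reduce $\#\W{1}$-hardness to the non-vanishing of a single coefficient indexed by a graph of large treewidth, and compute that coefficient via a fixed-point argument in the spirit of~\cite{KahnSS84}. Concretely, by the graph-motif-parameter machinery underlying the dichotomy of Curticapean, Dell and Marx~\cite{CurticapeanDM17}, for every fixed $k$ the map $G\mapsto\#\indsubs{\Phi_k}{G}$ — where $\Phi_k$ denotes $\Phi$ restricted to graphs on $k$ vertices — is a finite linear combination
\[
G\ \longmapsto\ \sum_{H}\ a_{\Phi}(k,H)\cdot\#\homs{H}{G}
\]
of homomorphism-count functions, with universal rational coefficients $a_{\Phi}(k,H)$ depending only on $\Phi$, $k$, and the isomorphism type of $H$. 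Complexity monotonicity of such combinations, together with the intractability of counting homomorphisms from graph classes of unbounded treewidth, reduces \cref{con:main_conjecture} to the following: for infinitely many $k$ from the index set in the statement, there is a graph $H_k$ with $a_{\Phi}(k,H_k)\neq 0$ and $\tw(H_k)\to\infty$. The same witnesses additionally rule out running time $f(k)\cdot n^{o(k)}$ under ETH, and, whenever the coefficients can be certified nonzero modulo a fixed prime $p$, they yield hardness of the problem counted modulo~$p$ and on bipartite inputs (choosing each $H_k$ bipartite).

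Next I would make the coefficients explicit. Unwinding the passage from induced-subgraph counts through subgraph counts to homomorphism counts, $a_{\Phi}(k,H)$ equals, up to a nonzero factor depending only on $|\auts{H}|$ and $k$, a signed sum $\sum_{A}(-1)^{|A|}\,\mathbf{1}[(V(H),A)\models\Phi]$ over the edge subsets $A$ in a fixed interval of the edge lattice of $H$ on $k$ vertices; for monotone $\Phi$ this is, up to sign, the reduced Euler characteristic of the complex of edge sets violating $\Phi$, the object studied in~\cite{KahnSS84,RothS18}. The guiding idea is to pick $H_k$ in a highly symmetric family so that $\auts{H_k}$ acts with very few orbits on this interval — the natural candidates are edge-transitive graphs (and complements thereof) with a prime-power number $p^{a}$ of edges, such as complete bipartite graphs $K_{a,a}$ or suitable circulants, whose treewidth still tends to infinity.

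The engine is an orbit-counting, Smith-theory-type observation, which is the quantitative core of the topological approach to evasiveness used in~\cite{RothS18}: if $\Gamma\leq\auts{H_k}$ has prime-power order $p^{b}$, then in the signed sum above the terms fall into $\Gamma$-orbits of $p$-power size, so modulo $p$ only the $\Gamma$-invariant edge sets $A$ survive; and if $\Gamma$ is moreover transitive on the relevant edge set, these are just the two extreme sets. Hence, modulo $p$, the coefficient collapses to $\mathbf{1}[(V(H_k),\emptyset)\models\Phi]\ \pm\ \mathbf{1}[(V(H_k),E(H_k))\models\Phi]$, i.e.\ to the value of $\Phi$ on the edgeless graph and on $H_k$ (or, using a complement pattern, on $H_k$ and on $K_k$). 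Because $\Phi$ is non-trivial on edge-transitive graphs with a prime-power number of edges, with $p$ and $H_k$ chosen appropriately these two values differ, so $a_{\Phi}(k,H_k)\equiv\pm1\not\equiv 0\pmod p$; in particular $a_{\Phi}(k,H_k)\neq 0$. Carrying this out for infinitely many admissible $k$ with $\tw(H_k)\to\infty$ establishes \cref{con:main_conjecture} for every such $\Phi$, together with the mod-$p$ and bipartite refinements.

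Finally, the main obstacle: the argument above proves \cref{con:main_conjecture} only for properties that are non-trivial on edge-transitive graphs with a prime-power number of edges. For an arbitrary computable $\Phi$ that is merely non-trivial on infinitely many vertex counts, there is no reason a high-treewidth edge-transitive witness with a transitive prime-power symmetry group exists: $\Phi$ may agree on all such symmetric configurations while varying only on less symmetric ones, so that every fixed-point-free prime-power action forces cancellation rather than non-vanishing, and the Euler-characteristic certificate says nothing. Closing this gap seems to require either enlarging the toolkit of admissible group actions beyond transitive prime-power ones — for instance Oliver-type actions on acyclic complexes, whose use here is delicate since our complexes need not be acyclic — or combining information across several values of $k$; this is precisely what remains open, and what separates the results of this paper from the conjecture in full.
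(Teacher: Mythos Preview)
The statement you are asked to prove is a \emph{conjecture}; the paper explicitly says ``a proof of this conjecture seems to be a long way off'' and does not attempt one. Your proposal is therefore not a proof of the target statement, and you correctly say so in your final paragraph: the fixed-point argument yields nonvanishing of the key coefficient only when $\Phi$ distinguishes the empty edge-subgraph from the full graph on some $p$-edge-transitive $H$, which is exactly the hypothesis of Theorem~\ref{thm:intro_new}, not of \cref{con:main_conjecture}. So what you have written is a reasonable outline of the paper's partial results (Theorems~\ref{thm:intro_new} and~\ref{thm:intro_main_bip}), together with an honest assessment of the remaining gap, rather than a proof of the conjecture.

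On the technical side, your sketch diverges from the paper in one place worth flagging. You work in the uncolored homomorphism basis and assert that $a_\Phi(k,H)$ equals, ``up to a nonzero factor depending only on $|\auts{H}|$ and $k$'', the alternating sum over edge subsets of $H$. In the uncolored basis the passage from strong embeddings to homomorphisms goes through the spasm (quotients of $H$), not merely through edge subsets, so this identification is not immediate for $H\neq K_k$; moreover the $|\auts{H}|$ factor is fatal for the modular-counting extension you mention. The paper avoids both issues by inserting a \emph{color-prescribed} layer (Lemmas~\ref{lem:cpindprops_to_cphoms} and~\ref{lem:color_prescribed_monotonicity}): in the $H$-colored basis the coefficient of $\#\cphoms{H}{\star}$ is \emph{exactly} $(-1)^{\#E(H)}\hat{\chi}(\Phi,H)$, with no automorphism factor and no quotient terms, and complexity monotonicity for that basis is a triangular system over $\Z$ (hence also over $\Z_p$). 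This is the ``technical reason'' alluded to in the introduction, and it is what makes the bipartite-input and mod-$p$ refinements go through cleanly.
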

Unfortunately, a proof of this conjecture seems to be a long way off. In this work however, building up on~\cite{CurticapeanDM17,RothS18}, we introduce an algebraic approach that allows us to resolve the above conjecture in case of \emph{all} non-trivial monotone properties on bipartite graphs. In particular, we obtain a matching lower bound under the Exponential Time Hypothesis.

\paragraph*{Results and techniques}
We call a graph property \emph{monotone} if it closed under the removal of vertices and edges and \emph{edge-monotone} if it is closed under the removal of edges only. Furthermore, we write $\mathsf{IS}_k$ for the graph consisting of $k$ isolated vertices and $K_{t,t}$ for the complete bipartite graph with $t$ vertices on each side. Our main theorems read as follows.
\begin{theorem}\label{thm:intro_new}
	Let $\Phi$ be a computable graph property and let $\mathcal{K}$ be the set of all prime powers $t$ such that $\Phi(\mathsf{IS}_{2t}) \neq \Phi(K_{t,t})$. If $\mathcal{K}$ is infinite then $\#\indsubsprob(\Phi)$ is $\#\W{1}$ hard. If additionally $\mathcal{K}$ is dense then it cannot be solved in time $f(k)\cdot n^{o(k)}$ for any computable function $f$ unless ETH fails. This holds true even if the input graphs to $\#\indsubsprob(\Phi)$ are restricted to be bipartite.
\end{theorem}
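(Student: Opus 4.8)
The plan is to write $\#\indsubsprob(\Phi)$ as a finite linear combination of homomorphism counts, isolate the term indexed by the biclique $K_{t,t}$ whenever $t\in\mathcal K$, and reduce the counting of $t$-cliques to the computation of that term. It is convenient to pass first to the colourful variant $\#\cpindsubsprob(\Phi)$, where the host is a bipartite graph whose sides carry colour sets $\{1,\dots,t\}$ and $\{1',\dots,t'\}$ and one counts the colourful $2t$-vertex induced subgraphs satisfying $\Phi$; by the (bipartiteness-preserving form of the) standard reduction between the colourful and uncoloured variants~\cite{CurticapeanDM17,RothS18} it suffices to prove the claimed lower bounds for $\#\cpindsubsprob(\Phi)$ on bipartite hosts. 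Inclusion--exclusion (Möbius inversion over the subgraph lattice) yields
\[
  \#\cpindsubsprob(\Phi)(G)=\sum_{H}c_\Phi(H)\cdot\cphoms{H}{G},
\]
where $H$ ranges over the colourful bipartite patterns on the $2t$ colours (no quotients arise, since each colour is used exactly once) and $c_\Phi(H)=\sum_{H_0\subseteq H}(-1)^{|E(H)|-|E(H_0)|}\Phi(H_0)$. By the colourful form of complexity monotonicity~\cite{CurticapeanDM17,RothS18} --- obtained by evaluating the left-hand side on colourful tensor products $G\timesc F_i$, which stay bipartite, and interpolating --- computing $\#\cpindsubsprob(\Phi)$ is at least as hard as computing $\cphoms{H}{-}$ on bipartite hosts for every fixed $H$ with $c_\Phi(H)\neq 0$.

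The heart of the argument is to show $c_\Phi(K_{t,t})\neq 0$ --- in fact $c_\Phi(K_{t,t})\not\equiv 0\pmod p$ --- whenever $t=p^a$ is a prime power with $\Phi(\mathsf{IS}_{2t})\neq\Phi(K_{t,t})$. Identify the $t^2$ edge slots of $K_{t,t}$ with $\Gamma\times\Gamma$ for $\Gamma=(\Z/p)^a$ and let $\Gamma\times\Gamma$ act on the bipartite subgraphs $H_0\subseteq K_{t,t}$ by translating edges. This action preserves $|E(H_0)|$ and the isomorphism type of $H_0$, so each summand $(-1)^{|E(H_0)|}\Phi(H_0)$ of $\pm c_\Phi(K_{t,t})$ is constant on orbits; since $|\Gamma\times\Gamma|$ is a power of $p$, every non-singleton orbit has size divisible by $p$, and since $\Gamma\times\Gamma$ acts transitively on $\Gamma\times\Gamma$ the only fixed subgraphs are $\mathsf{IS}_{2t}$ and $K_{t,t}$. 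Hence $c_\Phi(K_{t,t})\equiv\varepsilon_0\,\Phi(\mathsf{IS}_{2t})+\varepsilon_1\,\Phi(K_{t,t})\pmod p$ for signs $\varepsilon_0,\varepsilon_1\in\{\pm 1\}$, and a short check of the parity of $t^2$ (which is odd iff $p$ is odd) shows that this is non-zero modulo $p$ exactly when $\Phi(\mathsf{IS}_{2t})\neq\Phi(K_{t,t})$, i.e.\ exactly when $t\in\mathcal K$. This is the only point where primality enters, and it replaces the topological evasiveness input of~\cite{KahnSS84,RothS18} by a purely group-theoretic one.

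It remains to show that $\cphoms{K_{t,t}}{-}$ is hard over bipartite hosts. A colour-preserving homomorphism from the colourful $K_{t,t}$ is injective, so this is the problem of counting colourful copies of $K_{t,t}$. Given a graph $G_0$ on $n$ vertices, build a bipartite coloured host with left colour classes $L_1,\dots,L_t$ and right colour classes $R_1,\dots,R_t$, each a copy of $V(G_0)$, joining the $u$-copy in $L_i$ to the $v$-copy in $R_j$ precisely when either $i=j$ and $u=v$, or $i\neq j$ and $uv\in E(G_0)$; the colourful copies of $K_{t,t}$ then correspond to ordered $t$-cliques of $G_0$, so their number is $t!\cdot\#\{t\text{-cliques of }G_0\}$. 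Hence $\cphoms{K_{t,t}}{-}$ over bipartite hosts is $\#\W{1}$-hard and, by~\cite{Chenetal05,Chenetal06}, admits no $f(t)\cdot n^{o(t)}$ algorithm unless ETH fails. Combining this with the previous steps gives, for each $t\in\mathcal K$, a chain of fpt Turing reductions $\#\{t\text{-cliques}\}\fptred\cphoms{K_{t,t}}{-}\fptred\#\cpindsubsprob(\Phi)\fptred\#\indsubsprob(\Phi)$, all carried out on bipartite hosts and with parameter $\Theta(t)$. If $\mathcal K$ is infinite this is an fpt Turing reduction from $\#\clique$, so $\#\indsubsprob(\Phi)$ is $\#\W{1}$-hard even on bipartite inputs; if $\mathcal K$ is dense, then given any $t$ one first picks $t'\in\mathcal K$ with $t'=\Theta(t)$ and reduces counting $t$-cliques to counting $t'$-cliques by the usual padding (join with $K_{t'-t}$ and interpolate over the number of padded vertices), so that an algorithm for $\#\indsubsprob(\Phi)$ on bipartite inputs running in time $f(k)\cdot n^{o(k)}$, with $k=2t'$, would count $t$-cliques in time $g(t)\cdot n^{o(t)}$ for a computable $g$, contradicting ETH.

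I expect the main obstacle to be the non-vanishing statement of the second paragraph: since $\Phi$ need not be monotone there is no simplicial complex underlying $c_\Phi(K_{t,t})$ and thus nothing topological to invoke, and the only handle I see on the raw alternating sum is that for prime-power $t$ an elementary abelian $p$-group acts transitively on the $t^2$ edge slots --- this is the ``algebraic'' replacement for the evasiveness machinery. A further, more bureaucratic difficulty is to verify that every reduction above --- complexity monotonicity via colourful tensor products, the colourful-to-uncoloured passage, and the clique-to-biclique gadget --- can be performed without leaving the class of bipartite graphs, which is exactly what the ``even if the input graphs are restricted to be bipartite'' clause requires.
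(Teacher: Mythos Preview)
Your proposal is correct and follows essentially the same architecture as the paper: express $\#\cpindsubs{\Phi}{G}$ as a linear combination of colour-prescribed homomorphism counts, prove the coefficient at $K_{t,t}$ is non-zero via a $p$-group orbit argument, invoke complexity monotonicity, and feed in a clique-to-biclique gadget; the colourful-to-uncoloured step and the bipartiteness bookkeeping are also the same. Two minor differences are worth recording. First, your non-vanishing argument exhibits the explicit elementary abelian $p$-group $(\Z/p)^a\times(\Z/p)^a$ acting transitively on the $t^2$ edges of $K_{t,t}$, whereas the paper proves a more general statement (its Lemma~\ref{lem:alt_enum}) for arbitrary $p$-edge-transitive graphs and needs Sylow's theorem to produce a transitive $p$-subgroup; your shortcut suffices here because the theorem only uses $K_{t,t}$, while the paper's version is what underlies the structural classification of $p$-edge-transitive graphs in Section~\ref{sec:alt_enum}. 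Second, your clique-to-biclique gadget is not parsimonious (it introduces a factor $t!$), whereas the paper's Lemma~\ref{lem:parsi_biclique} enforces a strict ordering on both sides to get a bijection; this does not matter for the present theorem but is exactly what the paper needs later for the modular-counting extension.
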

In the previous theorem, a set $\mathcal{K}$ is \emph{dense} if there exists a constant $c$ such that for every $m\in \mathbb{N}$, there exists a $k \in \mathcal{K}$ such that $m\leq k \leq cm$. While the hypotheses of Theorem~\ref{thm:intro_new} sound technical, the theorem applies in many situations. In particular, it is applicable to properties that are neither (edge-) monotone nor the complement thereof: Let $\Phi$ be the property of being Eulerian. The graph $K_{t,t}$ contains an Eulerian cycle if $t=2^s$ for $s\geq 1$. Hence we can apply Theorem~\ref{thm:intro_new} with $\mathcal{K}=\{2^s~|~s \geq 1\} $, which is infinite and dense. 
\begin{corollary}
	Let $\Phi$ be the property of being Eulerian. Then $\#\indsubsprob(\Phi)$ is $\#\W{1}$-hard and cannot be solved in time $f(k)\cdot n^{o(k)}$ for any computable function $f$ unless the ETH fails. This holds true even if the input graphs to $\#\indsubsprob(\Phi)$ are restricted to be bipartite.
\end{corollary}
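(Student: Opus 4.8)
The plan is to derive the corollary directly from Theorem~\ref{thm:intro_new} by exhibiting an infinite and dense set $\mathcal{K}$ of prime powers $t$ with $\Phi(\mathsf{IS}_{2t}) \neq \Phi(K_{t,t})$, where $\Phi$ is the property of being Eulerian. This property is plainly computable, so the only work is to analyze $\Phi$ on the two families of graphs appearing in the theorem.

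First I would evaluate $\Phi(K_{t,t})$: the graph $K_{t,t}$ is connected and every one of its $2t$ vertices has degree exactly $t$, so $K_{t,t}$ admits an Eulerian circuit if and only if $t$ is even. Next I would evaluate $\Phi(\mathsf{IS}_{2t})$: for every $t \geq 1$ the graph $\mathsf{IS}_{2t}$ has at least two connected components and hence is not Eulerian. Consequently $\Phi(\mathsf{IS}_{2t}) \neq \Phi(K_{t,t})$ whenever $t$ is even, and in particular for every $t = 2^s$ with $s \geq 1$.

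Thus I may take $\mathcal{K} = \{2^s \mid s \geq 1\}$, a set of prime powers. It is infinite, and it is dense with witnessing constant $c = 2$, since for every $m \in \mathbb{N}$ the least power of two that is at least $m$ lies in $[m, 2m]$. Applying Theorem~\ref{thm:intro_new} with this $\mathcal{K}$ then yields both the $\#\W{1}$-hardness of $\#\indsubsprob(\Phi)$ and, because $\mathcal{K}$ is dense, the conditional lower bound ruling out running time $f(k)\cdot n^{o(k)}$ under ETH, with both statements holding already for bipartite input graphs.

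The argument is essentially immediate once the behaviour of $\Phi$ on $K_{t,t}$ and $\mathsf{IS}_{2t}$ is pinned down, so there is no real obstacle; the only point that needs a little care is the convention for ``Eulerian'' on disconnected graphs. If one instead uses the convention that a graph is Eulerian whenever all of its degrees are even (dropping connectivity), then $\Phi(\mathsf{IS}_{2t})$ becomes true and one has $\Phi(\mathsf{IS}_{2t}) \neq \Phi(K_{t,t})$ exactly for odd $t$; the set of odd prime powers is again infinite and, by Bertrand's postulate, dense, so the corollary follows in this case as well.
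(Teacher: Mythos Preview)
Your proposal is correct and follows exactly the paper's approach: the paper also observes that $K_{t,t}$ is Eulerian for $t=2^s$ with $s\geq 1$, takes $\mathcal{K}=\{2^s \mid s\geq 1\}$, notes it is infinite and dense, and invokes Theorem~\ref{thm:intro_new}. Your write-up is simply more explicit (checking $\mathsf{IS}_{2t}$, giving the density constant, and handling the alternative ``Eulerian'' convention), but the argument is the same.
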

In case $\Phi$ is edge-monotone, the condition $\Phi(\mathsf{IS}_{2t}) \neq \Phi(K_{t,t})$ is equivalent to non-triviality and if $\Phi$ is monotone, we obtain the following, more concise statement of the hardness result.
\begin{theorem}\label{thm:intro_main_bip}
Let $\Phi$ be a computable monotone graph property such that $\Phi$ and $\neg \Phi$ hold on infinitely many bipartite graphs. Then $\#\indsubsprob(\Phi)$ is $\#\W{1}$-hard and cannot be solved in time $f(k)\cdot n^{o(k)}$ for any computable function $f$ unless the Exponential Time Hypothesis fails. This holds true even if the input graphs to $\#\indsubsprob(\Phi)$ are restricted to be bipartite.
\end{theorem}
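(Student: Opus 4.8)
The plan is to derive Theorem~\ref{thm:intro_main_bip} from Theorem~\ref{thm:intro_new}. Given a computable monotone property $\Phi$ for which both $\Phi$ and $\neg\Phi$ hold on infinitely many bipartite graphs, I would argue that the set $\mathcal{K}$ of prime powers $t$ with $\Phi(\mathsf{IS}_{2t}) \neq \Phi(K_{t,t})$ contains \emph{every} prime power above some threshold, and is therefore both infinite and dense in the sense required by Theorem~\ref{thm:intro_new}.

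First I would extract two facts from monotonicity. Since there are only finitely many graphs on at most a given number of vertices, the hypothesis that $\Phi$ holds on infinitely many bipartite graphs forces it to hold on bipartite graphs of unbounded order; and as $\mathsf{IS}_m$ arises from any graph of order at least $m$ by first deleting vertices and then deleting all remaining edges, monotonicity gives that $\Phi(\mathsf{IS}_m)$ is true for every $m \in \N$, so in particular $\Phi(\mathsf{IS}_{2t})$ is true for all $t$. Conversely, fix one bipartite graph $H$ with $\neg\Phi(H)$ and put $n_0 := |V(H)|$. For every $m \geq n_0$ we have $H \subseteq K_{n_0,n_0} \subseteq K_{m,m}$; if $\Phi(K_{m,m})$ were true, then closure under taking subgraphs would make $\Phi(H)$ true as well, a contradiction, so $\Phi(K_{m,m})$ is false for every $m \geq n_0$.

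Putting the two facts together, every prime power $t \geq n_0$ satisfies $\Phi(\mathsf{IS}_{2t}) \neq \Phi(K_{t,t})$ and hence belongs to $\mathcal{K}$. It then remains to observe that the set of primes — and a fortiori the set of prime powers at least $n_0$ — is dense: by Bertrand's postulate, for every $m$ there is a prime in $[\max(m, n_0),\, 2\max(m, n_0)] \subseteq [m,\, 2 n_0 m]$, so the constant $c = 2 n_0$ witnesses density. Applying Theorem~\ref{thm:intro_new} to this infinite and dense $\mathcal{K}$ yields both $\#\W{1}$-hardness and the ETH lower bound ruling out running time $f(k) \cdot n^{o(k)}$, and since Theorem~\ref{thm:intro_new} already allows the input to be restricted to bipartite graphs, the same holds for $\#\indsubsprob(\Phi)$ on bipartite inputs.

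I do not expect a genuine obstacle here: the argument is essentially a bookkeeping reduction, and all the technical content sits in Theorem~\ref{thm:intro_new}. The only points that require (minor) care are the implication ``infinitely many bipartite graphs $\Rightarrow$ unbounded order'', which is what lets us reach every $\mathsf{IS}_m$ by vertex and edge deletions, and the check that the additive shift by $n_0$ does not spoil density — both handled by the elementary interval estimate above.
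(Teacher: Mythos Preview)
Your proposal is correct and follows essentially the same approach as the paper's own proof: both deduce that $\Phi(\mathsf{IS}_k)=1$ for all $k$ and $\Phi(K_{t,t})=0$ for all sufficiently large $t$, then invoke Theorem~\ref{thm:intro_new}. You supply more detail than the paper does---in particular, you spell out the ``unbounded order'' step and explicitly use Bertrand's postulate to verify density, whereas the paper simply asserts that a cofinite set of prime powers is dense---but the structure is the same.
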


Let us illustrate further consequences of the previous theorems with respect to (edge-) monotone properties. First of all, most of the prior hardness results (\cite{JerrumM15,JerrumM15b,Meeks16,JerrumM17,RothS18}) are shown to hold in the restricted case of bipartite graphs. We provide three examples:
\begin{corollary}
	The problem $\#\indsubsprob(\Phi)$, restricted to \emph{bipartite input graphs}, is $\#\W{1}$-hard and cannot be solved in time $f(k)\cdot |V(G)|^{o(k)}$ for any computable function $f$ unless ETH fails, if $\Phi$ is one of the properties of being disconnected, planar or non-hamiltonian.
\end{corollary}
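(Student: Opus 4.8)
The plan is to realise each of the three claims as a special case of Theorem~\ref{thm:intro_new} or Theorem~\ref{thm:intro_main_bip}. All three properties are plainly computable, so the only work is to check the relevant structural hypotheses: (edge\nobreakdash-)monotonicity and the behaviour of $\Phi$ on the graphs $\mathsf{IS}_{2t}$ and $K_{t,t}$.

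First I would handle planarity via Theorem~\ref{thm:intro_main_bip}. Since the class of planar graphs is closed under taking subgraphs, planarity is monotone. It holds on the infinitely many bipartite graphs $\mathsf{IS}_n$, $n\in\N$, and it fails on the infinitely many bipartite graphs $K_{3,3}\sqcup\mathsf{IS}_m$, $m\in\N$, each of which contains $K_{3,3}$ and is therefore non-planar. Hence both $\Phi$ and $\neg\Phi$ hold on infinitely many bipartite graphs, and Theorem~\ref{thm:intro_main_bip} delivers the $\#\W{1}$-hardness together with the conditional $n^{o(k)}$ lower bound, even for bipartite inputs.

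For disconnectedness and non-hamiltonicity I would instead invoke Theorem~\ref{thm:intro_new}, because neither property is monotone: deleting the isolated vertex of $K_2\sqcup K_1$ yields the connected graph $K_2$, and deleting the leaf of the graph obtained from $C_5$ by attaching a pendant vertex yields the hamiltonian graph $C_5$. However, both properties are edge-monotone --- removing an edge cannot merge two connected components, and a hamiltonian cycle of $G-e$ is already a hamiltonian cycle of $G$ --- so it suffices to show that the set $\mathcal{K}$ of prime powers $t$ with $\Phi(\mathsf{IS}_{2t})\neq\Phi(K_{t,t})$ is infinite and dense. For ``disconnected'' one has $\Phi(\mathsf{IS}_{2t})=\mathrm{true}\neq\mathrm{false}=\Phi(K_{t,t})$ for every $t\ge 1$; for ``non-hamiltonian'' the same holds for every $t\ge 2$, because $K_{t,t}$ is hamiltonian precisely when $t\ge 2$. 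In either case $\mathcal{K}$ contains all prime powers at least $2$, which is infinite and, by Bertrand's postulate, dense; Theorem~\ref{thm:intro_new} then yields the stated conclusion, again under the restriction to bipartite input graphs.

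I do not anticipate a genuine obstacle: this is essentially a routine verification. The two points that need a little care are (i) remembering that ``disconnected'' and ``non-hamiltonian'' are only edge-monotone, not monotone, so they must be routed through Theorem~\ref{thm:intro_new} rather than Theorem~\ref{thm:intro_main_bip}, and (ii) correctly identifying the smallest $t$ for which $K_{t,t}$ already satisfies the property in question, so that the density of $\mathcal{K}$ can be asserted with confidence.
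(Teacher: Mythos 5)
Your verification is correct and matches the paper's intended route: the corollary is stated without proof immediately after Theorems~\ref{thm:intro_new} and~\ref{thm:intro_main_bip}, and the evident reasoning is exactly what you give — apply Theorem~\ref{thm:intro_main_bip} to planarity (genuinely monotone, non-trivial on bipartite graphs) and Theorem~\ref{thm:intro_new} to disconnectedness and non-hamiltonicity (only edge-monotone, with $\Phi(\mathsf{IS}_{2t})\neq\Phi(K_{t,t})$ for all prime powers $t\ge 2$, a set that is infinite and dense by Bertrand's postulate). Your care in noting that deleting a vertex can destroy disconnectedness or non-hamiltonicity, so these two must be routed through Theorem~\ref{thm:intro_new} rather than Theorem~\ref{thm:intro_main_bip}, is precisely the point the paper's phrase ``(edge-)~monotone properties'' is gesturing at.
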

One example of a monotone property $\Phi$ for which the complexity of $\#\indsubsprob(\Phi)$ was unknown, even for general graphs, is given by the following corollary of Theorem~\ref{thm:intro_main_bip}.
\begin{corollary}
	Let $F$ be a fixed bipartite graph with at least one edge and define $\Phi(G)=1$ if $G$ does not contain a subgraph isomorphic to $F$. Then $\#\indsubsprob(\Phi)$ is $\#\W{1}$-hard and cannot be solved in time $f(k)\cdot |V(G)|^{o(k)}$ for any computable function $f$ unless ETH fails. This holds true even if the input graphs of $\#\indsubsprob(\Phi)$ are restricted to be bipartite.
\end{corollary}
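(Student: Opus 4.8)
The plan is to obtain this corollary as a direct application of Theorem~\ref{thm:intro_main_bip} to the property $\Phi$ of not containing a subgraph isomorphic to the fixed bipartite graph $F$ (which has at least one edge). Concretely, I would verify the three hypotheses of that theorem: that $\Phi$ is computable and monotone, and that both $\Phi$ and $\neg\Phi$ hold on infinitely many bipartite graphs.

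First, computability is routine: given $G$, one enumerates the at most $\binom{|V(G)|}{|V(F)|}$ candidate vertex subsets and checks whether any of them carries a copy of $F$. For monotonicity, suppose $G'$ is obtained from $G$ by deleting vertices and/or edges; then $G'$ is a subgraph of $G$, so every subgraph of $G'$ is also a subgraph of $G$. Hence if $G$ contains no copy of $F$, neither does $G'$, that is, $\Phi(G)=1$ implies $\Phi(G')=1$. This is exactly closure under vertex and edge deletion.

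Next, I would exhibit the required infinite families of bipartite witnesses. Since $F$ has an edge and the empty graph $\mathsf{IS}_n$ has none, $\mathsf{IS}_n$ contains no copy of $F$; as $\mathsf{IS}_n$ is bipartite for every $n$, the property $\Phi$ holds on infinitely many bipartite graphs. Conversely, for each $n\geq 0$ let $G_n$ be the graph obtained from $F$ by adding $n$ isolated vertices; each $G_n$ is bipartite (a disjoint union of bipartite graphs is bipartite), the $G_n$ have pairwise distinct orders, and each $G_n$ contains $F$ as a subgraph, so $\Phi(G_n)=0$. Thus $\neg\Phi$ also holds on infinitely many bipartite graphs, and Theorem~\ref{thm:intro_main_bip} applies verbatim, yielding $\#\W{1}$-hardness, the $f(k)\cdot|V(G)|^{o(k)}$ lower bound under ETH, and the restriction to bipartite inputs.

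I do not expect a real obstacle here; the whole proof is the verification that $\Phi$ fits the template of Theorem~\ref{thm:intro_main_bip}, and the only step needing a moment's thought is monotonicity, i.e.\ confirming closure under vertex deletion in addition to the (obvious) closure under edge deletion. As an alternative route one could instead apply Theorem~\ref{thm:intro_new} with $\mathcal{K}$ the set of all prime powers $t\geq|V(F)|$: for such $t$ the graph $K_{t,t}$ contains every bipartite graph on at most $2t$ vertices, hence $\Phi(K_{t,t})=0\neq 1=\Phi(\mathsf{IS}_{2t})$, and the set of prime powers above a threshold is both infinite and dense.
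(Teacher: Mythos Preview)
Your proposal is correct and is precisely the intended derivation: the corollary is stated in the paper without proof as an immediate consequence of Theorem~\ref{thm:intro_main_bip}, and your verification of computability, monotonicity, and the two infinite bipartite witness families is exactly what is needed. The alternative route via Theorem~\ref{thm:intro_new} that you sketch is also valid and essentially just unwinds the proof of Theorem~\ref{thm:intro_main_bip} for this particular~$\Phi$.
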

As the number of induced subgraphs of size $k$ that satisfy $\Phi$ equals $\binom{|V(G)|}{k}$ minus the number of induced subgraphs of size $k$ that satisfy $\neg \Phi$, \emph{all of the previous result remain true for the complementary properties} $\neg \Phi$.

In proving the previous theorems we build up on the approach in~\cite{CurticapeanDM17,RothS18}, where it was shown that, given a graph property~$\Phi$ and a positive integer~$k$, the number of induced subgraphs of size~$k$ in a graph~$G$ that satisfy~$\Phi$ can equivalently be expressed as the following sum over all (isomorphism types of) graphs~$H$:
\begin{equation}\label{eq:intro_homs}
\sum_H a_{\Phi}(H) \cdot \#\homs{H}{G} \,,
\end{equation}
where $a_\Phi$ is a function from graphs to integers with finite support and $\#\homs{H}{G}$ is the number of graph homomorphisms from~$H$ to~$G$.
It is known that computing a linear combination of homomorphism numbers, as in the above expression,
is \emph{precisely as hard as} computing its hardest term with a non-zero coefficient (\cite{CurticapeanDM17}, also implicitly proved in~\cite{ChenM16}). We refer to this property as \emph{complexity monotonicity}.
In~\cite{RothS18} two of the authors of the current paper used a topological approach to analyze the coefficient $a_{\Phi}(K_k)$ of the complete graph on $k$ vertices.
If this coefficient is non-zero then complexity monotonicity implies that computing the number of induced subgraphs of size $k$ in a graph $G$ that satisfy $\Phi$ is at least as hard as computing the number $\#\homs{K_k}{G}$.
This, in turn, is equivalent to computing the number of cliques of size $k$ in $G$, the canonical $\#\W{1}$-complete problem~\cite{FlumG04}.
While this approach led to hardness proofs for a wide range of properties $\Phi$, it seems that resolving Conjecture~\ref{con:main_conjecture}, even restricted to monotone properties, requires a significant amount of new ideas.
Without going too much into the details\footnote{Readers familiar with~\cite{RothS18} might recall that fixed points of group actions have been used to derive a simpler formula to compute the number $a_{\Phi}(K_t)$ modulo a prime $p$ for positive powers $t$ of $p$. This formula would simplify greatly if the group had a $p$-power number of elements \emph{and} acted transitively on the edges of $K_t$.
Unfortunately, this can never happen for $t\geq 4$, since the number of edges of $K_t$ is not itself a $p$-power.} of~\cite{RothS18},
% the problem of analyzing $a_{\Phi}(K_k)$ essentially breaks down to the fact that the number of edges of the complete graph on $k\geq 4$ vertices is not a prime power.
our analysis of $a_{\Phi}(K_k)$ is complicated by the fact that the number of edges of the complete graph on $k\geq 4$ vertices is not a prime power.
In this work, we hence focus on the coefficient of $a_{\Phi}(H)$ for graphs $H$ that have a prime power number of edges and for which computing $\#\homs{H}{G}$ is hard.
One example of such graphs is the biclique $K_{t,t}$ for some prime power $t$. Here a biclique $K_{t,t}$, also called a complete bipartite graph, has $t$ vertices on each side and contains every edge from a vertex on the left side to a vertex to the right side. Hence the number of edges is $t^2$ which is a prime power if $t$ is.

In analyzing the coefficient $a_{\Phi}(K_{t,t})$ of the complete bipartite graph, we invoke the results of Rivest and Vuillemin~\cite{rivestvuillemin} who considered transitive boolean functions over a domain of prime power cardinality to resolve the asymptotic version of what is known as \emph{Karp's evasiveness conjecture} (we recommend Miller's survey~\cite{Miller13} for an excellent overview).\pagebreak

\noindent Given a property~$\Phi$ and a graph~$H$, the \emph{alternating enumerator} of $\Phi$ and $H$ is defined to be
\begin{equation*}
\hat{\chi}(\Phi,H) := \sum_{S \subseteq E(H)} \Phi(H[S])\cdot (-1)^{\#S}\,,
\end{equation*}
where $H[S]$ is the graph with vertices $V(H)$ and edges $S$.
Roughly speaking, it will turn out that the value of $a_{\Phi}(H)$ is closely related to $\hat{\chi}(\Phi,H)$.
We furthermore point out that, in case $\Phi$ is closed under the removal of edges, the alternating enumerator
$\hat{\chi}(\Phi,H)$ equals what is called the reduced Euler characteristic of the simplicial complex on $E(H)$ associated to~$\Phi$~\cite{Miller13,RothS18}.
In Section~\ref{sec:alt_enum} we study the alternating enumerator in case of edge-transitive graphs, that is, graphs whose automorphism groups act transitively on the set of edges.
We give a self-contained proof of the following fact, which implicitly follows from~\cite{rivestvuillemin}.
\begin{lemma}\label{lem:alt_enum_intro}
    Let $\Phi$ be a graph property and let $H$ be an edge-transitive graph with $p^k$ edges such that $p$ is a prime and $\Phi(H[\emptyset]) \neq \Phi(H)$. Then it holds that $\hat{\chi}(\Phi,H) = (\pm 1) \mod p \,.$
\end{lemma}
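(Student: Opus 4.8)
The plan is to exploit the edge-transitive action of $\auts{H}$ on $E(H)$ together with the prime-power hypothesis $\#E(H)=p^k$, in the spirit of Rivest–Vuillemin. Write $G := \auts{H}$. The group $G$ acts on the set $E := E(H)$ of size $p^k$, and this action is transitive by assumption. The alternating enumerator $\hat\chi(\Phi,H) = \sum_{S\subseteq E}\Phi(H[S])(-1)^{\#S}$ is a sum over the power set $2^E$. The key observation is that $\Phi(H[S])$ depends only on the isomorphism type of $H[S]$, hence in particular is constant on $G$-orbits of subsets $S\subseteq E$ — if $S' = gS$ for some $g\in G$, then $g$ induces an isomorphism $H[S]\cong H[S']$ (it is an automorphism of $H$ carrying the edge set $S$ to $S'$), and of course $\#S = \#S'$. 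Therefore both $\Phi(H[S])$ and $(-1)^{\#S}$ are $G$-invariant, so the whole summand is constant on each orbit.

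Now I would reduce modulo $p$. Partition $2^E$ into $G$-orbits. By the orbit–stabilizer theorem, the size of the orbit of $S$ divides $\#G$; but more usefully, I want orbits whose size is divisible by $p$, since those contribute $0$ to $\hat\chi(\Phi,H) \bmod p$. The standard trick is to pass to a Sylow $p$-subgroup: let $P \le G$ be a Sylow $p$-subgroup. Since $\#E = p^k$ and $G$ acts transitively on $E$, the orbit–counting argument shows $P$ also acts transitively on $E$ (the $G$-orbit $E$ has size $p^k$ coprime-free, and $[G:P]$ is coprime to $p$, so $P$ must already be transitive on $E$ — if not, $E$ would split into $P$-orbits each of $p$-power size summing to $p^k$, which is fine, so I need the sharper fact). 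Actually the clean statement: let $P$ act on $2^E$; an orbit of $P$ has size a power of $p$, and has size $1$ (i.e. is a fixed point) iff $P$ stabilizes $S$ setwise. So modulo $p$,
\[
\hat\chi(\Phi,H) \equiv \sum_{\substack{S\subseteq E \\ P\cdot S = S}} \Phi(H[S])(-1)^{\#S} \pmod p,
\]
because every non-singleton $P$-orbit has size divisible by $p$ and contributes $0$, while the summand is $P$-invariant hence well-defined on orbits.

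The next step identifies the $P$-fixed subsets $S$ of $E$. A subset $S\subseteq E$ is $P$-invariant iff it is a union of $P$-orbits on $E$. Here I use that $P$ is a $p$-group acting transitively on the set $E$ of size $p^k$: every orbit of a $p$-group has $p$-power size, and since $E$ itself is a single orbit under $P$ (this is where transitivity of $P$ on $E$ is needed — which follows because $\#E = p^k$ is a $p$-power and the transitive $G$-set $E$, restricted to the Sylow subgroup $P$ of index prime to $p$, must remain a single orbit, else $\#E$ would be a sum of $\ge 2$ orbit sizes each $<\#E$, contradicting nothing directly — so instead I invoke: a Sylow $p$-subgroup of a group acting transitively on a $p$-power-size set acts transitively, a standard lemma I would prove in one line via $\#E = [G:G_e]$ and $[G:P]$ coprime to $p$ forcing $P$ transitive). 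Hence the only $P$-invariant subsets of $E$ are $S = \emptyset$ and $S = E$. So
\[
\hat\chi(\Phi,H) \equiv \Phi(H[\emptyset])\cdot(-1)^0 + \Phi(H[E])\cdot(-1)^{p^k} \pmod p.
\]
Now $H[E] = H$, $H[\emptyset] = \mathsf{IS}_{\#V(H)}$, and by hypothesis $\Phi(H[\emptyset]) \neq \Phi(H)$, so one of the two terms is $0$ and the other is $\pm 1$; thus $\hat\chi(\Phi,H) \equiv \pm 1 \pmod p$, which is the claim.

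**The main obstacle** is the lemma that a Sylow $p$-subgroup of a group acting transitively on a set of $p$-power cardinality acts transitively on that set. This is the technical heart and the only place real group theory enters; I expect to state and prove it as a preliminary claim: if $G$ acts transitively on $E$ with $\#E = p^k$, pick $e\in E$ with stabilizer $G_e$, so $[G:G_e] = p^k$; a Sylow $p$-subgroup $P$ of $G$ satisfies $[G:P]$ coprime to $p$, and counting $P$-orbits on $E = G/G_e$ via the formula $\#E = \sum_i [P : P\cap {}^{g_i}G_e]$ shows each term is a $p$-power, and one checks the orbit through $e$ already has full size $p^k$ because $P \cap G_e$ has index $p^k$ in $P$ (as $[P:P\cap G_e]$ divides $[G:G_e]=p^k$ and equals it by a counting of $p$-parts). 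Everything else — $G$-invariance of the summand, the mod-$p$ vanishing of large orbits, the identification of $P$-invariant subsets with unions of $P$-orbits on $E$ — is routine orbit-counting. I would also remark that the sign $\pm 1$ is exactly $(-1)^{p^k}\cdot\Phi(H) + \Phi(\mathsf{IS})$ made explicit, but since the statement only claims $\pm 1$, no further bookkeeping of the sign is needed.
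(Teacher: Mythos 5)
Your proposal is correct and matches the paper's proof essentially step for step: pass to a Sylow $p$-subgroup, decompose the power set of $E(H)$ into orbits, observe that non-singleton orbits vanish modulo $p$, and use transitivity of the Sylow subgroup on the edges (the paper's Lemma~\ref{lem:ptransitive}, which you also identify as the technical heart and sketch by the same orbit--stabilizer/$p$-part counting) to conclude that only $\emptyset$ and $E(H)$ are fixed. The only cosmetic difference is that the paper collapses $(-1)^{\#E(H)}$ to $-1$ by treating $p=2$ and $p>2$ separately to write the answer as $\Phi(H[\emptyset]) - \Phi(H)$, whereas you leave the sign implicit, which is fine since the claim is only $\pm 1$.
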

Now, intuitively, Lemma~\ref{lem:alt_enum_intro} induces a strategy towards proving hardness of $\#\indsubsprob(\Phi)$:
Assume a family of edge-transitive graphs $\mathcal{H}$ can be found such that $\#E(H)$ is a prime power and $\Phi(H[\emptyset]) \neq \Phi(H)$ for every $H\in \mathcal{H}$.
Then $\#\indsubsprob(\Phi)$ is at least as hard as counting homomorphisms from graphs in $\mathcal{H}$, the latter of which is fully understood~\cite{DalmauJ04}.
This observation gives a strong motivation for the study of edge-transitive graphs with a prime power number of edges. In the second part of Section~\ref{sec:alt_enum}, we fully classify those graphs as subgraphs of bipartite graphs or vertex-transitive subgraphs of wreath graphs; consult Section~\ref{sec:alt_enum} for the formal definitions.
The proof of the following theorem, which might be of independent interest, relies on a non-trivial application of Sylow's theorems.

\begin{theorem} \label{thm:pedgetransgraphs_intro}
    Let $G$ be a connected edge-transitive graph with $p^t$ edges for some prime $p$ and positive integer $t$. Then either $G$ is bipartite or $G$ is vertex-transitive and can be obtained from the wreath graph $W_{p^k}$ for $k \geq 1$ by removing edges (or both).
\end{theorem}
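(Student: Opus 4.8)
The plan is to analyze the automorphism group $\Gamma = \auts{G}$ via Sylow theory. Since $G$ is edge-transitive, $\Gamma$ acts transitively on the edge set $E(G)$, which has size $p^t$. By the orbit-stabilizer theorem, $p^t$ divides $|\Gamma|$, so we may fix a Sylow $p$-subgroup $P \leq \Gamma$ with $p^t \mid |P|$. First I would argue that $P$ still acts transitively on $E(G)$: the number of $P$-orbits on $E(G)$ times the (common, by a counting argument using that $p$-groups have orbits of $p$-power size) orbit length equals $p^t$; since $\Gamma$ is transitive and $[\Gamma : P]$ is coprime to $p$, a standard double-counting shows every $\Gamma$-orbit (here, all of $E(G)$) breaks into $P$-orbits whose sizes are powers of $p$, and one checks these must all have size $p^t$ — equivalently, $P$ acts transitively. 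So without loss of generality we may replace $\Gamma$ by $P$ and assume $\Gamma$ itself is a $p$-group acting edge-transitively on $G$.

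Next I would exploit a basic fact about $p$-groups acting on sets: if a $p$-group acts transitively on a set $X$ with $|X| = p^t$, then for any point $x \in X$ the stabilizer $\Gamma_x$ has index $p^t$, and moreover $\Gamma$ has a subgroup of every intermediate index $p^i$ forming a subnormal chain. Apply this to $X = E(G)$: pick an edge $e = \{u,v\}$. The stabilizer $\Gamma_e$ either fixes both endpoints or swaps them; let $\Gamma_e^+$ be the index-$\leq 2$ subgroup fixing $u$ and $v$ pointwise. The dichotomy of the theorem will come from whether the swap actually occurs somewhere in a controlled way. Concretely: consider the action of $\Gamma$ on the vertex set $V(G)$. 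If this action is transitive, then $G$ is vertex-transitive, and I would then identify $G$ with a connected edge-transitive Cayley-type graph on the $p$-group $\Gamma/\Gamma_u$, and show — using that both $|V(G)|$ divides $|\Gamma|$ and $|E(G)| = p^t$ — that the "generating" structure forces $G$ to embed (after removing edges) into a wreath graph $W_{p^k}$; here $W_{p^k}$ should be the lexicographic-type product whose edge count is exactly a $p$-power, matching the arithmetic. If instead the $\Gamma$-action on $V(G)$ is \emph{not} transitive, then since $G$ is connected and $\Gamma$ acts edge-transitively, $V(G)$ must split into exactly two orbits $A, B$ with every edge running between them (any edge-transitive graph has at most two vertex orbits, and if it has two then no edge lies inside an orbit), so $G$ is bipartite with parts $A$ and $B$.

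The remaining work, and the main obstacle, is the vertex-transitive case: showing that a vertex-transitive edge-transitive graph with $p^t$ edges is obtained from some wreath graph $W_{p^k}$ by deleting edges. I would proceed by letting $n = |V(G)|$ and $d$ the (common) degree, so $nd = 2p^t$; analyzing the 2-adic/$p$-adic structure, $n = p^a$ or $n = 2p^a$ and $d$ a power of $p$ (times possibly $2$), which severely constrains things. Then I would look at the stabilizer $\Gamma_u$ of a vertex and its action on the neighborhood $N(u)$: edge-transitivity forces this to be transitive, so $|N(u)| = d$ divides $|\Gamma_u|$, again a $p$-power. The key structural claim is that the "blocks of imprimitivity" of the $p$-group $\Gamma$ on $V(G)$ give a decomposition realizing $G$ inside a wreath graph: a maximal block system of size $p$ (which exists since $\Gamma$ is a nontrivial $p$-group) partitions $V(G)$, and edge-transitivity forces the induced quotient graph together with the block-internal edges to have exactly the wreath-graph adjacency pattern, up to the missing (deleted) edges. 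Verifying that this really lands inside $W_{p^k}$ for the right $k$ — i.e., that the recursive block structure matches the recursive definition of the wreath graph, and that nothing outside a single wreath graph is needed — is where the Sylow input is "non-trivial," since one must rule out mixed configurations by a careful count of edge orbits against $p^t$. I expect this bookkeeping, rather than any single deep idea, to be the crux, and I would organize it as an induction on $t$ using the size-$p$ block system to pass from $G$ to a smaller edge-transitive graph with $p^{t-\epsilon}$ edges.
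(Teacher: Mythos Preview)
Your reduction to a $p$-Sylow subgroup acting edge-transitively, and the bipartite/vertex-transitive dichotomy via the two-orbit argument, are correct and match the paper's Lemmas~\ref{lem:ptransitive} and~\ref{lem:vtransbipartdichot} essentially verbatim.

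The gap is in the vertex-transitive case. Your plan---analyze degrees, find a size-$p$ block system, pass to a quotient, induct on $t$---is where you lose the thread. It is not clear that the quotient by a block system is again edge-transitive with a prime-power edge count (edges between blocks can collapse in uncontrolled ways), nor that the inductive hypothesis reassembles into an embedding in a single $W_{p^k}$ rather than some iterated structure. You yourself flag this bookkeeping as the crux without saying how to discharge it.

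The paper sidesteps all of this with one clean move you are missing: having established $|V(G)|=p^k$ and that a $p$-group $\Gamma\subseteq\auts{G}\subseteq S_{p^k}$ acts edge-transitively, it applies Sylow's theorem \emph{again, inside $S_{p^k}$}, to enlarge $\Gamma$ to a full $p$-Sylow subgroup $\Gamma'$ of the symmetric group. Since $E(G)$ is a single $\Gamma$-orbit of some edge $e_0$, the $\Gamma'$-orbit of $e_0$ gives a supergraph $G'\supseteq G$ on the same vertex set. The $p$-Sylow of $S_{p^k}$ is explicitly the iterated wreath product $\Gamma(p,k)=(\prod_{i=0}^{p-1}\Gamma(p,k-1))\rtimes\mathbb{Z}/p\mathbb{Z}$, and one then checks directly (after normalizing $e_0$ so its endpoints lie in adjacent blocks, using connectedness of $G$) that the $\Gamma(p,k)$-orbit of any such edge is exactly $E(W_{p^k})$. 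So the ``non-trivial Sylow input'' is not an internal analysis of blocks of $\Gamma$, but the passage to the ambient Sylow subgroup of $S_{p^k}$; this turns the embedding question into a concrete computation in a known group, with no induction needed.
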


With the analysis of $\hat{\chi}$ and edge-transitive graphs completed, we turn to the reduction from counting homomorphisms in Section~\ref{sec:main_reductions}.
More precisely, given a class $\mathcal{H}$ of edge-transitive graphs with a prime power number of edges and a graph property $\Phi$ such that for every $H\in \mathcal{H}$ we have that $\Phi(H[\emptyset]) \neq \Phi(H)$, we construct a parameterized Turing reduction from $\#\homsprob(\mathcal{H})$ to $\#\indsubsprob(\Phi)$.
Here, the problem $\#\homsprob(\mathcal{H})$ is defined as follows: Given as input a graph
$H\in \mathcal{H}$ and a graph $G$, compute the number of homomorphisms from $H$ to $G$.
For technical reasons, we cannot immediately transform the number of induced subgraphs that satisfy~$\Phi$ to a linear combination of homomorphism numbers as in Equation~(\ref{eq:intro_homs}).
We~solve this technical issue by introducing color-prescribed variants of those problems in an intermediate step. In this context we consider $H$-colored graphs.
Recall that a graph $G$ is $H$-colored if it comes with a homomorphism $c$ from $G$ to $H$.
A homomorphism from $H$ to $G$ is then called color-prescribed if it maps every vertex $v$ of $H$ to a vertex $u$ of $G$ satisfying that $c(u)=v$.
We demonstrate that, given an $H$-colored graph $G$ and oracle access to $\#\indsubsprob(\Phi)$, the following linear combination can be computed in time $f(|V(H)|)\cdot |V(G)|^{O(1)}$.
\begin{equation}\label{eq:intro_two}
\sum_{S \subseteq E(H)} \hat{a}_\Phi(S) \cdot \#\cphoms{H[S]}{G}.
\end{equation}
Here $\cphoms{H[S]}{G}$ denotes the set of color-prescribed homomorphisms from $H[S]$ to~$G$ and $\hat{a}_\Phi$ is a function of finite support only depending in $\Phi$.
In particular, $\hat{a}_\Phi(E(H))$ and $\hat{\chi}(\Phi,H)$ are proved to agree up to a factor of $-1$.
Finally, we establish complexity monotonicity for linear combinations of color-prescribed homomorphisms as in Equation~(\ref{eq:intro_two}), which in combination with Lemma~\ref{lem:alt_enum_intro} yields the desired reduction.\pagebreak

\noindent Combining the previous results, we invoke the reduction on graph properties that are non-trivial on bipartite graphs and prove Theorem~\ref{thm:intro_new} and Theorem~\ref{thm:intro_main_bip}, in Section~\ref{sec:apply_bipartite}.
Furthermore, we illustrate in the appendix that our algebraic approach readily extends to modular counting by proving that both, Theorem~\ref{thm:intro_new} and Theorem~\ref{thm:intro_main_bip} remain true in case counting is done modulo a fixed prime. 

\section{Preliminaries}
Given a positive integer $k$, we write $[k]$ for the set $\{1,\dots,k\}$ and given a set $A$ we write $\binom{A}{k}$ for the set of all subsets of size $k$ of $A$. Furthermore, assuming that $A$ is finite, we write $\#A$ or $|A|$ for its cardinality.
%and given a function $f: A \rightarrow B$, we write $\supp(f) = f^{-1}(B \setminus \{0\})$ for the \emph{support} of $f$.
Given a function $g: A\times B \rightarrow C$ and an element $a\in A$, we write $g(a,\star)$ for the function which maps $b \in B$ to~$g(a,b)$.

\subsection{Graph theory}
Graphs in this work are considered simple, undirected and without self-loops. More precisely, a graph $G$ is a pair of a finite set $V(G)$ of vertices and a symmetric and irreflexive relation $E(G)\subseteq V(G)^2$. If a graph $H$ is obtained from $G$ by deleting a set of edges and a set of vertices of $G$, including incident edges, then $H$ is called a \emph{subgraph} of~$G$. Given a subset $\hat{V}$ of $V(G)$ we write $G[\hat{V}]$ for the graph with vertices~$\hat{V}$ and edges $E \cap \hat{V}^2$. The resulting graph is called an \emph{induced subgraph} of~$G$. An \emph{edge-subgraph} of a graph $H$ is a graph obtained from~$H$ by deleting edges. Given a set $S\subseteq E(H)$ we write $H[S]$ for the edge-subgraph $(V(H),S)$ of~$H$.

\paragraph*{Homomorphisms and embeddings} A \emph{homomorphism} from a graph $H$ to a graph $G$ is a mapping $h: V(H)\rightarrow V(G)$ that preserves adjacencies. In other words, for every edge $\{u,v\}\in E(H)$ it holds that $\{h(u),h(v)\}\in E(G)$. We write $\homs{H}{G}$ for the set of all homomorphisms from $H$ to $G$. A homomorphism inducing a bijection of vertices and satisfying $\{u,v\} \in E(H)$ if and only if $\{f(u),f(v)\} \in E(G)$ is called an \emph{isomorphism} and we say that two graphs $H$ and $\hat{H}$ are \emph{isomorphic} if there exists an isomorphism from $H$ to
$\hat{H}$.
We write $\subs{H}{G}$ and $\indsubs{H}{G}$ for the sets of all subgraphs and induced subgraphs of $G$, respectively, that are isomorphic to $H$.

An isomorphism from a graph to itself is called an \emph{automorphism}. The set of automorphisms of a graph, together with the operation of functional composition constitutes a group, called the \emph{automorphism group} of a graph.
Slightly abusing notation, we will write $\auts{H}$ for both the set of automorphisms of a graph $H$ as well as for the automorphism group of $H$.

An \emph{embedding} is an injective homomorphism and we write $\embs{H}{G}$ for the set of embeddings from $H$ to $G$. If an embedding $h$ from $H$ to $G$ additionally satisfies that $\{h(u),h(v)\} \in E(G)$ implies $\{u,v\}\in E(H)$, we call it a \emph{strong embedding}. We write $\strembs{H}{G}$ for the set of strong embeddings from $H$ to $G$. Observe that the images of embeddings and strong embeddings from $H$ to $G$ are precisely the subgraphs and induced subgraphs of $G$ that are isomorphic to $H$.

\paragraph*{Colored variants} Given graphs $G$ and $H$, we say that $G$ is $H$\emph{-colored} if $G$ comes with a homomorphism $c$ from~$G$ to $H$, called an $H$-\emph{coloring}. Note that, in particular, every edge-subgraph of $H$ can be $H$-colored by the identity function on $V(H)$, which is assumed to be the given coloring whenever we consider $H$-colored edge-subgraphs of $H$ in this paper. Given an edge-subgraph~$F$ of~$H$ and a homomorphism $h$ from $F$ to a $H$-colored graph $G$, we say that $h$ is \emph{color-prescribed} if for all $v\in V(F) = V(H)$ it holds that $c(h(v))= v$. We write $\cphoms{F}{G}$ for the set of all color-prescribed homomorphisms from $F$ to $G$. $\cpstrembs{F}{G}$ is defined similarly for color-prescribed strong embeddings.
We point out that a definition of $\cpemb$ is obsolete as every color-prescribed homomorphism is injective by definition and hence an embedding. Furthermore, we write $\cpsubs{F}{G}$ and $\cpindsubs{F}{G}$ for the sets of images of color-prescribed embeddings and strong embeddings from $F$ to $G$, respectively. Elements of $\cpsubs{F}{G}$ and $\cpindsubs{F}{G}$ are referred to as color-prescribed subgraphs and induced subgraphs.\footnote{The observant reader might have noticed that the sets $\cpsubs{F}{G}$ and $\cphoms{F}{G}$ as well as $\cpindsubs{F}{G}$ and $\cpstrembs{F}{G}$ are essentially the same as a color-prescribed homomorphism is uniquely identified by its image. However, we decided to distinguish those notions in order to make the combinatorial arguments in Section~\ref{sec:main_reductions} more accessible.}

\paragraph*{Graph properties and the alternating enumerator}
A \emph{graph property} is a function $\Phi$ from graphs to $\{0,1\}$ such that for any pair of isomorphic graphs $H$ and $\hat{H}$ we have that $\Phi(H)=\Phi(\hat{H})$. Adapting the notation of Rivest and Vuillemin~\cite{rivestvuillemin}, we define the \emph{alternating enumerator} of a property $\Phi$ and a graph $H$ to be the function
\begin{equation*}
\hat{\chi}(\Phi,H) := \sum_{S\subseteq E(H)} \Phi(H[S]) \cdot (-1)^{\#S}\,.
\end{equation*}
A graph property $\Phi$ is called \emph{edge-monotone} if it is closed under the removal of edges. It is called \emph{monotone} if it is closed under the removal of edges \emph{and} vertices.\footnote{To avoid confusion, we remark that in some literature, e.g. in~\cite{Meeks16} a property is called monotone if it is closed under \emph{addition} of vertices and edges.}
Given a graph property $\Phi$, a positive integer $k$ and a graph $G$, we write $\indsubs{\Phi,k}{G}$ for the set of all induced subgraphs of size $k$ of $G$ that satisfy $\Phi$. Furthermore, given a graph property $\Phi$ and an $H$-colored graph $G$, we write $\cpindsubs{\Phi}{G}$ for the set of all color-prescribed induced subgraphs of size $|V(H)|$ in $G$ that satisfy $\Phi$.

\subsection{Parameterized counting complexity}
The field of parameterized counting was introduced independently by McCartin~\cite{McCartin06} and Flum and Grohe~\cite{FlumG04} and constitutes a hybrid of classical computational counting and parameterized complexity theory. A \emph{parameterized counting problem} is a pair of a function $P:\Sigma^\ast \rightarrow \N$ and a computable parameterization $\kappa: \Sigma^\ast \rightarrow \N$. It is called \emph{fixed-parameter tractable} (FPT) if there exists a computable function $f$ and a deterministic algorithm that computes $P(x)$ in time $f(\kappa(x))\cdot |x|^{O(1)}$ for every $x \in \Sigma^\ast$. A \emph{parameterized Turing reduction} from $(P,\kappa)$ to $(\hat{P},\hat{\kappa})$ is a deterministic FPT algorithm with respect to $\kappa$ that is given oracle access to $\hat{P}$ and that on input $x$ computes $P(x)$ with the additional restriction that there exists a computable function $g$ such that for any oracle query $y$ it holds that $\hat{\kappa}(y) \leq g(\kappa(x))$. We write $(P,\kappa) \fptred (\hat{P},\hat{\kappa})$ if a parameterized Turing reduction exists.

Given a graph $G$ and a positive integer $k$, the parameterized counting problem $\#\clique$ asks to compute the number of complete subgraphs of size $k$ in $G$ and is parameterized by~$k$, that is $\kappa(G,k):=k$. It is complete for the class $\#\W{1}$, which can be seen as a parameterized counting equivalent of $\NP$~\cite{FlumG04}. Evidence for the fixed-parameter intractability of $\#\W{1}$-hard problems is given by the \emph{Exponential Time Hypothesis} (ETH), which asserts that $3$-$\textsc{SAT}$ cannot be solved\footnote{We point out that this includes deterministic and randomized algorithms.} in time $\mathsf{exp}(o(m))$ where $m$ is the number of clauses of the input formula. Assuming ETH, $\#\clique$ cannot be solved in time $f(k)\cdot n^{o(k)}$ for any function $f$~\cite{Chenetal05,Chenetal06} and hence $\#\W{1}$-hard problems are not fixed-parameter tractable.

Given a recursively enumerable class of graphs $\mathcal{H}$, the problem $\#\homsprob(\mathcal{H})$ asks, given a graph $H\in \mathcal{H}$ and an arbitrary graph $G$, to compute $\#\homs{H}{G}$. Its parameterization is given by $\kappa(H,G) := |V(H)|$. The problems $\#\cphomsprob(\mathcal{H})$ and $\#\cpindsubsprob(\mathcal{H})$ are defined similarly.
Furthermore, we define $\#\cpindsubsprob(\Phi)$ to be the problem of, given a graph $G$ that is $H$-colored for some graph $H$, computing $\#\cpindsubs{\Phi}{G}$ and parameterize it by $\kappa(G):=|V(H)|$ --- note that the $H$-coloring of $G$ is part of the input and hence $\kappa$ is well-defined. Finally, the problem $\#\indsubsprob(\Phi)$ asks, given a graph $G$ and a positive integer~$k$, to compute $\#\indsubs{\Phi,k}{G}$ and the parameterization is given by $\kappa(G,k):= k$.

\section{Alternating enumerators and p-edge-transitive graphs}\label{sec:alt_enum}
In this part of the paper we will provide a rough exposition of the work of Rivest and
Vuillemin~\cite{rivestvuillemin} who studied transitive boolean functions to resolve the asymptotic version of Karp's evasiveness conjecture. We will then apply their result to graphs $H$ that are both edge-transitive and have $p^\ell$ many edges for some prime $p$. This will enable us to conclude that the alternating enumerator of $\Phi$ and $H$ is $(\pm 1)$ modulo $p$ whenever $\Phi(H[\emptyset]) \neq \Phi(H)$. We start by introducing some required notions from algebraic graph theory.

\noindent The automorphism group of a graph $H$ induces a group action  on the edges of $H$, given by $h \{u,v\} := \{h(u),h(v)\}$. A group action is \emph{transitive} if there exists only one orbit and a graph $H$ is called \emph{edge-transitive} if the group action on the edges is transitive, that is, if for every pair of edges~$\{u,v\}$ and~$\{\hat{u},\hat{v}\}$ there exists an automorphism $h \in \auts{H}$ such that $h \{u,v\} = \{\hat{u},\hat{v}\}$. If additionally the number of edges of an edge-transitive graph is a prime power $p^\ell$ we call the graph $p$\emph{-edge-transitive}.

\begin{lemma}[Lemma~\ref{lem:alt_enum_intro} restated] \label{lem:alt_enum}
    Let $\Phi$ be a graph property and let $H$ be a $p$-edge-transitive graph such that $\Phi(H[\emptyset]) \neq \Phi(H)$. Then it holds that $\hat{\chi}(\Phi,H) = (\pm 1) \mod p \,.$
\end{lemma}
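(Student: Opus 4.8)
The plan is to realise $\hat{\chi}(\Phi,H)$ as the alternating sum of a transitive Boolean function on $\#E(H)=p^\ell$ variables and then run a streamlined version of the Rivest--Vuillemin argument. Identify a subset $S\subseteq E(H)$ with its indicator and set $f(S):=\Phi(H[S])$. Since $\Phi$ is invariant under graph isomorphism and every $\alpha\in\auts{H}$ induces an isomorphism $H[S]\to H[\alpha S]$ (where $\alpha S:=\{\alpha\{u,v\}:\{u,v\}\in S\}$), the function $f$ is constant on the orbits of the natural action of $\auts{H}$ on $2^{E(H)}$; moreover each $\alpha$ preserves $\#S$, hence the sign $(-1)^{\#S}$. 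Thus $\hat{\chi}(\Phi,H)=\sum_{S\subseteq E(H)}f(S)(-1)^{\#S}$ is constant on the orbits of any subgroup of $\auts{H}$ acting on $2^{E(H)}$, the common value on an orbit $O$ times $\#O$ being its contribution.

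The key group-theoretic input is: since $H$ is $p$-edge-transitive, a Sylow $p$-subgroup $P\le\auts{H}$ already acts \emph{transitively} on $E(H)$. I would prove this by orbit--stabilizer: edge-transitivity gives $[\auts{H}:\auts{H}_e]=\#E(H)=p^\ell$ for every edge $e$, so the $p$-adic valuation of $|\auts{H}_e|$ equals that of $|\auts{H}|$ minus $\ell$; writing $|P|=p^a$, the stabilizer $P_e=P\cap\auts{H}_e$ is a $p$-subgroup of $\auts{H}_e$, hence $|P_e|\le p^{a-\ell}$, so the $P$-orbit of $e$ has size $[P:P_e]\ge p^\ell=\#E(H)$ and therefore is all of $E(H)$. (In fact this shows every Sylow $p$-subgroup works, since $[P:P_e]$ is simultaneously $\le p^\ell$ and $\ge p^\ell$; $\auts{H}$ is finite as $H$ is, so a Sylow subgroup exists.)

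With such a $P$ fixed, partition the sum $\hat{\chi}(\Phi,H)$ according to the orbits of $P$ on $2^{E(H)}$. Each orbit has size a power of $p$, so modulo $p$ only the singleton orbits --- the $P$-fixed subsets of $E(H)$ --- survive. A transitive permutation group has no invariant subsets of its ground set other than $\emptyset$ and the whole set, so
$$\hat{\chi}(\Phi,H)\equiv \Phi(H[\emptyset])\cdot(-1)^{0}+\Phi(H)\cdot(-1)^{p^\ell}\pmod p.$$
Since $\Phi(H[\emptyset])\neq\Phi(H)$, exactly one summand is nonzero and it equals either $1$ or $(-1)^{p^\ell}\in\{+1,-1\}$; in every case $\hat{\chi}(\Phi,H)\equiv\pm1\pmod p$, which is the claim. (The same computation is valid in the degenerate situation $\ell=0$, where $E(H)$ is a single edge and $P$ is trivially transitive.)

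The step I expect to be the crux is the Sylow argument establishing transitivity of $P$ on $E(H)$; the remainder is routine bookkeeping with group orbits and signs. It is also here that the hypothesis ``$p^\ell$ edges'' is used essentially: without the ground set being a prime power, a Sylow $p$-subgroup need not act transitively, and indeed the conclusion can fail (this is the obstruction alluded to for $K_t$ with $t\ge 4$, whose edge count $\binom t2$ is not a prime power).
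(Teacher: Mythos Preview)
Your proof is correct and essentially identical to the paper's: both pass to a $p$-Sylow subgroup of $\auts{H}$, use orbit--stabilizer to see it still acts transitively on $E(H)$, then reduce $\hat{\chi}(\Phi,H)$ modulo $p$ to the two fixed subsets $\emptyset$ and $E(H)$. The only cosmetic difference is that the paper factors the Sylow-transitivity step out as a separate lemma, whereas you do it inline.
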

Lemma~\ref{lem:alt_enum} is implicitly proven in \cite[Theorem 4.3]{rivestvuillemin}, but for completeness we will include a short and self-contained proof, demonstrating a first application of the machinery of Sylow subgroups that we will need later.

For the proofs in this section, let us recall some key results from group theory. Given a prime number $p$, a finite group $\Gamma'$ is called a \emph{$p$-group} if the order $\# \Gamma'$ is a power of $p$. The following is a well-known and central result from the theory of finite groups.
\begin{theorem}[Sylow theorems] \label{thm:Sylow}
 Let $\Gamma$ be a finite group of order $\# \Gamma = p^k m$ for a prime $p$ and an integer $m \geq 1$ coprime to $p$. Then $\Gamma$ contains a subgroup $\Gamma'$ of order $p^k$. Moreover, every other subgroup $\Gamma''$ of $\Gamma$ of order $p^k$ is conjugate to $\Gamma'$, that is there exists $g \in \Gamma$ with $\Gamma'' = g \Gamma' g^{-1}$. In particular, the groups $\Gamma', \Gamma''$ are isomorphic (via the conjugation by $g$).

 Finally, every subgroup $\tilde \Gamma \subseteq \Gamma$ which is a $p$-group is actually contained in some conjugate $g \Gamma' g^{-1}$ of the group $\Gamma'$.
\end{theorem}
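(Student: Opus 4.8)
The final statement is the classical Sylow theorems, which underpin everything that follows in the paper, so the plan is to give a short self-contained proof. I would establish the three assertions in the order \emph{existence} of a subgroup of order $p^k$, then \emph{containment} of an arbitrary $p$-subgroup in a conjugate of it, and finally \emph{conjugacy} of any two subgroups of order $p^k$; the last two will drop out of a single fixed-point argument once existence is in hand.

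For existence I would run Wielandt's counting argument. Let $\Omega$ be the set of all subsets of $\Gamma$ of cardinality $p^k$, so that $\#\Omega = \binom{p^k m}{p^k}$, and let $\Gamma$ act on $\Omega$ by left translation. The crucial arithmetic input is that $p \nmid \binom{p^k m}{p^k}$, which I would verify by computing $p$-adic valuations term by term in $\binom{p^k m}{p^k} = \prod_{j=0}^{p^k-1} \tfrac{p^k m - j}{p^k - j}$: the $j=0$ factor equals $m$, coprime to $p$, and for $1 \le j \le p^k - 1$, writing $j = p^a u$ with $p \nmid u$ and necessarily $a \le k-1$, both $p^k m - j$ and $p^k - j$ have $p$-adic valuation exactly $a$, so the valuations cancel. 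Since $p \nmid \#\Omega$, some orbit $\mathcal{O}$ satisfies $p \nmid \#\mathcal{O}$; fixing $S \in \mathcal{O}$ and setting $\Gamma' := \{g \in \Gamma : gS = S\}$, the orbit--stabilizer theorem gives $\#\Gamma' = \#\Gamma / \#\mathcal{O}$, which is divisible by $p^k$ because $\#\mathcal{O}$ divides $m$. Conversely, fixing $s_0 \in S$, the map $g \mapsto g s_0$ embeds $\Gamma'$ into $S$, so $\#\Gamma' \le p^k$; hence $\#\Gamma' = p^k$.

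The remaining parts follow from one lemma: if $P \le \Gamma$ has order $p^k$ and $Q \le \Gamma$ is any $p$-group, then $g^{-1} Q g \subseteq P$ for some $g \in \Gamma$. To prove it, let $Q$ act on the left coset space $\Gamma/P$ by $q \cdot (xP) := (qx)P$; this set has $m$ elements, every $Q$-orbit has size dividing $\#Q$ and hence a power of $p$, and since $p \nmid m$ not all orbit sizes can be divisible by $p$, so there is a fixed point $gP$, which unwinds to $g^{-1} Q g \subseteq P$. Applying the lemma with $Q = \tilde\Gamma$ an arbitrary $p$-subgroup and $P = \Gamma'$ gives $\tilde\Gamma \subseteq g \Gamma' g^{-1}$, the final assertion; applying it with $Q = \Gamma''$ a second subgroup of order $p^k$ gives $g^{-1} \Gamma'' g \subseteq \Gamma'$, and equality of finite orders forces $\Gamma'' = g \Gamma' g^{-1}$, with conjugation by $g$ furnishing the claimed isomorphism.

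I expect the one genuinely delicate step to be the existence part — specifically, carrying out the $p$-adic valuation computation for $\binom{p^k m}{p^k}$ correctly and then pinning down the order of the chosen stabilizer from both sides (divisibility by $p^k$ via orbit--stabilizer, and the upper bound $\le p^k$ via the embedding into $S$). Once existence is secured, the containment and conjugacy statements are immediate consequences of the coset-action fixed-point observation and require no further combinatorics.
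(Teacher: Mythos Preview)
Your proof is correct and follows the standard Wielandt argument for existence together with the coset-action fixed-point argument for containment and conjugacy. However, the paper itself does \emph{not} prove this theorem: it is stated as ``a well-known and central result from the theory of finite groups'' and used as a black box, with no proof or sketch given. So there is no approach in the paper to compare against; you have simply supplied a complete proof where the paper cites the result. Your write-up is fine as a self-contained addition, but be aware that in the context of this paper the theorem functions purely as background, and none of the arguments you give are needed for (or reflected in) the paper's own development.
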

A subgroup $\Gamma' \subseteq \Gamma$ as above is called a \emph{$p$-Sylow subgroup} of $\Gamma$.

The following result is a first important application of the Sylow theorems. It can be found as Exercise (E28) in~\cite{lectalgebra}; we include a proof for completeness.
\begin{lemma} \label{lem:ptransitive}
 Let $\Gamma$ be a finite group acting transitively on a set $T$ such that $\# T = p^l$ for some $l \geq 0$. Then the induced action of any $p$-Sylow subgroup $\Gamma' \subseteq \Gamma$ on $T$ is still transitive.
\end{lemma}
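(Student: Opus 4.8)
The plan is to fix a basepoint and reduce the statement to a single orbit-size estimate via the orbit–stabilizer theorem. Concretely, pick any $t_0 \in T$ and let $\Gamma_{t_0} = \{g \in \Gamma : g t_0 = t_0\}$ be its stabilizer in $\Gamma$. Since $\Gamma$ acts transitively on $T$, the orbit–stabilizer theorem gives $\# \Gamma / \# \Gamma_{t_0} = \# T = p^l$. Write $\# \Gamma = p^k m$ with $\gcd(p,m) = 1$, so that a $p$-Sylow subgroup $\Gamma' \subseteq \Gamma$ has order $p^k$. It suffices to show that the single orbit $\Gamma' t_0$ already equals all of $T$, since then the $\Gamma'$-action has exactly one orbit and is therefore transitive.

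First I would compute the $p$-part of $\# \Gamma_{t_0}$: from $\# \Gamma_{t_0} = \# \Gamma / p^l = p^{k-l} m$ together with $\gcd(p,m) = 1$, the largest power of $p$ dividing $\# \Gamma_{t_0}$ is $p^{k-l}$. Now observe that the stabilizer of $t_0$ inside $\Gamma'$ is exactly $\Gamma' \cap \Gamma_{t_0}$. This group is a subgroup of the $p$-group $\Gamma'$, hence is itself a $p$-group; being also a subgroup of $\Gamma_{t_0}$, its order divides $\# \Gamma_{t_0}$ by Lagrange, and being a power of $p$ it must therefore divide $p^{k-l}$, i.e. $\#(\Gamma' \cap \Gamma_{t_0}) \le p^{k-l}$. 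Applying the orbit–stabilizer theorem to the action of $\Gamma'$ then yields
\[
\#(\Gamma' t_0) \;=\; \frac{\# \Gamma'}{\#(\Gamma' \cap \Gamma_{t_0})} \;\ge\; \frac{p^k}{p^{k-l}} \;=\; p^l \;=\; \# T.
\]
Since $\Gamma' t_0 \subseteq T$, this forces $\Gamma' t_0 = T$, so $\Gamma'$ acts transitively on $T$.

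There is no deep obstacle in this argument; the only point requiring care is the bookkeeping of prime powers — specifically, that $\Gamma' \cap \Gamma_{t_0}$, as a $p$-subgroup of $\Gamma_{t_0}$, cannot exceed the $p$-part $p^{k-l}$ of $\# \Gamma_{t_0}$. This step uses only Lagrange's theorem and the coprimality of $m$, not the full strength of Theorem~\ref{thm:Sylow}; one could instead invoke the third part of the Sylow theorems, but the direct orbit-counting computation above is shorter and entirely self-contained.
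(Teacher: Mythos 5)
Your proof is correct and follows essentially the same route as the paper's own argument: fix a basepoint $t_0$, apply the orbit–stabilizer theorem to both $\Gamma$ and $\Gamma'$, and bound $\#(\Gamma' \cap \Gamma_{t_0})$ via Lagrange and the $p$-part bookkeeping to force $\#(\Gamma' t_0) \geq p^l$. The only cosmetic difference is that you organize the intermediate inequality slightly more directly, but the ideas and the chain of deductions are identical.
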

\begin{proof}
 Let $t_0 \in T$ be any element, then $T$ is the orbit of $t_0$ under $\Gamma$.
 Let $\operatorname{Stab}_\Gamma(t_0) = \{g \in \Gamma : g t_0=t_0\}$ be the stabilizer of $t_0$ under the action of $\Gamma$. Then by the Orbit-Stabilizer theorem, we have
 \begin{equation} \label{eqn:orbitstab1}
 \# \Gamma = (\# \Gamma t_0) \cdot (\# \operatorname{Stab}_\Gamma(t_0)) = (\# T) \cdot (\# \operatorname{Stab}_\Gamma(t_0)).
 \end{equation}
 As in the Sylow theorems, write $\# \Gamma = p^k m$ with $m$ not divisible by $p$ and let $\Gamma' \subseteq \Gamma$ be a $p$-Sylow subgroup of $\Gamma$, which is of order $p^k$.
 The stabilizer of $t_0$ under the induced action of the subgroup $\Gamma' \subseteq \Gamma$ is given by
 \[\operatorname{Stab}_{\Gamma'}(t_0) = \{g \in \Gamma' : g t_0 = t_0\} = \operatorname{Stab}_\Gamma(t_0) \cap \Gamma'.\]
 Clearly this is a subgroup of the group $\Gamma'$ and by Lagrange's theorem, the order of $\operatorname{Stab}_{\Gamma'}(t_0)$ divides the order $p^k$ of $\Gamma'$. Thus it is itself a power of $p$, say $\#\operatorname{Stab}_{\Gamma'}(t_0) = p^n$.

 On the other hand, $\operatorname{Stab}_{\Gamma'}(t_0)$ is also a subgroup of $\operatorname{Stab}_\Gamma(t_0)$. Inserting the order of $\Gamma$ and the size of $T$ in equation (\ref{eqn:orbitstab1}) we obtain
 \begin{equation} \label{eqn:orbitstab12}
 p^k m = p^l \cdot (\# \operatorname{Stab}_\Gamma(t_0)),
 \end{equation}
 and thus $\# \operatorname{Stab}_\Gamma(t_0)$ can at most contain a factor of $p^{k-l}$. Again, by Lagrange's theorem, the order $p^n$ of the subgroup  $\operatorname{Stab}_{\Gamma'}(t_0)$ divides the order of $\operatorname{Stab}_\Gamma(t_0)$ and thus $n \leq k-l$.

 Finally, by the Orbit-Stabilizer theorem applied to the action of $\Gamma'$ on $t_0$, we have
 \begin{equation} \label{eqn:orbitstab2}
 p^k = \# \Gamma' = (\# \Gamma' t_0) \cdot (\# \operatorname{Stab}_{\Gamma'}(t_0)) = (\# \Gamma' t_0) \cdot p^n.
 \end{equation}
 Thus, on the one hand we obtain $\#\Gamma' t_0 = p^{k-n} \geq p^{k-(k-l)} = p^l$.
 On the other hand we obtain $\Gamma' t_0 \subseteq T$ and thus $\# \Gamma' t_0 \leq \# T = p^l$.
 Hence we have the equality $\# \Gamma' t_0 = p^l = \#T$  and thus $\Gamma' t_0 = T$.
 In other words, the action of $\Gamma'$ on $T$ is transitive, finishing the proof.
\end{proof}

This result allows us to give a short proof of Lemma~\ref{lem:alt_enum} above. 

\begin{proof}[Proof of Lemma \ref{lem:alt_enum}]
 Let $\Gamma = \auts{H}$ be the automorphism group of the graph $H$, then by assumption its action on the set $E(H)$ of edges of $H$ is transitive. By Lemma \ref{lem:ptransitive}, any $p$-Sylow subgroup $\Gamma' \subseteq \Gamma$ still acts transitively on $E(H)$. Now consider the sum
 \begin{equation*}
\hat{\chi}(\Phi,H) = \sum_{S\subseteq E(H)} \Phi(H[S]) \cdot (-1)^{\#S}\,.
\end{equation*}
The action of $\Gamma'$ on $E(H)$ induces an action of
$\Gamma'$ on the set of subsets $\mathcal{P}(E(H)) := \{S \subseteq E(H)\}$ of $E(H)$. Indeed, for $S \subset E(H)$ and $g \in \Gamma'$ we define \mbox{$g S = \{ g s : s \in S\}$}. For this action, the set $\mathcal{P}(E(H))$ can be written as a disjoint
union of the orbits $\Gamma' S_0$ of a set $\mathcal{S}\subseteq \mathcal{P}(E(H))$ of representatives $S_0$. (Recall that for a group action two orbits are either disjoint or equal.) This allows us to write the sum above as
 \begin{equation*}
\hat{\chi}(\Phi,H) = \sum_{S_0 \in \mathcal{S}} \sum_{S \in \Gamma'S_0} \Phi(H[S]) \cdot (-1)^{\#S}\,.
\end{equation*}
 Until now we have just reordered the summands above, combining all summands for $S$ in the same $\Gamma'$ orbit.

 Now since all elements $g \in \Gamma' \subseteq \auts{H}$ act by graph automorphisms on $H$, we have that the graphs $H[g S_0]$ and $H[S_0]$ are isomorphic, so in particular $\Phi(H[gS_0]) = \Phi(H[S_0])$. Applying this to the formula for $\hat{\chi}(\Phi,H)$ above, we get
  \begin{equation}\label{eqn:orbitstab22}
\hat{\chi}(\Phi,H) = \sum_{S_0 \in \mathcal{S}} (\# \Gamma'S_0) \cdot  \Phi(H[S_0]) \cdot (-1)^{\#S_0}\,.
\end{equation}
 Now by the Orbit-Stabilizer theorem, the size $\# \Gamma'S_0$ of the orbit of $S_0$ divides the order $p^k$ of $\Gamma'$, so $\# \Gamma'S_0$ is itself a power of $p$.
 Further, unless $S_0 \subseteq E(H)$ is invariant under $\Gamma'$, the size of its orbit
 $\# \Gamma'S_0$ is a \emph{positive} power of $p$ and thus congruent to $0$ mod $p$.
 However, the only two sets $S_0 \subseteq E(H)$ invariant under $\Gamma'$ are
 $S_0 = \emptyset$ and $S_0 = E(H)$:
 Consider for the sake of contradiction the case where $S_0$ is invariant under $\Gamma'$ and
 nonempty, but not the whole set $E(H)$. Then $S_0$ contains an element $e_0$, and since $S_0$ is
 $\Gamma'$-invariant, $S_0$ also contains the entire orbit of $e_0$ under $\Gamma'$.
 But since $\Gamma'$ acted transitively on $E(H)$, $S_0$  must have been the whole set $E(H)$,
 yielding a contradiction.

 To summarize, when computing $\hat{\chi}(\Phi,H)$ modulo $p$ all but two summands in the sum
 in Equation~(\ref{eqn:orbitstab22}) are congruent to $0$. Hence, we can simplify
 Equation~(\ref{eqn:orbitstab22}) to
 \begin{equation*}
\hat{\chi}(\Phi,H) =  \Phi(H[\emptyset]) + \Phi(H[E(H)]) \cdot (-1)^{\#E(H)} = \Phi(H[\emptyset]) - \Phi(H) \mod p\,.
\end{equation*}
Note that we use the fact that for $p>2$ we have that $\#E(H)$ is odd since it is a prime power
and for $p=2$ we have $-1 = 1$ modulo $p$. Now, the condition $\Phi(H[\emptyset]) \neq \Phi(H)$ exactly gives us $\Phi(H[\emptyset]) - \Phi(H) = \pm 1 \mod p$. %\php{Proof should be correct}
\end{proof}

There are two main examples for $p$-edge-transitive graphs. The first example is
the class of the complete, bipartite graphs $K_{p^l,p^m}$ with $l,m \geq 0$.
The graph $K_{p^l,p^m}$ has $p^{l+m}$ edges and the automorphism group clearly acts transitively
on the edges of that graph.
The second example is the class of wreath graphs $W_{p^k}$ for $k \geq 1$.
The graph $W_{p^k}$ has $p^k$ vertices that can be decomposed in disjoint sets $V_0, \ldots, V_{p-1}$ of order $p^{k-1}$ each, and edges $\{v_i, v_{i+1}\}$ for each $i=0, \ldots, p-1$ and
vertices $v_i \in V_i, v_{i+1} \in V_{i+1}$ (where it is understood that $V_{p}=V_{0}$).
Thus in total, $W_{p^k}$ has $p^{2k-1}$ edges, except for $p=2$ where it has $2^{2k-2}$ edges.
The graph $W_{p^k}$ can be seen as the lexicographical product of a $p$-cycle with a graph consisting of $p^{k-1}$ disjoint vertices. For $k=1$ we exactly obtain the $p$-cycle.
%The graph $W_{p^k}$ has $p^k$ vertices and $p^{2k+1}$ edges.
To see that $W_{p^k}$ is edge-transitive, we observe that on the one hand, for fixed $i$ we can apply an arbitrary permutation on $V_i$ leaving the graph invariant.
On the other hand, there exists a ``rotational action'' sending $V_j$ to $V_{j+1}$ for $j=0, \ldots, p-1$, which also leaves the graph invariant. Using these two types of automorphisms, we can map every edge to every other edge. %\php{Fixed some formulations in this paragraph}

The following result tells us that in a certain sense the graphs $K_{p^l,p^m}$ and $W_{p^k}$ are the maximal $p$-edge-transitive graphs. A graph $G$ is called \emph{vertex-transitive} if its automorphism group $\auts{G}$ acts transitively on its set of vertices $V(G)$.
\begin{theorem}[Theorem~\ref{thm:pedgetransgraphs_intro} restated] \label{thm:pedgetransgraphs}
 Let $G$ be a connected $p$-edge-transitive graph. Then either $G$ is bipartite (and thus a subgraph of a graph of the form $K_{p^l,p^m}$ for some $l,m \geq 0$) or $G$ is vertex-transitive and an edge-subgraph of $W_{p^k}$  for $k \geq 1$ (or both).
\end{theorem}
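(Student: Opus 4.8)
The plan is to split into two cases according to whether $G$ is bipartite. Write $\#E(G)=p^t$ with $t\geq 1$; since $G$ is connected and has an edge it has at least two vertices and hence no isolated vertex. If $G$ is bipartite with sides $A,B$, then choosing $l,m$ with $p^l\geq\#A$ and $p^m\geq\#B$ exhibits $G$ as a subgraph of $K_{\#A,\#B}$, hence of $K_{p^l,p^m}$, and the first alternative holds. So the work is entirely in the case that $G$ is \emph{not} bipartite, where I would prove that $G$ is vertex-transitive and an edge-subgraph of some $W_{p^k}$.

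The first step is to pass from $\Gamma=\auts{G}$ to a $p$-Sylow subgroup $P\subseteq\Gamma$. Since $\Gamma$ acts transitively on $E(G)$ and $\#E(G)=p^t$, Lemma~\ref{lem:ptransitive} gives that $P$ already acts transitively on $E(G)$. From this I would deduce that $P$ has at most two orbits on $V(G)$: fixing an edge $\{u,v\}$, every vertex lies on some edge, which is a $P$-translate of $\{u,v\}$, so every vertex lies in the $P$-orbit of $u$ or of $v$. If $P$ had exactly two vertex orbits $X$ and $Y$, then no edge could have both endpoints in $X$, since otherwise edge-transitivity of $P$ together with the fact that automorphisms preserve orbits would force every edge into $X$, leaving $Y$ isolated; symmetrically for $Y$. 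Hence every edge would join $X$ to $Y$ and $G$ would be bipartite, contrary to assumption. Therefore $P$ is transitive on $V(G)$, so $\#V(G)$ divides $\#P$; thus $\#V(G)=p^k$, and $k\geq 1$ because $G$ has an edge. In particular $G$ is vertex-transitive, and it remains to produce the wreath structure.

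For the last step I would use standard $p$-group theory to read off the blocks. Setting $H=\operatorname{Stab}_P(u)$, we have $[P:H]=p^k>1$, so $H$ is contained in a subgroup $M\subseteq P$ of index $p$. The fibres of the $P$-equivariant surjection $V(G)\cong P/H\to P/M$ form a partition of $V(G)$ into $p$ blocks $V_0,\dots,V_{p-1}$, each of size $p^{k-1}$, that $P$ permutes transitively; the image of $P$ in the symmetric group on these blocks is a transitive $p$-subgroup of $S_p$, hence cyclic of order $p$, so after relabelling the blocks $P$ acts on indices by rotations $i\mapsto i+r$. Picking any edge $\{u,v\}$ with $u\in V_a$, $v\in V_b$ and setting $c\equiv b-a\pmod p$, edge-transitivity of $P$ shows that every edge of $G$ joins $V_i$ to $V_{i\pm c}$ for some $i$. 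If $c=0$ all edges lie within blocks, so the $p\geq 2$ nonempty blocks are unions of connected components and $G$ is disconnected, a contradiction; hence $0<c<p$. Since $p$ is prime, $c$ is invertible modulo $p$, and relabelling the blocks by $V'_j:=V_{cj}$ turns the adjacency pattern into ``edges join consecutive blocks $V'_j$ and $V'_{j+1}$'' (indices mod $p$). As each block has size $p^{k-1}$ and $\#V(G)=p^k$, this exhibits $G$ as an edge-subgraph of $W_{p^k}$, finishing the proof.

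I expect the Sylow step of the second paragraph to be the main obstacle, or at least the crucial idea: it is precisely the transitivity of the $p$-Sylow subgroup $P$ --- and this is exactly where the prime-power hypothesis on $\#E(G)$ enters, through Lemma~\ref{lem:ptransitive} --- that forces $\#V(G)$ to be a power of $p$ and, via the ``at most two orbits'' dichotomy, isolates the bipartite case. The block-system extraction of the third paragraph is routine once $P$ is transitive on the vertices, but some care is needed to get blocks of size exactly $p^{k-1}$ and to justify the final relabelling that converts the ``distance-$c$'' pattern into the cyclic pattern of $W_{p^k}$.
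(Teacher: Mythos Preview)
Your proposal is correct and follows the same opening moves as the paper --- pass to a $p$-Sylow subgroup $P$ of $\auts{G}$, use Lemma~\ref{lem:ptransitive} to retain edge-transitivity, and invoke the ``two vertex-orbits or vertex-transitive'' dichotomy (your second paragraph is exactly the content of Lemma~\ref{lem:vtransbipartdichot}) to conclude $\#V(G)=p^k$ in the non-bipartite case. Where you diverge is in how you extract the wreath structure.

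The paper enlarges $P$ to a $p$-Sylow subgroup $\Gamma'$ of the ambient symmetric group $S_{p^k}$, gives an explicit inductive description of $\Gamma'$ as an iterated wreath product $\Gamma(p,k)$, and then argues that the $\Gamma'$-orbit of any single edge of $G$ is (after relabelling) exactly the edge set of $W_{p^k}$; since $P\subseteq\Gamma'$ and $E(G)$ is a single $P$-orbit, $G$ is then an edge-subgraph of $W_{p^k}$. You instead stay inside $P$ and produce the $p$ blocks intrinsically: pick a maximal subgroup $M\supseteq\operatorname{Stab}_P(u)$ (automatic in a $p$-group, and normal of index $p$), read off the block system from the fibres of $P/H\to P/M$, observe that the induced action on the $p$ blocks is a transitive $p$-subgroup of $S_p$ and hence a single $p$-cycle, and deduce that all edges have a fixed block-difference $\pm c$, which a multiplicative relabelling (using that $c$ is a unit mod $p$) turns into ``consecutive''. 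Your route is shorter and avoids the explicit iterated-wreath description of the Sylow subgroup of $S_{p^k}$; the paper's route has the virtue of being constructive about \emph{why} the wreath graph appears, by exhibiting the full automorphism group that generates it. Both arguments hinge on the same Sylow step you correctly flag as the crux.
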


For the proof of Theorem~\ref{thm:pedgetransgraphs}, we will make use of the following well-known result about the relation between edge and vertex-transitivity \cite[Proposition 15.1]{alggraphthy}.

    \begin{lemma}\label{lem:vtransbipartdichot}
 Let $G$ be a connected graph and let $\Gamma \subseteq \auts{G}$ be a subgroup acting transitively on the set of edges $E(G)$. Then either $\Gamma$ acts transitively on the set of vertices $V(G)$ (and thus $G$ is vertex-transitive) or $G$ is bipartite (or both).
\end{lemma}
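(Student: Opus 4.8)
The plan is to study the orbits of $\Gamma$ on $V(G)$, show there are at most two of them, and then split into cases. We may assume $G$ has at least one edge, since otherwise connectedness forces $G$ to be a single vertex (or empty), which is vacuously vertex-transitive.

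First I would fix an edge $e_0 = \{u_0,v_0\} \in E(G)$ and argue that $V(G) = \Gamma u_0 \cup \Gamma v_0$. Indeed, since $G$ is connected and has an edge, every vertex $w$ is incident to some edge $e$; by edge-transitivity there is $g \in \Gamma$ with $g e_0 = e$, whence $w \in e = \{gu_0, gv_0\} \subseteq \Gamma u_0 \cup \Gamma v_0$. Now two orbits of a group action are either disjoint or equal, so there are two possibilities. If $\Gamma u_0 = \Gamma v_0$, then this single orbit is all of $V(G)$, so $\Gamma$ acts transitively on $V(G)$ and $G$ is vertex-transitive; this is the first alternative. Otherwise $A := \Gamma u_0$ and $B := \Gamma v_0$ are disjoint and together partition $V(G)$.

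In the remaining case I would show that $(A,B)$ is a proper bipartition, i.e.\ no edge lies inside $A$ and no edge lies inside $B$. Suppose for contradiction that $\{a,a'\} \in E(G)$ with $a,a' \in A$. By edge-transitivity choose $g \in \Gamma$ with $g e_0 = \{a,a'\}$, so that $\{gu_0, gv_0\} = \{a,a'\}$. But $g$ preserves orbits, so $gu_0 \in \Gamma u_0 = A$ and $gv_0 \in \Gamma v_0 = B$; thus the edge $\{gu_0,gv_0\}$ has exactly one endpoint in $A$, contradicting $a,a' \in A$. The symmetric argument (swapping the roles of $A$ and $B$) excludes edges inside $B$. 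Hence every edge of $G$ joins $A$ to $B$, so $G$ is bipartite, completing the second alternative.

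Since the argument uses only the orbit partition of a group action and the edge-transitivity hypothesis, there is no serious obstacle here; the only points needing mild care are the degenerate edgeless case and the verification that $\Gamma u_0$ and $\Gamma v_0$ are genuinely the only candidate orbits, both of which are immediate.
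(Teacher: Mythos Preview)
Your proof is correct and follows essentially the same approach as the paper: fix an edge, use edge-transitivity (plus connectedness) to see that the two orbits of its endpoints cover $V(G)$, and then split into the cases where these orbits coincide or are disjoint. The only cosmetic difference is that the paper verifies the bipartition directly (every edge is a $\Gamma$-translate of $e_0$, hence joins the two orbits), whereas you phrase the same step as a contradiction argument.
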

  %  \addtocounter{lemma}{-1}

The proof from \cite{alggraphthy} carries over verbatim to the setting of the previous lemma, by replacing the full group $\auts{G}$ with the subgroup $\Gamma$. We include it for completeness.
\begin{proof}%[Proof of Lemma~\ref{lem:vtransbipartdichot}]
 Let $e_0 = \{v_1, v_2\} \in E(G)$ be some edge. Then for each vertex $w_1$ and some edge $\{w_1, w_2\}$ incident to $w_1$ there exists an automorphism $g \in \Gamma$ sending $\{v_1, v_2\}$ to $\{w_1, w_2\}$, thus either $w_1 = g v_1$ or $w_1=g v_2$. This proves that the set $V(G)$ is the union of the orbits of $v_1, v_2$ under $\Gamma$. If these two orbits intersect, then in fact $\Gamma v_1 = \Gamma v_2$ (orbits under a group action are either disjoint or equal), and so $\Gamma v_1 = V(G)$ since the two sets above cover $V(G)$. On the other hand, if the two orbits are disjoints, then they form a partition making $G$ into a bipartite graph. Indeed, any edge of $G$ is in the orbit of the edge $\{v_1, v_2\}$ and thus connects an element of $\Gamma v_1$ to an element of $\Gamma v_2$.
\end{proof}

\begin{proof}[Proof of Theorem \ref{thm:pedgetransgraphs}]
 Let $G$ be a $p$-edge-transitive, non-bipartite graph. Then by Lemma~\ref{lem:ptransitive} any $p$-Sylow subgroup $\Gamma \subseteq \auts{G}$ still acts transitively on the edges $E(G)$ of $G$.
 By Lemma \ref{lem:vtransbipartdichot}, since $G$ is not bipartite, the group $\Gamma$ acts transitively on the set of vertices $V(G)$ (and thus $G$ is also vertex-transitive). %\php{Slightly reworded}
%  \php{Don't we already assume to be in the non-bipartite case?}\jocomment{Yes, I rephrased it. Better?}
% \php{Jap}
%We continue with this assumption and observe that in this case, by the Orbit-Stabilizer theorem, we have $\# V(G) = p^k$ for some $k \geq 1$. We claim that then $G$ is a subgraph of $W_{p^k}$.
We observe that in this case, by the Orbit-Stabilizer theorem, we have $\# V(G) = p^k$ for some $k \geq 1$. We claim that then $G$ is a subgraph of $W_{p^k}$.

 To see this, let us reformulate our situation slightly: We identify the vertex set $V(G)$ with the set $[p^k]=\{1, \ldots, p^k\}$.
 Then we can canonically identify $\auts{G}$ as a subgroup of $S_{p^k}$, the symmetric group on $[p^k]$
 (this is because a graph automorphism is uniquely determined by its action on the vertices of a graph).
 Inside $\auts{G}$ we have the subgroup $\Gamma$, which is a $p$-group. By the Sylow theorem, there exists a $p$-Sylow subgroup $\Gamma' \subseteq S_{p^k}$ containing $\Gamma$. Since the action of $\Gamma$ is transitive on the set of edges $E(G)$, we can obtain $E(G)$ by starting with some edge $e_0 = \{v_1, v_2\} \in E(G)$ with $v_1, v_2 \in [p^k]$ and taking its orbit $\{ \{g v_1, g v_2\} : g \in \Gamma\} = E(G)$. But note that by instead taking the orbit of $e_0$ under $\Gamma' \subseteq S_{p^k}$ we get at least this set of edges and maybe more. Denote by $G'$ the graph with vertices $[p^k]$ and edges $\{ \{g v_1, g v_2\} : g \in \Gamma'\}$. We claim that $G' \cong W_{p^k}$.

 To show this we will explicitly identify the $p$-Sylow subgroup $\Gamma' \subseteq S_{p^k}$ (recall that by the Sylow theorem it is unique up to conjugation, that is reordering of the elements of $[p^k]$).

 First note that $S_{p^k}$ has $(p^k)!$ elements. Inductively one sees that the highest power of $p$ appearing in this number is $p^{e(k)}$ for $e(k)=p^{k-1}+p^{k-2} + \ldots + p + 1$. We will inductively construct a subgroup $\Gamma(p,k)$ of $S_{p^k}$ with $p^{e(k)}$ elements, which then is a $p$-Sylow subgroup. We note that a description of such a $p$-Sylow subgroup is given in \cite{sylowpsubgps}. %For giving this description, it will be convenient to use the correspondence between permutations $\sigma \in S_n$ and the corresponding $n \times n$-permutation matrix $M(\sigma) \in \operatorname{Mat}_{n \times n}$.

 For $k=1$ we have $e(k)=1$ and a $p$-Sylow subgroup $\Gamma(p,1) \subseteq S_p$ is generated by a cyclic permutation $1 \mapsto 2, 2 \mapsto 3, \ldots, p \mapsto 1$ of the elements of $[p]$. The group $\Gamma(p,1)$ is isomorphic to the cyclic group $\mathbb{Z}/p\mathbb{Z}$.

 Now assume we constructed $\Gamma(p,k-1)$ for some $k\geq 2$, then we first note that a product of $p$ copies $\prod_{i=0}^{p-1} \Gamma(p,k-1)$ of $\Gamma(p,k-1)$ acts on $[p^{k}]$ where the $i$-th factor acts by permutations on the elements $ip^{k-1}+1, ip^{k-1}+2, \ldots , ip^{k-1} +p^{k-1}=(i+1)p^{k-1}$. All of these actions commute, so we can see the product $\prod_{i=0}^{p-1} \Gamma(p,k-1)$ as a subgroup of $S_{p^{k}}$. However, there is a further action of $\mathbb{Z}/p\mathbb{Z}$ on $[p^{k}]$ sending $j$ to $j+p^{k-1}$ (modulo $p^{k}$). This action cyclically permutes the $p$ blocks of $p^{k-1}$ elements in $[p^{k}]$ on which the $p$ factors of $\prod_{i=0}^{p-1} \Gamma(p,k-1)$ act. Thus these two actions do not commute, but indeed they induce an action of the semidirect product
 \[\Gamma(p,k) = \left(\prod_{i=0}^{p-1} \Gamma(p,k-1)\right) \rtimes \mathbb{Z}/p\mathbb{Z},\]
 where $\mathbb{Z}/p\mathbb{Z}$ acts on $\prod_{i=0}^{p-1} \Gamma(p,k-1)$ by permuting the factors of the product. We claim that $\Gamma(p,k)$ is the desired $p$-Sylow subgroup of $S_{p^{k}}$. 
 
 \noindent Indeed, as a semidirect product its number of elements is
 \[\# \Gamma(p,k) = (\# \Gamma(p,k-1))^p \cdot p = (p^{e(k-1)})^p \cdot p = p^{pe(k-1)+1}=p^{e(k)},\]
 so it has the correct number of elements and is indeed a subgroup of $S_{p^{k}}$.

 Now recall what we want to show: for a pair $\{v_1, v_2\}$ of vertices forming an edge of our original graph $G$, we want to show that the graph $G'$ with edges $\{ \{g v_1, g v_2\} : g \in \Gamma'\cong \Gamma(p,k)\}$ is isomorphic to the wreath graph $W_{p^k}$. By relabeling the vertices (that is performing a conjugation in $S_{p^k}$) we may assume that $\Gamma'=\Gamma(p,k)$. Furthermore, by a translation in the group $\Gamma(p,k)$, which acts transitively on the elements of $[p^k]$, we may assume that $v_1=1$. Now if $v_2$ were in the first block $[p^{k-1}]$ of vertices, on which the first factor $\Gamma(p,k-1)$ operates, then it is easy to see that the resulting graph $G'$ would not be connected: the first factor $\Gamma(p,k-1)$ would send the edge $\{1,v_2\}$ only to edges within the first block $[p^{k-1}]$ and then the cyclic permutation by the factor $\mathbb{Z}/p\mathbb{Z}$ would send this pattern of edges to the $p-1$ other blocks, giving us a disjoint union of $p$ graphs. This is not possible, since our original graph $G$ is a subgraph of $G'$ and also was assumed to be connected.

 Thus we may assume that $v_2$ is in one of the other blocks $P(a)=[p^{k-1}]+ip^{k-1}$ for $a=1, \ldots, p-1$. Now we want to argue that we can reorder these blocks, sending $P(a)$ to $P(1)$ and leaving $P(0)$ invariant, such that the group action of $\Gamma(p,k)$ is respected. And indeed, let $b \in \mathbb{Z}/p\mathbb{Z}$ be the multiplicative inverse of $a$ (such that $ab = 1 \mod p$), then there is a permutation of $[p^k]$ sending the block $P(i)$ to $P(i\cdot b \mod p)$ (where the block is
 just translated as a whole, not permuting the elements inside). And indeed, we see that $P(a)$ is sent to $P(1)$. The reason why this permutation respects the form of the action\footnote{To be precise, what happens is the following: the map sending $P(i)$ to $P(i\cdot b \mod p)$ is a permutation of $[p^k]$, that is an element $\sigma \in S_{p^k}$. What we are claiming is that the subgroup $\Gamma(p,k) \subseteq S_{p^k}$ is stable under the conjugation by $\sigma$, that
 is $\Gamma(p,k)=\sigma \Gamma(p,k) \sigma^{-1}$. So $\sigma$ is a relabeling of the vertices of our graph $G'$ which leaves the graph itself invariant.} of $\Gamma(p,k)$ is that multiplication by $b$ induces a group isomorphism $\mathbb{Z}/p\mathbb{Z} \to \mathbb{Z}/p\mathbb{Z}$ on the semidirect factor $\mathbb{Z}/p\mathbb{Z}$ of $\Gamma(p,k)$.

 To summarize, we can assume without loss of generality that we start with an edge $\{1,v_2\}$ with $v_2$ in the second block of vertices. But then it is easy to see that the graph $G'$ obtained by taking the orbit of $\{1,v_2\}$ under $\Gamma(p,k)$ is indeed the wreath graph $W_{p^k}$. Indeed, the group $\Gamma(p,k)$ acts transitively within each of the $p$ blocks of vertices (since the $i$-th factor $\Gamma(p,k-1)$ above acts transitively there), so every edge from the first to the second block is in the orbit of $\{1,v_2\}$. Then finally the cyclic permutation action of $\mathbb{Z}/p\mathbb{Z}$ sends these edges to the set of all edges between blocks $i$ and $i+1$, which exactly gives the set of edges of the wreath graph. This finishes the proof.
\end{proof}

\section{The main reduction: From homomorphisms to induced subgraphs}\label{sec:main_reductions}
\noindent In what follows we will construct a sequence of reductions, starting from $\#\homsprob(\mathcal{H})$ and ending in $\#\indsubsprob(\Phi)$. Here, $\mathcal{H}$ is a recursively enumerable set of $p$-edge-transitive graphs and $\Phi$ is a graph property such that for every graph $H \in \mathcal{H}$ we have that $\Phi(H[\emptyset]) \neq \Phi(H)$.

\noindent More precisely, we will prove that
\begin{equation}\label{eq:red_sequence}
\#\homsprob(\mathcal{H}) \stackrel{\text{Lemma~\ref{lem:hom_to_colhom}}}{\fptred} \#\cphomsprob(\mathcal{H}) \stackrel{\text{Lemma~\ref{lem:main_lem}}}{\fptred} \#\cpindsubsprob(\Phi) \stackrel{\text{Lemma~\ref{lem:col_ind_sub_to_ind_sub}}}{\fptred} \#\indsubsprob(\Phi)
\end{equation}
%\begin{equation*}
%\begin{array}{ccll}
%\#\homsprob(\mathcal{H}) & \fptred & \#\cphomsprob(\mathcal{H}) &~~~~~\text{Lemma~\ref{lem:hom_to_colhom} (Appendix)} \\
%~&\fptred & \#\cpindsubsprob(\Phi) &~~~~~\text{Lemma~\ref{lem:main_lem}}\\
%~&\fptred & \#\indsubsprob(\Phi) &~~~~~\text{Lemma~\ref{lem:col_ind_sub_to_ind_sub} (Appendix)}
%\end{array}
%\end{equation*}
In particular, all of those reductions will be tight in the sense that conditional lower bounds on the fine-grained complexity of $\#\homsprob(\mathcal{H})$ immediately transfer to $\#\indsubsprob(\Phi)$. For the hardness results we rely on a result of Dalmau and Jonsson~\cite{DalmauJ04} stating that the problem $\#\homsprob(\mathcal{H})$ is known to be $\#\W{1}$-hard whenever $\mathcal{H}$ is recursively enumerable and of unbounded treewidth.\footnote{We remark that the graph parameter of
treewidth is not used explicitly in this work. Hence we omit the definition and refer the interested reader e.g. to Chapter~11 in~\cite{FlumG06}.} Here a class of graphs is said to have unbounded treewidth if for every $b \in \N$ there exists a graph in the class with treewidth at least $b$. 

\paragraph*{Reducing homomorphisms to color-prescribed homomorphisms}
In the first reduction we are given graphs $H$ and $G$ and the goal is to compute $\#\homs{H}{G}$ using an oracle for $\#\cphoms{H}{\star}$. This can be done by taking precisely $|V(H)|$ copies of the vertices of $G$, that is, one for each vertex in $H$ and then adding an edge between two vertices $u$ and $v$ if they have been adjacent in $G$ and the vertices of $H$ corresponding to the copies of $V(G)$ that contain $u$ and $v$ are adjacent in $H$ as well. The construction is formalized in the proof of the following lemma. In particular it is shown that the resulting graph $\hat{G}$ is $H$-colored.

\begin{lemma}\label{lem:hom_to_colhom}
Let $H$ be a graph. There exists an algorithm $\mathbb{A}$ that is given a graph $G$ as input and has oracle access to the function $\#\cphoms{H}{\star}$ and computes $\#\homs{H}{G}$ in time $f(|V(H)|)\cdot |V(G)|$ where $f$ is a computable function. Furthermore, every oracle query~$\hat{G}$ satisfies $|V(\hat{G})| \leq f(|V(H)|) \cdot |V(G)|$.
\end{lemma}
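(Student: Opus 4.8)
The plan is to compute $\#\homs{H}{G}$ with a single call to the oracle $\#\cphoms{H}{\star}$ on a suitable auxiliary graph $\hat{G}$ obtained by ``spreading $G$ over the vertices of $H$''. Concretely, I set $V(\hat{G}) := V(H)\times V(G)$ and make $(a,u)$ and $(b,v)$ adjacent in $\hat{G}$ exactly when $\{a,b\}\in E(H)$ and $\{u,v\}\in E(G)$. Since $H$ and $G$ are loopless, this relation is symmetric and irreflexive, so $\hat{G}$ is a simple graph; moreover the projection $c\colon (a,u)\mapsto a$ maps every edge of $\hat{G}$ to an edge of $H$, hence $c$ is an $H$-coloring and $\#\cphoms{H}{\hat{G}}$ is well-defined. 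This matches the informal description preceding the lemma, where the $|V(H)|$ ``copies'' of $V(G)$ are the sets $\{a\}\times V(G)$.

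The key step is the identity $\#\homs{H}{G} = \#\cphoms{H}{\hat{G}}$, which I would prove by exhibiting an explicit bijection. A color-prescribed homomorphism $\varphi\colon H\to\hat{G}$ must send each $a\in V(H)$ to a vertex of color $a$, hence $\varphi(a) = (a,\psi(a))$ for a unique $\psi\colon V(H)\to V(G)$, and the homomorphism condition on $\varphi$ translates precisely to $\{\psi(a),\psi(b)\}\in E(G)$ for every $\{a,b\}\in E(H)$, i.e.\ $\psi\in\homs{H}{G}$. Conversely, from $\psi\in\homs{H}{G}$ one builds $\varphi\colon a\mapsto(a,\psi(a))$; this $\varphi$ is color-prescribed, it is injective because distinct vertices of $H$ receive distinct first coordinates, and it is a homomorphism since for $\{a,b\}\in E(H)$ we have both $\{a,b\}\in E(H)$ and $\{\psi(a),\psi(b)\}\in E(G)$ (note $\psi(a)\neq\psi(b)$ here, as $G$ has no self-loops). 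The two assignments are mutually inverse, which yields the equality of cardinalities.

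The algorithm $\mathbb{A}$ then simply constructs $\hat{G}$, queries the oracle once for $\#\cphoms{H}{\hat{G}}$, and outputs the result. For the resource bounds, $\hat{G}$ has exactly $|V(H)|\cdot|V(G)|$ vertices and is written down in time bounded by $f(|V(H)|)\cdot|V(G)|$ (for each vertex, and for each pair consisting of an edge of $G$ and an edge of $H$, record the corresponding vertex resp.\ edge of $\hat{G}$), so $f(m):=m$ works simultaneously for the running-time estimate and for the promised bound $|V(\hat{G})|\leq f(|V(H)|)\cdot|V(G)|$ on the single oracle query.

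The construction is routine, so there is no serious obstacle; the one point I would state carefully is the interplay between the two notions of homomorphism. Color-prescribed homomorphisms are automatically injective, whereas a generic $\psi\in\homs{H}{G}$ need not be. This causes no mismatch because $\psi$ can only identify vertices of $H$ that are \emph{non-adjacent} in $H$ (there are no self-loops in $G$), while the extra color coordinate of $\hat{G}$ separates the images of any two distinct vertices of $H$ in all cases — so $\psi\mapsto\varphi$ genuinely is a bijection onto $\cphoms{H}{\hat{G}}$, not merely an injection.
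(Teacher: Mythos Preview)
Your proposal is correct and follows essentially the same approach as the paper: the auxiliary graph $\hat G$ with $V(\hat G)=V(H)\times V(G)$ and edges $\{(a,u),(b,v)\}$ whenever $\{a,b\}\in E(H)$ and $\{u,v\}\in E(G)$ is exactly the paper's construction of $k$ disjoint copies $V_i$ of $V(G)$ with the same adjacency rule, and the coloring by first-coordinate projection coincides with the paper's $c(v_i)=i$. You spell out the bijection $\psi\leftrightarrow\varphi$ more explicitly than the paper (which simply asserts the identity $\#\cphoms{H}{\hat G}=\#\homs{H}{G}$ is ``easily verified''), but the argument is the same.
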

\begin{proof}
Let $k = |V(H)|$. It will be convenient to assume that $V(H)=[k]$. Given $G$, we construct a graph $\hat{G}$ as follows. The vertex set of $\hat{G}$ is defined to be
\[V(\hat{G}) = \bigcup_{i=1}^{k} V_i \,,\]
where $V_i = \{v_i ~|~ v \in V(G)\}$ is a copy of $V(G)$ identified with vertex $i \in V(H)$. We add an edge $\{u_i,v_j\}$ to $\hat{G}$ if and only if $\{i,j\} \in E(H)$ and $\{u,v\} \in E(G)$. Now it can easily be verified that the function $c: V(\hat{G}) \to V(H)$ given by $c(v_i) := i$ is an $H$-coloring of $\hat{G}$. Furthermore it is easy to see that \[\#\cphoms{H}{\hat{G}} = \#\homs{H}{G}\,,\]
which concludes the proof.
\end{proof}

\paragraph*{Reducing color-prescribed homomorphisms to color-prescribed induced subgraphs}
%
%We consider this part of the paper, together with the algebraic analysis of the alternating enumerator and $p$-edge-transitive graphs in Section~\ref{sec:alt_enum}, as our main technical contribution. \php{Better in intro?, feels kinda out of place here}
The reduction from color-prescribed homomorphisms to color-prescribed induced subgraphs requires the introduction of an $H$-colored variant of the framework of graph motif parameters, which was explicitly introduced in~\cite{CurticapeanDM17} and implicitly used in~\cite{ChenM16}.
More precisely, given an $H$-colored graph $G$ and a property $\Phi$, we will express $\#\cpindsubs{\Phi}{G}$ as a linear combination of color-prescribed homomorphisms, that is, terms of the form $\#\cphoms{H[S]}{G}$.
In a first step, we show complexity monotonicity for linear combinations of color-prescribed homomorphisms. While this property allows a quite simple proof, a second step, in which we study the coefficient of $\#\cphoms{H}{G}$ requires a thorough understanding of the alternating enumerator of $\Phi$ and $H$.
In case of $p$-edge-transitive graphs, the latter is provided by Lemma~\ref{lem:alt_enum}.

\noindent We start by introducing a colored variant of the tensor product of graphs (see e.g. Chapter~5.4.2 in~\cite{Lovasz12}). Given two $H$-colored graphs $G$ and $\hat{G}$ with colorings $c$ and $\hat{c}$ we define their \emph{color-prescribed} tensor product $G\timesc \hat{G}$ as the graph with vertices $V=\{(v,\hat{v})\in V(G)\times V(\hat{G}) ~|~c(v)=\hat{c}(\hat{v})\}$
and edges between two vertices $(v,\hat{v})$ and $(u,\hat{u})$ if and only if $\{v,u\} \in E(G)$ and $\{\hat{v},\hat{u}\} \in E(\hat{G})$.
The next lemma states that $\#\cphom$ is linear with respect to $\timesc$. 

\begin{lemma}\label{lem:homomorphic}
Let $H$ be a graph, let $F$ be an edge-subgraph of $H$ and let $G$ and $\hat{G}$ be $H$-colored. Then we have that
\begin{equation*}
\#\cphoms{F}{G \timesc \hat{G}} = \#\cphoms{F}{G} \cdot \#\cphoms{F}{\hat{G}}\,.
\end{equation*}
\end{lemma}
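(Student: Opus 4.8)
The plan is to unfold the definitions on both sides and exhibit an explicit bijection between the set of color-prescribed homomorphisms from $F$ to the color-prescribed tensor product $G \timesc \hat{G}$ and the Cartesian product $\cphoms{F}{G} \times \cphoms{F}{\hat{G}}$. Recall that $V(F) = V(H)$ and that the $H$-coloring of $F$ is the identity, so a map $h \colon V(F) \to V(G \timesc \hat{G})$ is color-prescribed precisely when, writing $h(v) = (h_1(v), h_2(v))$, the coloring of $G \timesc \hat{G}$ sends $h(v)$ to $v$ for every $v \in V(H)$. By the definition of the tensor product the coloring of $(a,\hat a)$ is the common value $c(a) = \hat c(\hat a)$; hence being color-prescribed is equivalent to $c(h_1(v)) = v$ and $\hat c(h_2(v)) = v$ for all $v$, i.e.\ to $h_1$ and $h_2$ each being color-prescribed as maps into $G$ and $\hat G$ respectively.

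First I would set up the map $\Psi \colon h \mapsto (h_1, h_2)$ and its inverse $(h_1,h_2) \mapsto h$ where $h(v) := (h_1(v), h_2(v))$. The only thing to check is that $\Psi$ and its inverse are well-defined, i.e.\ that $h$ is a homomorphism $F \to G \timesc \hat{G}$ if and only if both $h_1$ and $h_2$ are homomorphisms from $F$. For the forward direction, note first that since $h$ is color-prescribed it is injective (distinct vertices of $H$ get distinct colors, hence distinct images), so $(h_1(v), h_2(v))$ is a legitimate vertex of $V(G\timesc\hat G)$ for each $v$ — here we use that $c(h_1(v)) = v = \hat c(h_2(v))$, so the pair satisfies the defining color constraint of the vertex set of $G \timesc \hat G$. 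Then for an edge $\{u,v\} \in E(F)$, the condition $\{h(u), h(v)\} \in E(G \timesc \hat G)$ is by definition of $\timesc$ exactly the conjunction $\{h_1(u), h_1(v)\} \in E(G)$ and $\{h_2(u), h_2(v)\} \in E(\hat G)$; this gives both directions of the edge-preservation equivalence simultaneously. Combining this with the color-prescribed equivalence from the previous paragraph shows $\Psi$ is a bijection, and taking cardinalities yields the claimed product formula.

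I do not expect a genuine obstacle here; the statement is essentially a bookkeeping exercise, and the one subtle point worth flagging explicitly in the write-up is that the color-prescribed condition is what guarantees both that the pairs $(h_1(v), h_2(v))$ land in $V(G \timesc \hat G)$ (the color-compatibility of the two coordinates) and that $\Psi$ is injective. Everything else is a direct translation through the definitions of $\timesc$, of color-prescribed homomorphism, and of the identity $H$-coloring on edge-subgraphs of $H$.
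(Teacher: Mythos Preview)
Your proposal is correct and follows exactly the same approach as the paper: the paper's proof is a one-liner asserting that $b(h,\hat h)(v) := (h(v),\hat h(v))$ is a well-defined bijection from $\cphoms{F}{G}\times\cphoms{F}{\hat G}$ to $\cphoms{F}{G\timesc\hat G}$, and you have simply written out the routine verification of this (describing the inverse map $\Psi$). The aside about injectivity of $h$ is harmless but unnecessary---in the forward direction $(h_1(v),h_2(v))=h(v)$ is a vertex of $G\timesc\hat G$ by assumption, and in the reverse direction membership follows, as you correctly note, from the color constraint $c(h_1(v))=v=\hat c(h_2(v))$.
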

\begin{proof}
It can easily be verified that the function $b(h,\hat{h})(v) := (h(v),\hat{h}(v))$ that assigns elements in $\cphoms{F}{G} \times \cphoms{F}{\hat{G}}$ to elements in $\cphoms{F}{G \timesc \hat{G}}$ is a well-defined bijection.
\end{proof}

The proof of the complexity monotonicity property for color-prescribed homomorphisms (Lemma~\ref{lem:color_prescribed_monotonicity}) will require to solve a system of linear equations. The following lemma proves that the corresponding matrix is non-singular.
\begin{lemma}\label{lem:invertible}
Let $H$ be a graph and let $M$ be a quadratic matrix of size $2^{|E(H)|}$ such that the rows and columns are identified by the subsets of edges of $H$. Furthermore assume that the entries of $M$ are given by $M(S,T) :=  \#\cphoms{H[S]}{H[T]}$. Then $M$ is non-singular. This holds true even if $M$ is considered as a matrix over $\Z_p$, that is, the field with $p$ elements. In the latter case, the entries are taken modulo $p$.
\end{lemma}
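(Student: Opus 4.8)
The plan is to show that $M$ is triangular (after ordering the rows and columns appropriately) with nonzero diagonal entries, so that its determinant is a product of nonzero integers, hence nonzero, and moreover a product of integers each coprime to any fixed prime $p$, so that $M$ remains non-singular over $\Z_p$. First I would fix the partial order on subsets of $E(H)$ given by inclusion, and refine it to any linear order $\preceq$ in which $S \preceq T$ whenever $S \subseteq T$; this is possible since inclusion is a partial order on a finite set. I will argue that with rows and columns indexed in this order, $M(S,T) = \#\cphoms{H[S]}{H[T]}$ is "lower-triangular'' in the sense that $M(S,T) = 0$ unless $S \subseteq T$, and that the diagonal entries $M(S,S)$ are positive integers (in fact equal to $1$).

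The key observation is the following: a color-prescribed homomorphism $h$ from $H[S]$ to $H[T]$ is a map $h\colon V(H) \to V(H)$ with $c(h(v)) = v$ for all $v$, where $c$ is the identity coloring on $V(H)$; hence $h$ must be the identity on vertices. For $h$ to be a homomorphism from $H[S]$ to $H[T]$, every edge of $S$ must be mapped to an edge of $T$, i.e.\ every edge $\{u,v\} \in S$ must satisfy $\{h(u),h(v)\} = \{u,v\} \in T$. Therefore $\cphoms{H[S]}{H[T]}$ is nonempty if and only if $S \subseteq T$, and in that case it consists of exactly the single element $\mathrm{id}_{V(H)}$. Consequently $M(S,T) = 1$ if $S \subseteq T$ and $M(S,T) = 0$ otherwise. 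In particular $M(S,S) = 1$ for every $S$, and with respect to the linear order $\preceq$ refining inclusion, $M$ is lower unitriangular.

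It follows immediately that $\det M = \prod_{S \subseteq E(H)} M(S,S) = 1$, so $M$ is non-singular over $\Z$ (indeed over any ring, and over $\Q$). Reducing all entries modulo $p$ leaves the matrix lower unitriangular with all diagonal entries equal to $1 \in \Z_p$, so $\det M = 1 \neq 0$ in $\Z_p$ as well, and $M$ is non-singular over $\Z_p$. I do not expect any real obstacle here; the only point requiring a little care is the bookkeeping that the identity coloring on $V(H)$ forces color-prescribed homomorphisms to be the identity on vertices — once that is observed, the triangularity and the evaluation of the diagonal entries are immediate.
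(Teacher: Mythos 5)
Your proof is correct and follows the same approach as the paper: order rows and columns by a linear extension of set inclusion, observe that color-prescribedness forces every such homomorphism to be the identity on vertices, and conclude that $M$ is triangular with unit diagonal (hence non-singular over $\Z$ and over $\Z_p$). Your version is marginally more explicit than the paper's in that you determine every entry of $M$ to be $0$ or $1$ rather than only the diagonal and sub-diagonal entries, but this is the same argument.
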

\begin{proof}
We fix any linear extension $\lesssim$ of the subset inclusion relation on $E(H)$ and order the columns and rows of $M$ accordingly. We claim that $M$ is triangular. To see this we first observe that $M(S,S)= 1$ for every $S$, given by the identity homomorphism from $H[S]$ to $H[S]$ which is, of course, color-prescribed. Now consider $M(S,T)$ for some $T\neq S$ with $T \lesssim S$. It follows that there exists an edge $\{u,v\}$ in $S\setminus T$ since $\lesssim$ linearly extends subset inclusion. Now assume that there exists a color-prescribed homomorphism $h$ from~$H[S]$ to~$H[T]$. By color-prescribedness we have that $h(u)=u$ and $h(v)=v$, contradicting the fact that $h$ is a homomorphism and $\{u,v\} \notin T$. Hence $M(S,T) = 0$ and, consequently, $M$ is triangular.
\end{proof}

\noindent We are now prepared to prove the color-prescribed variant of complexity monotonicity.

\begin{lemma}[Complexity monotonicity]\label{lem:color_prescribed_monotonicity}
Let $H$ be a graph and let $a$ be a function from edge-subgraphs of $H$ to rationals. There exists an algorithm $\mathbb{A}$ that is given an $H$-colored graph $G$ as input and has oracle access to the function
\[\sum_{S \subseteq E(H)} a(H[S]) \cdot \#\cphoms{H[S]}{ \star}\,,\]
and computes $\#\cphoms{H[S]}{G}$ for all $S$ such that $a(H[S])\neq 0$ in time $f(|H|)\cdot |V(G)|$ where $f$ is a computable function. Furthermore, every oracle query $\hat{G}$ satisfies $|V(\hat{G})| \leq f(|H|)\cdot |V(G)|$.
\end{lemma}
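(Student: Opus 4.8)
The plan is to exploit the tensor-product identity from Lemma~\ref{lem:homomorphic} together with the non-singularity of the homomorphism matrix from Lemma~\ref{lem:invertible}, in the standard interpolation style of complexity monotonicity proofs. Write $\mathcal{T} = \{T \subseteq E(H) : a(H[T]) \neq 0\}$; this is a finite set depending only on $H$ and $a$. The goal is to recover each value $\#\cphoms{H[T]}{G}$ for $T \in \mathcal{T}$ from oracle calls to the linear combination $L(\star) := \sum_{S\subseteq E(H)} a(H[S])\cdot \#\cphoms{H[S]}{\star}$.

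First I would build, for each $j$ in a suitable range, the color-prescribed tensor power $G^{(j)} := H[E(H)]' \timesc \cdots$ — more precisely, I will take tensor products of $G$ with the fixed ``pattern'' graphs $H[T]$ themselves (each $H[T]$ is canonically $H$-colored by the identity on $V(H)$). Concretely, set $G_T := G \timesc H[T]$ for $T \subseteq E(H)$. By Lemma~\ref{lem:homomorphic}, for every edge-subgraph $H[S]$ we have $\#\cphoms{H[S]}{G_T} = \#\cphoms{H[S]}{G}\cdot \#\cphoms{H[S]}{H[T]} = \#\cphoms{H[S]}{G}\cdot M(S,T)$, where $M$ is exactly the matrix of Lemma~\ref{lem:invertible}. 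Hence querying the oracle on $G_T$ yields
\begin{equation*}
L(G_T) = \sum_{S\subseteq E(H)} a(H[S])\cdot M(S,T)\cdot \#\cphoms{H[S]}{G}\,.
\end{equation*}
Letting $T$ range over all subsets of $E(H)$ gives a system of $2^{|E(H)|}$ linear equations in the unknowns $x_S := a(H[S])\cdot \#\cphoms{H[S]}{G}$, with coefficient matrix $M^{\mathsf{T}}$ (entry $(T,S)$ equal to $M(S,T)$). Since $M$ is non-singular by Lemma~\ref{lem:invertible} — over $\mathbb{Q}$, and also over $\mathbb{Z}_p$ if one works modulo a prime — so is $M^{\mathsf{T}}$, and we can solve the system to obtain every $x_S$, and thus $\#\cphoms{H[S]}{G}$ whenever $a(H[S])\neq 0$, by dividing out the nonzero rational $a(H[S])$. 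All of this is done by Gaussian elimination on a matrix whose size depends only on $|E(H)| \le |H|^2$, hence only on $|H|$.

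For the running-time and query-size bookkeeping: each $G_T$ has vertex set a subset of $V(G)\times V(H)$, so $|V(G_T)| \le |V(H)|\cdot|V(G)| \le f(|H|)\cdot |V(G)|$ for an appropriate computable $f$ absorbing also the cost of building the $2^{|E(H)|}$ products and of the linear-algebra step; constructing each $G_T$ takes time polynomial in $|V(G)|$ with a constant depending only on $H$, there are at most $2^{|E(H)|}$ of them, and solving an $M$-sized linear system is a function of $|H|$ alone. This gives the claimed bound $f(|H|)\cdot|V(G)|$ (one may of course relax to $f(|H|)\cdot |V(G)|^{O(1)}$ and re-tighten; the linear bound follows since the only $|V(G)|$-dependence is the copy step).

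I do not expect a genuine obstacle here: the two ingredients (multiplicativity under $\timesc$ and invertibility of $M$) are exactly what is needed, and everything else is routine linear algebra and a size count. The one point to state carefully is that the identity-coloring makes each $H[T]$ a legitimate $H$-colored graph so that $G\timesc H[T]$ is defined and $\#\cphoms{H[S]}{H[T]} = M(S,T)$ holds with the coloring conventions of the preliminaries; and that the whole argument is field-agnostic, so it goes through verbatim over $\mathbb{Z}_p$, which is what the later modular-counting extension needs.
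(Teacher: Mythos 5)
Your proposal is correct and follows essentially the same route as the paper: both compute the oracle on the tensor products $G \timesc H[T]$ for all $T \subseteq E(H)$, use the multiplicativity of Lemma~\ref{lem:homomorphic} to rewrite the resulting values as a linear system in the unknowns $a(H[S])\cdot\#\cphoms{H[S]}{G}$ with coefficient matrix (a transpose of) the matrix $M$ from Lemma~\ref{lem:invertible}, and then solve by Gaussian elimination and divide out the nonzero coefficients. Your remark that $M$ being non-singular makes $M^{\mathsf T}$ non-singular, and that the argument is field-agnostic so it transfers to $\mathbb{Z}_p$, is exactly the observation the paper later records in Fact~\ref{fac:mod_fac}.
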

\begin{proof}
Using Lemma~\ref{lem:homomorphic} we have that for every $H$-colored graph $F$ it holds that
\begin{align}
~&\sum_{S \subseteq E(H)} a(H[S]) \cdot \#\cphoms{H[S]}{(G\timesc F)} \\ =&\sum_{S \subseteq E(H)} a(H[S]) \cdot \#\cphoms{H[S]}{G}\cdot \#\cphoms{H[S]}{F}\,,
\end{align}
which we can evaluate for $F=H[\emptyset],\dots,H[E(H)]$. This induces a system of linear equations and the corresponding matrix is non-singular by Lemma~\ref{lem:invertible}. Consequently, the numbers $a(H[S]) \cdot \#\cphoms{H[S]}{G}$ are uniquely determined and can be computed by solving the system using Gaussian elimination. Finally, we obtain the numbers $\#\cphoms{H[S]}{G}$ by multiplying with $a(H[S])^{-1}$ whenever $a(H[S]) \neq 0$.
\end{proof}

It remains to express the number of color-prescribed induced subgraphs that satisfy a property $\Phi$ as a linear combination of color-prescribed homomorphisms.

\begin{lemma}\label{lem:cpindprops_to_cphoms}
Let $H$ be a graph, let $\Phi$ be a graph property and let $G$ be an $H$-colored graph. Then it holds that
\begin{equation*}
\#\cpindsubs{\Phi}{G} = \sum_{S \in E(H)} \Phi(H[S])  \sum_{J \subseteq E(H)\setminus S} (-1)^{\#J} \cdot \#\cphoms{[H[S \cup J]}{G} \,.
\end{equation*}
Moreover, the absolute values of the coefficient of $\#\cphoms{H}{G}$ and
$\hat{\chi}(\Phi,H)$ are equal.
\end{lemma}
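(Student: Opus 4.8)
The plan is to start from the defining count of color-prescribed induced subgraphs satisfying $\Phi$, stratify it according to the isomorphism type (equivalently, the edge set) of the induced subgraph, and then pass from counting induced subgraphs to counting (color-prescribed) subgraphs via an inclusion--exclusion identity over added edges. Concretely, a color-prescribed induced subgraph of $G$ on $|V(H)|$ vertices is exactly the image of a color-prescribed strong embedding of $H[T]$ into $G$ for a unique $T \subseteq E(H)$, namely $T$ is the edge set realized on that vertex set. Hence $\#\cpindsubs{\Phi}{G} = \sum_{T \subseteq E(H)} \Phi(H[T]) \cdot \#\cpstrembs{H[T]}{G}$, where we use that a color-prescribed homomorphism is automatically a bijection onto its image-by-color-classes, so strong embeddings and induced subgraphs coincide up to the (here trivial) automorphism bookkeeping.

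The next step is the standard Möbius-type relation between color-prescribed strong embeddings and color-prescribed (not necessarily induced) embeddings, i.e. color-prescribed homomorphisms. Since every color-prescribed homomorphism from $H[S]$ to $G$ has an image whose realized edge set is some $T \supseteq S$, we get $\#\cphoms{H[S]}{G} = \sum_{T \supseteq S} \#\cpstrembs{H[T]}{G}$, the sum ranging over $T \subseteq E(H)$ with $S \subseteq T$. Möbius inversion over the Boolean lattice of subsets of $E(H)$ then yields $\#\cpstrembs{H[T]}{G} = \sum_{S \supseteq T} (-1)^{\#S - \#T} \#\cphoms{H[S]}{G}$, which after reindexing $S = T \cup J$ with $J \subseteq E(H) \setminus T$ becomes $\sum_{J \subseteq E(H)\setminus T} (-1)^{\#J} \#\cphoms{H[T\cup J]}{G}$. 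Substituting this into the expression for $\#\cpindsubs{\Phi}{G}$ from the previous paragraph and renaming the summation variable $T$ back to $S$ gives exactly the claimed identity. (I would double-check the sign: the $(-1)^{\#J}$ with $J = S \setminus T$ matches $(-1)^{\#S - \#T}$, as required.)

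For the ``moreover'' part, I would collect, in the double sum, all contributions to the monomial $\#\cphoms{H}{G}$, i.e. all pairs $(S, J)$ with $S \cup J = E(H)$; since $J \subseteq E(H) \setminus S$ this forces $J = E(H) \setminus S$, so the coefficient of $\#\cphoms{H}{G}$ is $\sum_{S \subseteq E(H)} \Phi(H[S]) \cdot (-1)^{\#(E(H)\setminus S)} = (-1)^{\#E(H)} \sum_{S \subseteq E(H)} \Phi(H[S]) (-1)^{\#S} = (-1)^{\#E(H)} \hat{\chi}(\Phi,H)$. Taking absolute values gives the claim. I expect the only real subtlety — rather than an obstacle — to be being careful that the color-prescribed setting genuinely collapses the automorphism factors that appear in the uncolored graph-motif calculus of \cite{CurticapeanDM17}, so that the transfer matrices here are exactly $0/1$ incidence-type sums over supersets and Möbius inversion applies cleanly; this is precisely what makes the colored variant cleaner than the uncolored one, and it is already implicit in Lemma~\ref{lem:invertible}.
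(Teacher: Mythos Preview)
Your proposal is correct and follows essentially the same approach as the paper: you stratify by the realized edge set, then pass from induced subgraphs/strong embeddings to homomorphisms via inclusion--exclusion over supersets in the Boolean lattice, and finally collect the coefficient of $\#\cphoms{H}{G}$ exactly as the paper does to obtain $(-1)^{\#E(H)}\hat{\chi}(\Phi,H)$. The only cosmetic difference is that the paper phrases the inversion step as an explicit inclusion--exclusion over ``forbidden extra edges'' (its Claim~\ref{clm:cpindsubs_to_cpsubs}) and routes through $\cpsubs{\cdot}{G}$ before identifying it with $\cphoms{\cdot}{G}$, whereas you invoke M\"obius inversion on the identity $\#\cphoms{H[S]}{G}=\sum_{T\supseteq S}\#\cpstrembs{H[T]}{G}$ directly; these are the same computation.
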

\begin{proof}
We start by establishing the following claim.
\begin{claim}\label{clm:cpindsubs_to_cpsubs}
Let $H$ be graph, let $S\subseteq E(H)$ and let $G$ be an $H$-colored graph. Then we have that
\begin{equation*}
\#\cpindsubs{H[S]}{G} = \sum_{J\subseteq E(H)\setminus S} (-1)^{\#J} \cdot \#\cpsubs{H[S\cup J]}{G}\,.
\end{equation*}
\end{claim}
\begin{claimproof}
It holds that
\begin{equation}
\cpindsubs{H[S]}{G} = \cpsubs{H[S]}{G} \setminus \left(\bigcup_{e \in E(H)\setminus S} \cpsubs{H[S\cup\{e\}]}{G} \right)\,,
\end{equation}
and hence, by inclusion-exclusion,
\begin{align}
~&\#\cpindsubs{H[S]}{G})\\
=~& \#\cpsubs{H[S]}{G} - \sum_{\emptyset \subsetneq J \subseteq E(H)\setminus S} (-1)^{\#J-1}\cdot  \#\cpsubs{H[S\cup J]}{G} \\
=~&\sum_{J\subseteq E(H)\setminus S} (-1)^{\#J} \cdot \#\cpsubs{H[S\cup J]}{G}\,.
\end{align}
\end{claimproof}
Now we have that
\begin{align}
\#\cpindsubs{\Phi}{G} =& \sum_{S \in E(H)} \Phi(H[S]) \cdot \#\cpindsubs{H[S]}{G}\label{eq:t1}\\
=& \sum_{S \in E(H)} \Phi(H[S])  \sum_{J\subseteq E(H)\setminus S} (-1)^{\#J} \cdot \#\cpsubs{H[S\cup J]}{G}\label{eq:t2}\\
=& \sum_{S \in E(H)} \Phi(H[S]) \sum_{J\subseteq E(H)\setminus S} (-1)^{\#J} \cdot \#\cphoms{H[S\cup J]}{G}\label{eq:t3}
\end{align}
where (\ref{eq:t1}) follows from the definition of $\cpindsubs{\Phi}{G}$, (\ref{eq:t2}) is Claim~\ref{clm:cpindsubs_to_cpsubs} and (\ref{eq:t3}) holds as color-prescribed homomorphisms are injective and a color-prescribed embedding is uniquely identified by its image. Collecting for the coefficient of $\#\cphoms{H}{G}$ yields
\begin{align}
\sum_{S \in E(H)} \Phi(H[S])\cdot (-1)^{\#E(H)-\#S} = (-1)^{\#E(H)}\cdot \hat{\chi}(\Phi,H) \,.
\end{align}
\end{proof}
The application of the complexity monotonicity property for color-prescribed homomorphisms (Lemma~\ref{lem:color_prescribed_monotonicity}) requires non-zero coefficients. However, this can be guaranteed for the coefficient of interest in case of $p$-edge-transitive graphs as shown in Section~\ref{sec:alt_enum}. Formally, the reduction is constructed as follows.

\begin{lemma}\label{lem:main_lem}
Let $\Phi$ be a graph property and let $H$ be a $p$-edge-transitive graph such that $\Phi(H[\emptyset]) \neq \Phi(H)$. There exists an algorithm $\mathbb{A}$ that is given an $H$-colored graph $G$ as input and has oracle access to the function $\#\cpindsubs{\Phi}{\star}$ and computes $\#\cphoms{H}{G}$ in time $f(|H|)\cdot |V(G)|$ where $f$ is a computable function. Furthermore, every oracle query $\hat{G}$ is $H$-colored as well and satisfies $|V(\hat{G})| \leq f(|H|)\cdot |V(G)|$.
\end{lemma}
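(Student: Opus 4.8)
The plan is to combine the two ingredients prepared earlier: the explicit expansion of $\#\cpindsubs{\Phi}{\star}$ as a linear combination of color-prescribed homomorphism counts (Lemma~\ref{lem:cpindprops_to_cphoms}), and the complexity monotonicity property for such linear combinations (Lemma~\ref{lem:color_prescribed_monotonicity}). The only subtlety is to check that the coefficient attached to the ``hardest'' term $\#\cphoms{H}{G}$ is nonzero, so that Lemma~\ref{lem:color_prescribed_monotonicity} can actually be invoked to extract it.

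First I would rewrite, using Lemma~\ref{lem:cpindprops_to_cphoms}, the quantity $\#\cpindsubs{\Phi}{G}$ in the form $\sum_{S\subseteq E(H)} \hat a_\Phi(H[S])\cdot \#\cphoms{H[S]}{G}$ for a suitable function $\hat a_\Phi$ of finite support depending only on $\Phi$; concretely, grouping the double sum of Lemma~\ref{lem:cpindprops_to_cphoms} by the set $R = S\cup J$, the coefficient of $\#\cphoms{H[R]}{G}$ is $\hat a_\Phi(H[R]) = \sum_{S\subseteq R}\Phi(H[S])\cdot(-1)^{\#R-\#S}$. Note $\hat a_\Phi$ indeed depends only on $\Phi$ and on (the isomorphism type of) the argument, not on $G$. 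In particular the coefficient of the top term $R = E(H)$ is $\hat a_\Phi(H) = (-1)^{\#E(H)}\hat\chi(\Phi,H)$, matching the last sentence of Lemma~\ref{lem:cpindprops_to_cphoms}.

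Next I would observe that the map $G\mapsto \#\cpindsubs{\Phi}{G}$, restricted to $H$-colored graphs, is \emph{exactly} a function of the form required as an oracle in Lemma~\ref{lem:color_prescribed_monotonicity}, namely $\sum_{S\subseteq E(H)} a(H[S])\cdot\#\cphoms{H[S]}{\star}$ with $a := \hat a_\Phi$. Hence, given oracle access to $\#\cpindsubs{\Phi}{\star}$ and an $H$-colored input graph $G$, the algorithm $\mathbb{A}$ of Lemma~\ref{lem:color_prescribed_monotonicity} computes $\#\cphoms{H[S]}{G}$ for every $S$ with $\hat a_\Phi(H[S])\neq 0$, in time $f(|H|)\cdot|V(G)|$, and every oracle query $\hat G$ it makes is $H$-colored (it is of the form $G\timesc F$ for an $H$-colored $F$, and the color-prescribed tensor product of $H$-colored graphs is again $H$-colored) with $|V(\hat G)|\le f(|H|)\cdot|V(G)|$. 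It therefore suffices to check that $S = E(H)$ is among the sets with nonzero coefficient, i.e.\ that $\hat a_\Phi(H)\neq 0$. Since $\hat a_\Phi(H) = (-1)^{\#E(H)}\hat\chi(\Phi,H)$ and $H$ is $p$-edge-transitive with $\Phi(H[\emptyset])\neq\Phi(H)$, Lemma~\ref{lem:alt_enum} gives $\hat\chi(\Phi,H) = (\pm1)\bmod p\neq 0$, so $\hat a_\Phi(H)\neq 0$. Feeding this into the conclusion of Lemma~\ref{lem:color_prescribed_monotonicity} yields that $\mathbb{A}$ computes $\#\cphoms{H}{G}$ within the claimed time bound and with the claimed oracle-query bounds, completing the reduction.

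The main (and really only) obstacle is the nonvanishing of the leading coefficient $\hat a_\Phi(H)$; everything else is bookkeeping, reindexing the sum of Lemma~\ref{lem:cpindprops_to_cphoms}, and a direct appeal to Lemma~\ref{lem:color_prescribed_monotonicity}. That obstacle is precisely what Lemma~\ref{lem:alt_enum} (the Rivest--Vuillemin argument via $p$-Sylow subgroups) was set up to resolve, so the argument goes through cleanly. One should also remark, for later use in the modular-counting statements, that since $\hat\chi(\Phi,H)\equiv\pm1\pmod p$ the coefficient $\hat a_\Phi(H)$ is a unit in $\Z_p$ and the matrix in Lemma~\ref{lem:invertible} is nonsingular over $\Z_p$, so the whole reduction survives reduction modulo $p$; but for the present statement only the integer version is needed.
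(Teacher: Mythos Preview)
Your proposal is correct and follows essentially the same approach as the paper: express $\#\cpindsubs{\Phi}{\star}$ as a linear combination of color-prescribed homomorphism counts via Lemma~\ref{lem:cpindprops_to_cphoms}, note that the coefficient of $\#\cphoms{H}{\star}$ equals $(-1)^{\#E(H)}\hat\chi(\Phi,H)$, invoke Lemma~\ref{lem:alt_enum} to see this is nonzero, and then apply complexity monotonicity (Lemma~\ref{lem:color_prescribed_monotonicity}). Your write-up simply spells out the reindexing and the $H$-coloredness of the oracle queries in more detail than the paper does.
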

\begin{proof}
Using Lemma~\ref{lem:cpindprops_to_cphoms} we can express $\#\cpindsubs{\Phi}{\star}$ as a linear combination of color-prescribed homomorphisms. In particular, the coefficient of $\#\cphoms{H}{\star}$ is $(\pm 1) \cdot \hat{\chi}(\Phi,H)$ and by Lemma~\ref{lem:alt_enum} we have that this number is non-zero whenever $H$ is $p$-edge-transitive and $\Phi(H[\emptyset]) \neq \Phi(H)$. Hence we can use the algorithm from Lemma~\ref{lem:color_prescribed_monotonicity} to compute $\#\cphoms{H}{G}$ in the desired running time.
\end{proof}

\paragraph*{Reducing color-prescribed induced subgraphs to uncolored induced subgraphs}
The last part of the reduction sequence allows us to get rid of the colors. More precisely, we will reduce the problem of counting color-prescribed induced subgraphs of an $H$-colored graph to the problem of counting uncolored induced subgraphs of size $|V(H)|$ in a graph, both with respect to some property $\Phi$. The proof is a straightforward application of the inclusion-exclusion principle.

\begin{lemma}\label{lem:col_ind_sub_to_ind_sub}
Let $\Phi$ be a graph property and let $H$ be a graph with $k$ vertices. There exists an algorithm $\mathbb{A}$ that is given an $H$-colored graph $G$ as input and has oracle access to the function $\#\indsubs{\Phi,k}{\star}$ and computes $\#\cpindsubs{\Phi}{G}$ in time $f(k)\cdot |V(G)|$ where $f$ is a computable function. Furthermore, every oracle query $\hat{G}$ satisfies $|V(\hat{G})| \leq |V(G)|$ and, in particular, $\hat{G}$ allows an $H$-coloring as well.
\end{lemma}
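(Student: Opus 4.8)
The plan is to reduce counting color‑prescribed induced subgraphs (with property $\Phi$) in an $H$‑colored graph $G$ to counting \emph{uncolored} induced subgraphs of size $k = |V(H)|$ that satisfy $\Phi$, via a standard inclusion–exclusion over color classes. First I would set up notation: write $c\colon V(G)\to V(H)$ for the given $H$‑coloring and, for each subset $A\subseteq V(H)$, let $G[A]$ denote the induced subgraph of $G$ on the preimage $c^{-1}(A)$ of the color classes in $A$. The key observation is that each $G[A]$ is again $H$‑colored (by the restriction of $c$), so it is itself a legitimate input, and $|V(G[A])|\le |V(G)|$; moreover the oracle for $\#\indsubs{\Phi,k}{\star}$ can be applied to each $G[A]$, counting \emph{all} size‑$k$ induced subgraphs of $G[A]$ satisfying $\Phi$, with no reference to the coloring.

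The main identity I would establish is
\begin{equation*}
\#\cpindsubs{\Phi}{G} \;=\; \sum_{A\subseteq V(H)} (-1)^{|V(H)|-|A|}\cdot \#\indsubs{\Phi,k}{G[A]}\,.
\end{equation*}
To prove this, fix a size‑$k$ vertex subset $U\subseteq V(G)$ with $\Phi(G[U])=1$; such a $U$ contributes to $\#\indsubs{\Phi,k}{G[A]}$ exactly when $c(U)\subseteq A$, and its total contribution to the right‑hand side is $\Phi(G[U])\cdot\sum_{c(U)\subseteq A\subseteq V(H)}(-1)^{k-|A|}$. This alternating sum over supersets of $c(U)$ vanishes unless $c(U)=V(H)$, in which case it equals $1$. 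But $c(U)=V(H)$ together with $|U|=k=|V(H)|$ is precisely the condition that $c$ restricted to $U$ is a bijection onto $V(H)$, i.e.\ that $U$ is the image of a color‑prescribed strong embedding $H\to G$ (here I use that $c$ is a homomorphism, so $c|_U$ being a bijection forces the needed adjacency condition for $U$ to be counted in $\cpindsubs{\Phi}{G}$ exactly when $\Phi(G[U])=1$). Hence the surviving terms on the right are exactly the color‑prescribed induced subgraphs of size $|V(H)|$ satisfying $\Phi$, which is $\#\cpindsubs{\Phi}{G}$.

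The algorithm $\mathbb{A}$ then simply enumerates all $2^k$ subsets $A\subseteq V(H)$, constructs $G[A]$ in time $O(|V(G)|)$ each, queries the oracle on $\#\indsubs{\Phi,k}{G[A]}$, and outputs the signed sum; the total running time is $2^k\cdot |V(G)|^{O(1)}$, which is of the claimed form $f(k)\cdot|V(G)|$ after absorbing polynomial factors into $f$, and every query graph $\hat G = G[A]$ satisfies $|V(\hat G)|\le|V(G)|$ and inherits an $H$‑coloring, as required. I do not expect any real obstacle here: the only point that needs a little care is the bookkeeping in the inclusion–exclusion — making sure the sign convention is correct and that the ``surviving'' subsets $U$ are matched bijectively with color‑prescribed induced subgraphs satisfying $\Phi$ — but this is exactly the routine argument already used implicitly in the colored graph motif framework of~\cite{CurticapeanDM17}, so the proof is short.
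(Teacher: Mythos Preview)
Your proof is correct and follows the same inclusion--exclusion argument as the paper, merely reindexed: you sum over the retained color set $A$ with sign $(-1)^{k-|A|}$, whereas the paper sums over the deleted color set $J=V(H)\setminus A$ with sign $(-1)^{|J|}$, and your $G[A]$ is the paper's $G_J$. One small wording slip: a colorful $U$ is the image of a color-prescribed strong embedding of some edge-subgraph $H[S]$ rather than of $H$ itself, but your parenthetical correctly pinpoints that $c$ being a homomorphism is exactly what forces $S\subseteq E(H)$, so the identification of the surviving terms with $\cpindsubs{\Phi}{G}$ is sound.
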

\begin{proof}
It will be convenient to assume that $V(H)=[k]$. We first check whether the $H$-coloring $c$ of $G$ is surjective. If this is not the case then there exists some vertex $i\in V(H)$ such that $i \notin \mathsf{im}(c)$ and hence there is no color-prescribed induced subgraph of $G$, so $\mathbb{A}$ can just output $0$. Otherwise, the $H$-coloring of $G$ induces a partition of $V(G)$ in $k$ many non-empty and pairwise disjoint subsets, each associated with some ``color''~$i \in V(H)$. This allows us to equivalently express $\cpindsubs{\Phi}{G}$ in terms of vertex-colorful induced subgraphs:
\begin{equation}
\cpindsubs{\Phi}{G} = \left\lbrace S \subseteq \binom{V(G)}{k}~\middle|~c(S)=[k] ~\wedge~ \Phi(G[S]) = 1\right\rbrace
\end{equation}
By the principle of inclusion and exclusion we obtain that
\begin{equation}
\#\cpindsubs{\Phi}{G}=\sum_{J\subseteq [k]} (-1)^{\#J}\cdot \#\indsubs{\Phi,k}{G_J}\,,
\end{equation}
where $G_J$ is the graph obtained from $G$ by deleting all vertices that are colored with some color in $J$. Hence we can compute $\#\cpindsubs{\Phi}{G}$ using $2^k$ oracle calls. Finally, we observe that $H$-colored graphs are closed under the removal of vertices and therefore every oracle query $G_J$ allows an $H$-coloring.
\end{proof}

\section{Non-trivial monotone properties on bipartite graphs}\label{sec:apply_bipartite}
In the last part of the paper, we apply the algebraic approach which was laid out in the preceding sections to bipartite graph properties. This will allow us to prove our main result. To this end, we say that a set $\mathcal{K}\subseteq \N$ is \emph{dense} if there exists a constant $c$ such that for every $k'\in \N$ there exists $k \in \mathcal{K}$ such that $k' \leq k \leq ck'$. Furthermore, we write $\mathsf{IS}_k$ for the graph with $k$ isolated vertices. The following theorem is obtained by invoking the reduction sequence~(\ref{eq:red_sequence}) to complete bipartite graphs $K_{t,t}$ for prime powers $t=p^k$, which are $p$-edge-transitive (see Section~\ref{sec:alt_enum}). The extension to modular counting can be found in the appendix.
\begin{theorem}[Theorem~\ref{thm:intro_new} restated]\label{thm:main_result}
Let $\Phi$ be a computable graph property and let $\mathcal{K}$ be the set of all prime powers $t$ such that $\Phi(\mathsf{IS}_{2t}) \neq \Phi(K_{t,t})$. If $\mathcal{K}$ is infinite then $\#\indsubsprob(\Phi)$ is $\#\W{1}$ hard. If additionally $\mathcal{K}$ is dense then it cannot be solved in time $f(k)\cdot n^{o(k)}$ for any computable function $f$ unless ETH fails. This holds true even if the input graphs to $\#\indsubsprob(\Phi)$ are restricted to be bipartite.
\end{theorem}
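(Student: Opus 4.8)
\textbf{Proof plan for Theorem~\ref{thm:main_result}.}
The plan is to assemble the reduction chain~(\ref{eq:red_sequence}) with the class $\mathcal{H}$ chosen to be the family of complete bipartite graphs $\{K_{t,t} : t \in \mathcal{K}\}$, and then to transfer the known hardness of $\#\homsprob(\mathcal{H})$. First I would record that for a prime power $t = p^k$, the graph $K_{t,t}$ has $t^2 = p^{2k}$ edges and is edge-transitive (its automorphism group, containing $S_t \times S_t$, acts transitively on edges), hence it is $p$-edge-transitive in the sense of Section~\ref{sec:alt_enum}. Next I would observe that $K_{t,t}[\emptyset] = \mathsf{IS}_{2t}$, so the defining condition $t \in \mathcal{K}$ is exactly the hypothesis $\Phi(K_{t,t}[\emptyset]) \neq \Phi(K_{t,t})$ needed to invoke Lemma~\ref{lem:main_lem}. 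Therefore the composite reduction $\#\homsprob(\mathcal{H}) \fptred \#\cphomsprob(\mathcal{H}) \fptred \#\cpindsubsprob(\Phi) \fptred \#\indsubsprob(\Phi)$ from Lemmas~\ref{lem:hom_to_colhom}, \ref{lem:main_lem} and~\ref{lem:col_ind_sub_to_ind_sub} applies whenever the relevant $K_{t,t}$ appears in $\mathcal{H}$; note that Lemma~\ref{lem:col_ind_sub_to_ind_sub} only produces queries on (sub)graphs of an $H$-colored graph, and since $H = K_{t,t}$ is bipartite any $H$-colored graph is bipartite, giving the ``bipartite inputs'' strengthening.

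For the $\#\W{1}$-hardness claim (when $\mathcal{K}$ is infinite), I would invoke the Dalmau--Jonsson dichotomy~\cite{DalmauJ04}: $\#\homsprob(\mathcal{H})$ is $\#\W{1}$-hard as soon as $\mathcal{H}$ is recursively enumerable and of unbounded treewidth. Recursive enumerability of $\{K_{t,t} : t \in \mathcal{K}\}$ follows from $\Phi$ being computable (one enumerates prime powers $t$ and tests the condition $\Phi(\mathsf{IS}_{2t}) \neq \Phi(K_{t,t})$). Unbounded treewidth follows because $\mathcal{K}$ infinite means the $t$'s are unbounded, and $\tw(K_{t,t}) = t$ (a standard fact; $K_{t,t}$ contains a $t \times t$ grid minor, or directly has a balanced separator argument forcing treewidth $t$). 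Composing the Dalmau--Jonsson hardness with the reduction chain gives $\#\W{1}$-hardness of $\#\indsubsprob(\Phi)$, even restricted to bipartite inputs.

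For the ETH lower bound (when $\mathcal{K}$ is additionally dense), I would track the parameter blow-up through the chain. Each of the three reductions is an fpt Turing reduction in which every oracle query $\hat G$ has $|V(\hat G)| \leq f(|V(H)|)\cdot |V(G)|^{O(1)}$ and parameter bounded by a function of $|V(H)| = 2t$; in particular the parameter of the final $\#\indsubsprob(\Phi)$ instance is $k = 2t = \Theta(\sqrt{p^{2k}})$, linear in the number of vertices of the source graph $H = K_{t,t}$. It is known (via Dalmau--Jonsson together with the lower bounds for $\#\clique$~\cite{Chenetal05,Chenetal06}) that $\#\homsprob(\mathcal{H})$ for a dense, unbounded-treewidth class $\mathcal{H}$ cannot be solved in time $f(k)\cdot n^{o(k/\log k)}$ — or, using the $K_{t,t}$ structure directly, in time $f(t)\cdot n^{o(t)}$ — unless ETH fails; density is exactly what rules out the $\log$ loss by guaranteeing that for every target parameter value there is a $K_{t,t} \in \mathcal{H}$ of comparable size, so one can reduce counting $k'$-cliques in an $N$-vertex graph to a $\#\homsprob(\mathcal{H})$ instance on $N^{O(1)}$ vertices with a pattern of $\Theta(k')$ vertices. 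Propagating this through the reduction chain, and using that the parameter only changes by a constant factor ($k = 2t$) while the instance size only grows polynomially, yields that $\#\indsubsprob(\Phi)$ cannot be solved in time $f(k)\cdot n^{o(k)}$ unless ETH fails. I expect the main obstacle to be precisely this last bookkeeping: one must verify that the linear relationship $k = 2t$ between the final parameter and the pattern size, combined with the density of $\mathcal{K}$ and the polynomial instance blow-ups, is enough to preserve the clean $n^{o(k)}$ form of the lower bound (rather than an $n^{o(k/\log k)}$ form); the key point is that picking, for each target clique size $k'$, a prime power $t \in \mathcal{K}$ with $k' \le t \le c k'$ lets us embed a $k'$-clique-counting instance into a $K_{t,t}$-homomorphism-counting instance with only constant-factor parameter distortion.
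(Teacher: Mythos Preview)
Your plan is essentially the paper's proof: the same class $\mathcal{H}=\{K_{t,t}:t\in\mathcal{K}\}$, the same verification that $K_{t,t}$ is $p$-edge-transitive with $K_{t,t}[\emptyset]=\mathsf{IS}_{2t}$, the same reduction chain~(\ref{eq:red_sequence}), the same appeal to Dalmau--Jonsson for $\#\W{1}$-hardness, and the same observation that all final oracle queries are $K_{t,t}$-colorable and hence bipartite.

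The one place where you hand-wave and the paper does real work is exactly the spot you flag as ``the main obstacle''. For the ETH bound the paper does \emph{not} rely on a black-box ETH lower bound for $\#\homsprob(\mathcal{H})$; instead it reduces directly from (decision) $\clique$: given $(G,k)$, pad $G$ with $\ell-k$ universal vertices for some $\ell\in\mathcal{K}\cap[k,ck]$ (density), and then apply the explicit parsimonious construction of Lemma~\ref{lem:parsi_biclique} to turn $\ell$-clique counting in $\hat G$ into $\#\cphoms{K_{\ell,\ell}}{G'}$ on a graph with $O(\ell\cdot|V(G)|)$ vertices. This bypasses Lemma~\ref{lem:hom_to_colhom} entirely for the ETH part and feeds straight into Lemmas~\ref{lem:main_lem} and~\ref{lem:col_ind_sub_to_ind_sub}. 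Your sentence ``one can reduce counting $k'$-cliques \ldots\ to a $\#\homsprob(\mathcal{H})$ instance \ldots\ with a pattern of $\Theta(k')$ vertices'' is precisely this lemma, and without it (or an equivalent gadget) the clean $n^{o(k)}$ bound does not follow from Dalmau--Jonsson-style statements, which in general only give $n^{o(\tw/\log\tw)}$. So your outline is correct, but to complete it you need to supply the concrete clique-to-biclique reduction rather than cite it as known.
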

While $\#\W{1}$-hardness will follow by the classification of Dalmau and Jonsson~\cite{DalmauJ04}, hardness under ETH requires a tight reduction from counting cliques, which we will present first. In particular we use a trick inspired by Lemma~1.11 in~\cite{Curticapean15} to make the reduction parsimonious which is required for the extension to modular counting in the subsequent section.

\begin{lemma}\label{lem:parsi_biclique}
There exists an algorithm that, given a positive integer $\ell>1$ and a graph $G$ with $n$ vertices, computes in time $O(\ell n)$ a $K_{\ell,\ell}$-colored graph $G'$ with at most $O(\ell n)$ vertices such that the number of cliques of size $\ell$ in $G$ equals $\#\cphoms{K_{\ell,\ell}}{G'}$.
\end{lemma}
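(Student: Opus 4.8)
The plan is to take the clique $K_\ell$ in $G$ and encode a clique as a color-prescribed homomorphism into a bipartite gadget. The key idea is that a clique on $\ell$ vertices $\{v_1,\dots,v_\ell\}$ in $G$ can be identified with the data of two copies of this vertex set together with the requirement that every cross-pair (one vertex from the left copy, one from the right copy) is either equal or adjacent. Concretely, I would let $K_{\ell,\ell}$ have vertex set $L \cup R$ with $L = \{l_1,\dots,l_\ell\}$ and $R = \{r_1,\dots,r_\ell\}$ and every edge $\{l_i,r_j\}$ present. I build $G'$ with color classes $L_i := \{(v,l_i) : v \in V(G)\}$ and $R_j := \{(v,r_j) : v \in V(G)\}$, so $|V(G')| = 2\ell n = O(\ell n)$, and the $K_{\ell,\ell}$-coloring sends $(v,x)$ to $x$. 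The edges of $G'$ go only between an $L$-class and an $R$-class (so $G'$ is $K_{\ell,\ell}$-colored): I put an edge between $(u,l_i)$ and $(w,r_j)$ exactly when either $i = j$ and $u = w$, or $i \neq j$ and $\{u,w\} \in E(G)$.

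The main verification is then a bijection between cliques of size $\ell$ in $G$ and color-prescribed homomorphisms $h : K_{\ell,\ell} \to G'$. Given such an $h$, color-prescribedness forces $h(l_i) = (a_i, l_i)$ and $h(r_j) = (b_j, r_j)$ for some vertices $a_i, b_j \in V(G)$. The edge $\{l_i, r_i\} \in E(K_{\ell,\ell})$ must map to an edge of $G'$ between $L_i$ and $R_i$; by construction the only such edges are the ``diagonal'' ones with equal $G$-vertex, so $a_i = b_i$ for every $i$. Writing $v_i := a_i = b_i$, the remaining edges $\{l_i, r_j\}$ with $i \neq j$ force $\{v_i, v_j\} \in E(G)$; since $G$ is simple this also forces the $v_i$ to be pairwise distinct, so $\{v_1,\dots,v_\ell\}$ is an $\ell$-clique in $G$. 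Conversely, any $\ell$-clique, together with any ordering of its vertices as $v_1,\dots,v_\ell$, yields a valid $h$ — but I must be careful about the counting here: a single unordered clique gives rise to $\ell!$ ordered tuples, hence $\ell!$ homomorphisms, whereas Lemma~\ref{lem:parsi_biclique} claims equality with the number of cliques (no $\ell!$ factor). So the construction above is not yet parsimonious.

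This is the point where the trick from \cite[Lemma~1.11]{Curticapean15} enters, and it is the main obstacle: I need to break the symmetry so that exactly one ordering of each clique survives. The standard device is to use a generic linear order (or a random injective weighting) on $V(G)$ and only include the diagonal edge between $(u, l_i)$ and $(w, r_j)$ — and more importantly restrict which $G$-vertices are allowed in class $L_i$ — so that the $i$-th coordinate is forced to be the $i$-th smallest vertex of the chosen clique. Concretely one partitions, for each candidate clique, its vertices by rank; more cleanly, one replaces the single gadget by a bounded number (a function of $\ell$) of gadgets, or augments each vertex with its rank-class, so that a color-prescribed homomorphism is forced to list the clique vertices in increasing order, killing the $\ell!$ overcount. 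I would carry this out by: (i) fixing an arbitrary total order on $V(G)$; (ii) only allowing $(v, l_i)$ and $(v, r_i)$ to ``participate'' when $v$ can be the $i$-th element of an increasing $\ell$-chain — or, equivalently, adding to $G'$ a linear-order gadget forcing $v_1 < v_2 < \dots < v_\ell$; (iii) re-checking the bijection, now between $\ell$-cliques and color-prescribed homomorphisms, is one-to-one. The running time and size bounds $O(\ell n)$ are immediate from the construction since we only ever make $\ell$ copies of $V(G)$ and inspect each edge/pair a bounded number of times. The delicate part is purely the symmetry-breaking bookkeeping; everything else is routine.
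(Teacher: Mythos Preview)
Your basic construction and the recognition of the $\ell!$ overcount are exactly right, and match the paper. The only place you fall short is the symmetry-breaking step, which you leave vague: restricting which $G$-vertices may appear in class $L_i$ ``when $v$ can be the $i$-th element of an increasing $\ell$-chain'' is not a well-defined local condition (it depends on the unknown clique), and ``adding a linear-order gadget'' is not spelled out and risks bloating the graph beyond $O(\ell n)$ or changing the color pattern away from $K_{\ell,\ell}$.

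The paper's fix is much simpler than either of your suggestions and requires no extra gadgetry at all: fix an arbitrary ordering $v_1,\dots,v_n$ of $V(G)$ and, in your edge rule between $(u,l_i)$ and $(w,r_j)$ with $i\neq j$, add the single extra requirement that the index of $u$ in $V(G)$ is smaller than the index of $w$ exactly when $i<j$ (and larger when $i>j$). In other words, the cross edge between color classes $i$ and $j$ is present only when the $G$-indices respect the same order as the color indices. With this one change your bijection argument goes through verbatim, but now the diagonal constraints $a_i=b_i$ together with the cross constraints force the chosen $G$-vertices to be listed in increasing index order, so each unordered $\ell$-clique corresponds to exactly one color-prescribed homomorphism. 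No separate order gadget, no restriction on which vertices sit in which class, and the $O(\ell n)$ bounds are untouched.
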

\begin{proof}
 Let the vertex set of $G$ be $\{v_i \mid 1 \leq i \leq n\}$ and let that of $K_{\ell, \ell}$ be $\{a_i, b_i \mid 1 \leq i \leq \ell\}$.
    We now construct the graph $G'$ on the vertex set $\{u_{i,j}, w_{i,j} \mid 1 \leq i \leq \ell, 1 \leq j \leq n\}$ with a $K_{\ell, \ell}$-coloring given by $c(u_{i,j}) = a_i$ and $c(w_{i, j}) = b_i$. We add an edge between $u_{i,j}$ and $w_{i',j'}$ if and only if
    \begin{itemize}
        \item either $(i, j) = (i', j')$,
        \item or $i < i'$, $j < j'$ and the vertices $v_j$ and $v_{j'}$ are adjacent,
        \item or $i > i'$, $j > j'$ and the vertices $v_j$ and $v_{j'}$ are adjacent.
    \end{itemize}
    Let $\{v_{i_1}, \ldots v_{i_{\ell}}\}$ be an $\ell$-clique in $G$.
    Assume w.l.o.g.\ that $i_k < i_{k'}$ for $k < k'$.
    Then the set $\{u_{1,j_1}, \ldots u_{\ell, j_\ell}, w_{1,i_1}, \ldots w_{\ell, j_\ell}\}$ forms a colorful biclique in $G'$, so it gives rise to a color-prescribed homomorphism $h\in \cphoms{K_{\ell, \ell}}{G'}$.
    Now let $h'\in  \cphoms{K_{\ell, \ell}}{G'}$ be a color-prescribed homomorphism.
    Then there has to be the following colorful biclique in $G'$: \[\{u_{1,\alpha_1}, \ldots u_{\ell, \alpha_\ell}, w_{1,\beta_1}, \ldots w_{\ell, \beta_\ell}\}\,.\]
    We first see that for every $i$ we have $\alpha_i = \beta_i$ since there has to be an edge between $u_{i, \alpha_i}$ and $w_{i, \beta_i}$.
    Furthermore the edges enforce $\alpha_j < \beta_{j'}$ for every $j < j'$.
    Thus $\{v_{\alpha_1}, \ldots, v_{\alpha_\ell}\}$ is a clique of size $\ell$ in $G$.
    Since every homomorphism yields $\beta_j = \alpha_j < \beta_{j'} = \alpha_{j'}$ for $j < j'$ there is exactly one homomorphism in $\cphoms{K_{\ell,\ell}}{G'}$ corresponding to each $\ell$-clique in~$G$.
\end{proof}

\begin{proof}[Proof of Theorem~\ref{thm:main_result}]
Let $\Phi$ and $\mathcal{K}$ be as given in Theorem~\ref{thm:main_result}. We define a class of graphs $\mathcal{H}$ as follows:
\[\mathcal{H} = \{K_{t,t}~|~t \in \mathcal{K}\} \,.\]
By the reductions sequence~(\ref{eq:red_sequence}), given by Lemma~\ref{lem:hom_to_colhom}, Lemma~\ref{lem:main_lem} and Lemma~\ref{lem:col_ind_sub_to_ind_sub}, we obtain that $\#\homsprob(\mathcal{H})\fptred \#\indsubsprob(\Phi)$. As $\Phi$ is computable, $\mathcal{H}$ is recursively enumerable. Furthermore, as $\mathcal{K}$ is infinite, we have that there are arbitrary large bicliques in $\mathcal{H}$ and, in particular, the treewidth of $\mathcal{H}$ is unbounded. Therefore $\#\homsprob(\mathcal{H})$, and hence $\#\indsubsprob(\Phi)$, are $\#\W{1}$-hard by the classification of counting homomorphisms due to Dalmau and Jonsson~\cite{DalmauJ04}. For the tight bound under ETH, we reduce from the \emph{decision problem} $\clique$ which asks, given $G$ and $k$, to \emph{decide} whether $G$ contains a clique of size $k$ and which cannot be solved in time $f(k)\cdot n^{o(k)}$ for any computable function $f$, unless ETH fails~\cite{Chenetal05,Chenetal06}. Now assume that~$\mathcal{K}$ is dense and let $(G,k)$ be an instance of $\clique$. By density of $\mathcal{K}$, there exists $\ell \in \mathcal{K}$ such that $k \leq \ell \leq ck$ for some overall constant $c$ independent of $k$. We construct the graph $\hat{G}$ from $G$ by adding $\ell-k$ further vertices and adding edges between all new vertices as well as between every pair of an old and a new vertex. It can then easily be verified that $G$ contains a clique of size $k$ if and only if $\hat{G}$ contains a clique of size $\ell$.

Next we apply Lemma~\ref{lem:parsi_biclique} to $\hat{G}$ and $\ell$, and obtain an $K_{\ell,\ell}$-colored graph $G'$ satisfying that the number of $\ell$-cliques in $\hat{G}$ is equal to $\#\cphoms{K_{\ell,\ell}}{G'}$. Finally, we invoke Lemma~\ref{lem:main_lem} and Lemma~\ref{lem:col_ind_sub_to_ind_sub} to conclude the reduction. In particular, all reductions are tight in the sense that every oracle call for $\#\indsubsprob(\Phi)$ in the final part of the reduction is a pair $(\tilde{G},2\ell)$ where the number of vertices of $\tilde{G}$ is bounded by $O(\ell\cdot|V(G)|)$. As $\ell \leq ck$ we conclude that every algorithm that solves $\#\indsubsprob(\Phi)$ in time $f(k)\cdot n^{o(k)}$ can be used to solve $\clique$ in time $f(k)\cdot n^{o(k)}$ --- just check in the end whether the output is a number greater than zero.

Finally, we point out that for both ($\#\W{1}$ and ETH) hardness results, the last part of the reduction, that is, Lemma~\ref{lem:col_ind_sub_to_ind_sub} only queries for graphs that are $K_{t,t}$-colorable and hence bipartite.
\end{proof}

Note that, in case $\Phi$ or its complement is edge-monotone, we only have to find infinitely many prime powers~$t$ for which $\Phi$ is neither true nor false on the set of all edge-subgraphs of~$K_{t,t}$, which is the case for all sensible, non-trivial properties that do not rely on the number of vertices in some way. If $\Phi$ (or its complement) is monotone, that is, not only closed under the removal of edges, but also under the removal of vertices, then such artificial properties do not exist and we can state the result more clearly as follows.

\begin{corollary}[Theorem~\ref{thm:intro_main_bip} restated]
Let $\Phi$ be a computable monotone graph property such that $\Phi$ and $\neg \Phi$ hold on infinitely many bipartite graphs. Then $\#\indsubsprob(\Phi)$ is $\#\W{1}$-hard and cannot be solved in time $f(k)\cdot n^{o(k)}$ for any computable function $f$ unless ETH fails. This holds true even if the input graphs to $\#\indsubsprob(\Phi)$ are restricted to be bipartite.
\end{corollary}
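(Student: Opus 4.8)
The plan is to derive this corollary from Theorem~\ref{thm:main_result} (the restated Theorem~\ref{thm:intro_new}) by checking that its hypotheses are met. So I must produce, from the assumption that $\Phi$ is monotone and that both $\Phi$ and $\neg\Phi$ hold on infinitely many bipartite graphs, a set $\mathcal{K}$ of prime powers $t$ with $\Phi(\mathsf{IS}_{2t})\neq\Phi(K_{t,t})$ that is both infinite and dense.

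First I would exploit monotonicity to pin down the values $\Phi(\mathsf{IS}_{2t})$ and $\Phi(K_{t,t})$. Since $\Phi$ is closed under vertex and edge deletion, $\Phi(\mathsf{IS}_n)$ is monotone in $n$ in the sense that $\Phi(\mathsf{IS}_{n+1})=1 \Rightarrow \Phi(\mathsf{IS}_n)=1$; likewise $\Phi(K_{t,t})=1 \Rightarrow \Phi(K_{s,s})=1$ for $s\le t$ and also $\Phi(\mathsf{IS}_{2t})=1$ since $\mathsf{IS}_{2t}$ is a (vertex- and edge-) subgraph of $K_{t,t}$. Hence the only way to have $\Phi(\mathsf{IS}_{2t})\neq\Phi(K_{t,t})$ is $\Phi(\mathsf{IS}_{2t})=1$ and $\Phi(K_{t,t})=0$. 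Now, any bipartite graph on which $\Phi$ holds is a subgraph of some $K_{m,m}$, so if $\Phi$ holds on infinitely many bipartite graphs then, by monotonicity, $\Phi(K_{m,m})=1$ for infinitely many — hence (again by monotonicity in $t$) for \emph{all} — $m$; but that contradicts $\neg\Phi$ holding on infinitely many bipartite graphs unless... so in fact the correct dichotomy is: either $\Phi(K_{t,t})=1$ for all $t$, or $\Phi(K_{t,t})=0$ for all $t$ sufficiently large — wait, monotonicity only gives that $\{t : \Phi(K_{t,t})=1\}$ is downward closed, so it is either all of $\N$ or an initial segment $\{1,\dots,t_0\}$. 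The hypothesis that $\neg\Phi$ holds on infinitely many bipartite graphs rules out nothing directly about $K_{t,t}$, but the hypothesis that $\Phi$ holds on infinitely many bipartite graphs is what I must convert: I claim it forces $\Phi(\mathsf{IS}_{2t})=1$ for all $t$. Indeed, take bipartite graphs $B$ with $\Phi(B)=1$ and unboundedly many vertices; delete edges to get $\Phi(\mathsf{IS}_{|V(B)|})=1$, hence $\Phi(\mathsf{IS}_n)=1$ for arbitrarily large $n$, hence for all $n$.

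So the dichotomy is clean: $\Phi(\mathsf{IS}_{2t})=1$ for every $t$, and $\{t:\Phi(K_{t,t})=1\}$ is a downward-closed subset of $\N$. If that set were all of $\N$, then every bipartite graph $B$, being a subgraph of $K_{|V(B)|,|V(B)|}$, would satisfy $\Phi(B)=1$, contradicting that $\neg\Phi$ holds on infinitely many bipartite graphs. Therefore $\{t:\Phi(K_{t,t})=1\}$ is a finite initial segment, i.e.\ there is $t_0$ with $\Phi(K_{t,t})=0$ for all $t>t_0$. Consequently $\Phi(\mathsf{IS}_{2t})\neq\Phi(K_{t,t})$ for every $t>t_0$, and in particular for every prime power $t>t_0$. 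Thus $\mathcal{K}\supseteq\{p^k : p \text{ prime}, k\ge 1, p^k>t_0\}$, which is infinite; and it is dense because, e.g., the powers of $2$ above $t_0$ already witness density (taking $c=2$: for every $k'$ there is a power of two in $[k',2k']$, enlarged by a bounded factor if $k'\le t_0$). Applying Theorem~\ref{thm:main_result} with this $\mathcal{K}$ then yields $\#\W{1}$-hardness and the ETH lower bound, with bipartite input graphs, completing the proof.

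I do not expect a serious obstacle here: the content is entirely the monotonicity bookkeeping of the previous paragraph, and the one mild subtlety is making sure the "infinitely many bipartite graphs" hypotheses are used correctly — one of them forces $\Phi(\mathsf{IS}_n)=1$ for all $n$, the other forces $\Phi(K_{t,t})=0$ for all large $t$ — after which density is immediate from the primes (or even just the powers of two) being dense. If anything needs care, it is only the edge case of small $t$ in the definition of denseness, handled by absorbing $t_0$ into the constant $c$.

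\begin{proof}
We derive this from Theorem~\ref{thm:main_result} by verifying its hypotheses. Let $\mathcal{K}$ be the set of all prime powers $t$ with $\Phi(\mathsf{IS}_{2t})\neq \Phi(K_{t,t})$.

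Since $\Phi$ is monotone, the set $\{n\in\N : \Phi(\mathsf{IS}_n)=1\}$ is downward closed (a subgraph of $\mathsf{IS}_n$ obtained by deleting a vertex is $\mathsf{IS}_{n-1}$), and likewise $\{t\in\N : \Phi(K_{t,t})=1\}$ is downward closed. Moreover, for any graph $B$ on $n$ vertices, $B$ is a subgraph of $K_{n,n}$ and $\mathsf{IS}_n$ is a subgraph of $B$; hence $\Phi(B)=1$ implies $\Phi(\mathsf{IS}_n)=1$, and $\Phi(K_{t,t})=1$ implies $\Phi(\mathsf{IS}_{2t})=1$.

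By assumption $\Phi$ holds on infinitely many bipartite graphs, so there are bipartite graphs on arbitrarily many vertices satisfying $\Phi$; by the previous paragraph $\Phi(\mathsf{IS}_n)=1$ for arbitrarily large $n$, and by downward closedness $\Phi(\mathsf{IS}_n)=1$ for \emph{all} $n\in\N$. On the other hand, if $\Phi(K_{t,t})=1$ for all $t$, then every bipartite graph $B$ satisfies $\Phi(B)=1$ (as $B$ embeds into $K_{|V(B)|,|V(B)|}$), contradicting that $\neg\Phi$ holds on infinitely many bipartite graphs. Hence $\{t:\Phi(K_{t,t})=1\}$ is a proper downward-closed subset of $\N$, so there is $t_0\in\N$ with $\Phi(K_{t,t})=0$ for every $t>t_0$.

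Combining these, $\Phi(\mathsf{IS}_{2t})=1\neq 0=\Phi(K_{t,t})$ for every integer $t>t_0$, so every prime power greater than $t_0$ lies in $\mathcal{K}$. In particular $\mathcal{K}$ is infinite. It is also dense: taking $c=\max\{2,t_0+1\}$, for every $k'\in\N$ there is a power of two $t$ with $k'\le t\le 2k'$ if $k'>t_0$, and for $k'\le t_0$ we may take any prime power $t\in\mathcal{K}$ with $t_0<t\le 2t_0\le ck'$ (such a $t$ exists, e.g.\ a power of two); in all cases $k'\le t\le ck'$ with $t\in\mathcal{K}$.

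Applying Theorem~\ref{thm:main_result} to $\Phi$ with this infinite and dense set $\mathcal{K}$, we conclude that $\#\indsubsprob(\Phi)$ is $\#\W{1}$-hard and cannot be solved in time $f(k)\cdot n^{o(k)}$ for any computable function $f$ unless ETH fails, even when the input graphs are restricted to be bipartite.
\end{proof}
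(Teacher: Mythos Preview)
Your proof is correct and follows exactly the paper's (much terser) approach: use monotonicity to get $\Phi(\mathsf{IS}_n)=1$ for all $n$ and $\Phi(K_{t,t})=0$ for all but finitely many $t$, so that $\mathcal{K}$ is cofinite in the prime powers and hence infinite and dense, then invoke Theorem~\ref{thm:main_result}. The only slip is the density constant: with $c=\max\{2,t_0+1\}$ and $k'=1$ your claimed inequality $2t_0\le ck'$ fails for $t_0\ge 2$; taking $c=\max\{2,2t_0\}$ (and using Bertrand's postulate to get a prime in $(t_0,2t_0]$) repairs this immediately.
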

\begin{proof}
If $\Phi$ is monotone and $\Phi$ and $\neg\Phi$ hold on infinitely many bipartite graphs, then $\Phi(\mathsf{IS}_k)=1$ for all positive integers $k$ and $\Phi(K_{t,t})=0$ for all but finitely many $t$. Hence we can apply Theorem~\ref{thm:main_result} and, in particular, the set $\mathcal{K}$ will contain all but finitely many prime powers and is therefore dense.
\end{proof}

\paragraph*{Conclusion}
%\php{Remove this section due to space constraints?}
We have established hardness for $\#\indsubsprob(\Phi)$ for \emph{any} (edge-)monotone property $\Phi$ that is non-trivial on bipartite graphs. In particular, this holds true even if we count modulo a prime and restrict the input graphs to be bipartite as well. Hence, we did not only significantly extend the set of graph properties $\Phi$ for which the (parameterized) complexity of $\#\indsubsprob(\Phi)$ is understood, but we also generalized many of the prior results, such as~\cite{JerrumM15},~\cite{Meeks16} and parts of~\cite{RothS18} to the cases of bipartite input graphs and modular counting.

As a next step towards a proof of Conjecture~\ref{con:main_conjecture}, we suggest the study of properties that are defined by forbidden induced subgraphs, for which the complexity of $\#\indsubsprob(\Phi)$ is only partially resolved at this point.

\bibliography{references}

\newpage

\appendix

\section{Extension to modular counting}
In this very final part of the paper, we show that our main result (Theorem~\ref{thm:main_result}) can easily be extended to counting modulo a fixed prime:
\begin{theorem}\label{thm:main_result_mod}
Let $p$ be a prime number, let $\Phi$ be a computable graph property and let~$\mathcal{K}$ be the set of all prime powers $t=p^k$ such that $\Phi(\mathsf{IS}_{2t}) \neq \Phi(K_{t,t})$. If $\mathcal{K}$ is infinite then $\mathsf{Mod}_p\indsubsprob(\Phi)$ is $\mathsf{Mod}_p\W{1}$ hard. If additionally $\mathcal{K}$ is dense then it cannot be solved in time $f(k)\cdot n^{o(k)}$ for any computable function $f$ unless ETH fails. This holds true even if the input graphs to $\mathsf{Mod}_p\indsubsprob(\Phi)$ are restricted to be bipartite.
\end{theorem}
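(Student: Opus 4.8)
The plan is to re-run, modulo $p$, the chain of reductions behind Theorem~\ref{thm:main_result}, and to check that every link survives the passage to the field $\Z_p$. The decisive new feature is that here $\mathcal{K}$ consists of powers $t=p^k$ of the \emph{same} prime $p$ modulo which we count, so $K_{t,t}$ has $p^{2k}$ edges and is $p$-edge-transitive, and for $t\in\mathcal{K}$ we have $\Phi(K_{t,t}[\emptyset])=\Phi(\mathsf{IS}_{2t})\neq\Phi(K_{t,t})$. Hence Lemma~\ref{lem:alt_enum} gives $\hat{\chi}(\Phi,K_{t,t})\equiv\pm 1\pmod p$, i.e.\ a \emph{unit} of $\Z_p$. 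Feeding this into Lemma~\ref{lem:cpindprops_to_cphoms}, the coefficient of $\#\cphoms{K_{t,t}}{\star}$ is invertible mod $p$; the matrix of Lemma~\ref{lem:invertible} is non-singular over $\Z_p$ (as already noted there); and Lemma~\ref{lem:homomorphic} is an identity of integers. Therefore the color-prescribed complexity monotonicity of Lemma~\ref{lem:color_prescribed_monotonicity} --- and with it Lemma~\ref{lem:main_lem} --- goes through verbatim over $\Z_p$, yielding a parameterized Turing reduction that computes $\#\cphoms{K_{t,t}}{G}\bmod p$ from an oracle for $\mathsf{Mod}_p\cpindsubsprob(\Phi)$. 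Finally Lemma~\ref{lem:col_ind_sub_to_ind_sub} is inclusion--exclusion with integer coefficients and only ever queries $K_{t,t}$-colorable, hence bipartite, graphs, so it reduces $\mathsf{Mod}_p\cpindsubsprob(\Phi)$ to $\mathsf{Mod}_p\indsubsprob(\Phi)$ on bipartite inputs.

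It remains to feed this pipeline with a $\mathsf{Mod}_p\W{1}$-hard source that also carries the ETH lower bound. I would start from $\mathsf{Mod}_p$-counting of \emph{vertex-colorful $k$-cliques} (a graph whose vertex set is partitioned into $k$ classes; count mod $p$ the $k$-cliques using one vertex per class; parameter $k$): this is $\mathsf{Mod}_p\W{1}$-hard because the standard $\#\W{1}$-hardness reduction for counting cliques is parsimonious, and it inherits the $f(k)\cdot n^{o(k)}$ lower bound from the clique reductions of~\cite{Chenetal05,Chenetal06}. The crucial point is that, unlike in Theorem~\ref{thm:main_result}, the padding up to a clique size $\ell\in\mathcal{K}$ with $\ell\ge k$ must be \emph{parsimonious}: add $\ell-k$ new singleton classes, each a single vertex made adjacent to every other vertex; then restriction to the original classes is a bijection between colorful $\ell$-cliques and colorful $k$-cliques. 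Deleting all within-class edges turns the colorful $\ell$-cliques into the ordinary $\ell$-cliques of the resulting uncolored graph, and Lemma~\ref{lem:parsi_biclique}, applied to that graph and $\ell$, is itself parsimonious and outputs a $K_{\ell,\ell}$-colored graph whose color-prescribed-homomorphism count equals the number of $\ell$-cliques. Composing these parsimonious steps with the mod-$p$ sequence of the previous paragraph yields $\mathsf{Mod}_p\W{1}$-hardness of $\mathsf{Mod}_p\indsubsprob(\Phi)$ as soon as $\mathcal{K}$ is infinite; if $\mathcal{K}$ is additionally dense we choose $\ell\le ck$, so every oracle query has $O(\ell\cdot|V(G)|)=O(k\cdot|V(G)|)$ vertices and the ETH lower bound transfers tightly, exactly as for Theorem~\ref{thm:main_result}.

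The main obstacle I anticipate is keeping the whole pipeline \emph{value-exact} modulo $p$. In the proof of Theorem~\ref{thm:main_result} the last reductions only had to preserve the predicate ``the count is nonzero'', which is what licensed the cheap, non-parsimonious padding by universal vertices; that slack disappears modulo $p$, forcing the two points where care is needed: (i) producing a genuinely parsimonious way of raising the clique size into $\mathcal{K}$ (handled above via colorful cliques rather than universal vertices), and (ii) ensuring that the single coefficient one is forced to divide by in the complexity-monotonicity step, namely $\hat{\chi}(\Phi,K_{t,t})$, is a unit of $\Z_p$. Point (ii) is precisely where the hypothesis ``$t=p^k$'' (rather than an arbitrary prime power, as in Theorem~\ref{thm:main_result}) is essential: it is what lets us invoke Lemma~\ref{lem:alt_enum} with the same prime $p$ modulo which we are counting, and everything else in the argument is either parsimonious or uses only integer coefficients, so it descends to $\Z_p$ without further work.
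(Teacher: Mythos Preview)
Your proposal is correct, and the first paragraph matches the paper's Fact~\ref{fac:mod_fac} essentially verbatim: the survival of Lemmas~\ref{lem:main_lem} and~\ref{lem:col_ind_sub_to_ind_sub} over $\Z_p$ rests exactly on $\hat{\chi}(\Phi,K_{t,t})$ being a unit mod $p$ (via Lemma~\ref{lem:alt_enum}) and on the triangular matrix of Lemma~\ref{lem:invertible} staying non-singular over $\Z_p$.

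Where you diverge from the paper is in the source of the reduction. The paper does \emph{not} make the padding parsimonious; instead it starts from the \emph{decision} problem $\clique$, reuses the non-parsimonious universal-vertex padding from Theorem~\ref{thm:main_result}, and then repairs the mod-$p$ damage with the randomized clique-isolation lemma of Williams et~al.\ (Lemma~\ref{lem:isolation}), amplifying the $2^{-\ell}$ success probability by $2^{\ell}$ repetitions. Your route---pass through vertex-colorful $k$-cliques, pad parsimoniously with singleton universal classes, then strip the colors by deleting intra-class edges---is a genuine alternative: it keeps every step value-exact, so the whole chain is deterministic and the $\mathsf{Mod}_p\W{1}$-hardness is obtained by an honest parsimonious reduction from $\mathsf{Mod}_p\clique$ rather than via a randomized detour through decision. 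The paper's approach buys modularity (it reuses the padding of Theorem~\ref{thm:main_result} unchanged and outsources the mod-$p$ fix to a black-box isolation lemma), while yours buys determinism and a cleaner hardness statement. One small point you leave implicit but should state: the parsimonious reduction from uncolored to colorful $k$-cliques you need at the very start is the same ordering trick used in Lemma~\ref{lem:parsi_biclique} (put an arbitrary order on $V(G)$ and only connect $(u,i)$ to $(v,j)$ when $i<j$ and $u<v$); ``the standard reduction'' without this tweak multiplies by $k!$, which can vanish mod $p$.
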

Here $\mathsf{Mod}_p\indsubsprob(\Phi)$ asks, given $G$ and $k$, to compute the number of induced subgraphs with $k$ vertices in $G$ that satisfy $\Phi$ \emph{modulo} $p$. The parameterized complexity class $\mathsf{Mod}_p\W{1}$ is defined by the problem of, given $G$ and $k$, deciding whether the number of $k$-cliques in $G$ is $0$ modulo $p$, which is complete for the class (see~\cite{BjorklundDH15} for $p=2$ and Chapter~1.2.2 in~\cite{Curticapean15} for the general case).

First of all, we point out that the modular counting version of Theorem~\ref{thm:intro_main_bip} follows as corollary from the above theorem in the same way Theorem~\ref{thm:intro_main_bip} follows from Theorem~\ref{thm:main_result}. For the proof of Theorem~\ref{thm:main_result_mod} we rely on the following fact stating that all required reductions in Section~\ref{sec:main_reductions} work as well in the case of counting modular a prime number.

\begin{fact}\label{fac:mod_fac}
Let $p$ be a fixed prime. Then Lemma~\ref{lem:main_lem} and Lemma~\ref{lem:col_ind_sub_to_ind_sub} remain true when counting is done modulo $p$ if  the graph $H$ is restricted to be $K_{t,t}$ for some prime power~$t=p^k$.
\end{fact}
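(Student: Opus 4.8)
\textbf{Proof plan for Fact~\ref{fac:mod_fac}.}
The plan is to revisit each of the two reductions and check that the only algebraic operations they perform are compatible with computation in $\Z_p$, and that the one genuinely arithmetic step---the matrix inversion hidden inside the complexity-monotonicity argument---survives modulo $p$ precisely because $H$ is restricted to be $K_{t,t}$ with $t=p^k$. First I would dispose of Lemma~\ref{lem:col_ind_sub_to_ind_sub}: its proof is a pure inclusion--exclusion identity, so the equation $\#\cpindsubs{\Phi}{G}=\sum_{J\subseteq[k]}(-1)^{\#J}\cdot\#\indsubs{\Phi,k}{G_J}$ already holds as an identity of integers; reducing both sides modulo $p$ immediately yields the modular statement, with no restriction on $H$ needed beyond being a $k$-vertex graph. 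The $2^k$ oracle calls are unchanged and their queries are still $K_{t,t}$-colorable, hence bipartite.

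The substantial case is Lemma~\ref{lem:main_lem}, which chains Lemma~\ref{lem:cpindprops_to_cphoms} with Lemma~\ref{lem:color_prescribed_monotonicity}. Here I would proceed in two steps. First, Lemma~\ref{lem:cpindprops_to_cphoms} is again an identity of integer-valued quantities (it is built from Claim~\ref{clm:cpindsubs_to_cpsubs}, an inclusion--exclusion identity, together with the bijection between color-prescribed embeddings and their images), so it descends to $\Z_p$ verbatim; in particular the coefficient of $\#\cphoms{K_{t,t}}{G}$ is still $(\pm1)\cdot\hat\chi(\Phi,K_{t,t})$. Second, for the complexity-monotonicity step one must re-examine Lemma~\ref{lem:color_prescribed_monotonicity}: its proof solves the linear system whose matrix is $M(S,T)=\#\cphoms{H[S]}{H[T]}$, and Lemma~\ref{lem:invertible} already states---and its triangularity proof already shows---that $M$ has $\pm1$ on the diagonal and is triangular, hence is invertible \emph{over $\Z_p$} for any prime $p$. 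Thus the Gaussian elimination can be carried out in $\Z_p$. The only remaining point is the final division by the coefficient $a(H[S])$ of the term one wants to extract: for the term $S=E(K_{t,t})$ this coefficient is $(\pm1)\cdot\hat\chi(\Phi,K_{t,t})$, and by Lemma~\ref{lem:alt_enum} applied to the $p$-edge-transitive graph $K_{t,t}$ (which has $t^2=p^{2k}$ edges) together with the hypothesis $\Phi(\mathsf{IS}_{2t})=\Phi(K_{t,t}[\emptyset])\neq\Phi(K_{t,t})$, this value is $\pm1\bmod p$, in particular a unit in $\Z_p$. Hence the inverse needed in the last line of the proof of Lemma~\ref{lem:color_prescribed_monotonicity} exists modulo $p$, and $\#\cphoms{K_{t,t}}{G}$ can be recovered modulo $p$.

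Putting the pieces together, one reads off that the composed reduction of Lemma~\ref{lem:main_lem} produces, from a $K_{t,t}$-colored graph $G$, polynomially many $K_{t,t}$-colored oracle queries to $\mathsf{Mod}_p\indsubsprob(\Phi)$ and returns $\#\cphoms{K_{t,t}}{G}\bmod p$; chaining with the modular form of Lemma~\ref{lem:col_ind_sub_to_ind_sub} gives the reduction to $\mathsf{Mod}_p\indsubsprob(\Phi)$ with bipartite queries. I expect the main---indeed the only---obstacle to be isolating exactly where a division occurs and verifying that in the restricted setting $H=K_{t,t}$, $t=p^k$, the relevant coefficient is a unit mod $p$; this is precisely the role of the prime-power hypothesis, and it is exactly what Lemma~\ref{lem:alt_enum} was set up to deliver. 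Everything else is a matter of observing that the identities involved are integer identities, hence stable under reduction mod $p$, and that Lemma~\ref{lem:invertible} was already proved over $\Z_p$.
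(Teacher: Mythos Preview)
Your proposal is correct and follows essentially the same approach as the paper: the paper's justification of Fact~\ref{fac:mod_fac} singles out exactly the two non-trivial points you identify, namely that $\hat\chi(\Phi,K_{t,t})\neq 0\bmod p$ by Lemma~\ref{lem:alt_enum} and that the matrix $M$ of Lemma~\ref{lem:invertible} is non-singular over $\Z_p$, with everything else being integer identities that descend modulo $p$. One minor wording slip: the number of oracle queries in the composed reduction is $f(|H|)$ (exponential in the parameter), not ``polynomially many,'' but this does not affect the FPT reduction.
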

The only two non-trivial observations required to verify Fact~\ref{fac:mod_fac} are, first, that $\hat{\chi}(\Phi,K_{t,t}) \neq 0 \mod p$ whenever $\Phi(K_{t,t}[\emptyset]) \neq \Phi(K_{t,t})$ (Lemma~\ref{lem:alt_enum}) and, second, that complexity monotonicity (Lemma~\ref{lem:color_prescribed_monotonicity}) holds for computation modulo $p$ as well, since non-singularity of the matrix~$M$ in the proof is given by Lemma~\ref{lem:invertible} even in case the entries of $M$ are considered to be elements of~$\mathbb{Z}_p$.
The last ingredient for the proof of Theorem~\ref{thm:main_result_mod}, in particular for hardness under ETH, requires a method of isolating cliques that works in the parameterized setting. This is given by the following result of Williams et al.
\begin{lemma}[Lemma~2.1 in~\cite{WilliamsWWY15}]\label{lem:isolation}
Let $p \geq 2$ be an integer, $G,H$ be undirected graphs. Let~$G'$ be a random induced subgraph of $G$ such that each vertex is taken with probability $1/2$,  independently. If there is at least one induced $H$ in $G$, the number of induced $H$ in $G'$ is not a multiple of $p$ with probability at least $2^{-|H|}$.
\end{lemma}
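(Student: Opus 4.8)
The plan is to recognise the number of induced copies of $H$ in the random graph $G'$ as the value of an explicit multilinear polynomial at a uniformly random Boolean point, and then to apply a Schwartz--Zippel-type estimate over $\Z/p\Z$. I would identify a vector $x = (x_v)_{v \in V(G)} \in \{0,1\}^{V(G)}$ with the induced subgraph in which a vertex $v$ is kept exactly when $x_v = 1$; drawing $x$ uniformly at random realises precisely the random induced subgraph $G'$ from the statement. Since $G'[C] = G[C]$ for every $C \subseteq V(G')$, the number of induced copies of $H$ in $G'$ equals
\[
N(x) \;:=\; \sum_{C \subseteq V(G)\,:\, G[C] \cong H}\ \prod_{v \in C} x_v\,,
\]
because the $C$-th summand is $1$ exactly when every vertex of $C$ survives in $G'$. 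Reading $N$ modulo $p$, distinct sets $C$ with $G[C] \cong H$ give rise to distinct monomials, each of degree $|H|$ and with coefficient $1 \not\equiv 0 \pmod p$; as at least one such $C$ exists by hypothesis, $N$ is a \emph{nonzero} multilinear polynomial over $\Z/p\Z$ of degree exactly $d := |H|$.

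The combinatorial core is then the following claim: \emph{if $P$ is a nonzero multilinear polynomial over $\Z/p\Z$ of degree $d$ in $n$ variables and $x \in \{0,1\}^n$ is uniformly random, then $\Pr[P(x) \neq 0] \geq 2^{-d}$.} I would prove this by induction on $n$. For $n = 0$, $P$ is a nonzero constant and the probability is $1$. For $n \geq 1$: if $P$ does not involve $x_n$, discard $x_n$ and apply the inductive hypothesis; otherwise write $P = x_n Q + R$ with $Q, R$ not involving $x_n$, so $Q \neq 0$, $\deg Q \leq d - 1$ and $\deg R \leq d$. Averaging over $x_n$ gives
\[
\Pr[P(x) \neq 0] \;=\; \tfrac12\,\Pr_{x'}\!\big[R(x') \neq 0\big] \;+\; \tfrac12\,\Pr_{x'}\!\big[(Q+R)(x') \neq 0\big]\,.
\]
If $R \equiv 0$, the second term equals $\tfrac12 \Pr[Q(x') \neq 0] \geq \tfrac12 \cdot 2^{-(d-1)} = 2^{-d}$ by induction. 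If $R \not\equiv 0$, then $\Pr[R \neq 0] \geq 2^{-\deg R} \geq 2^{-d}$ by induction; and either $Q + R \not\equiv 0$, so $\Pr[(Q+R) \neq 0] \geq 2^{-\deg(Q+R)} \geq 2^{-d}$, or $Q + R \equiv 0$, in which case $R = -Q$ has degree at most $d - 1$ and already $\Pr[R \neq 0] \geq 2^{-(d-1)} = 2 \cdot 2^{-d}$. In all cases the two probabilities sum to at least $2 \cdot 2^{-d}$, hence $\Pr[P(x) \neq 0] \geq 2^{-d}$. Applying the claim to $N \bmod p$ with $d = |H|$ completes the proof.

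I expect the one genuinely delicate point to be keeping the bound at $2^{-d}$, rather than losing a factor of two when splitting off the variable $x_n$; this is exactly what forces the case analysis according to whether $R$ and $Q + R$ vanish identically. Everything else is routine bookkeeping. Two side remarks: primality of $p$ is never used (the argument works over $\Z/p\Z$ for any integer $p \geq 2$, the only ingredient being $1 \not\equiv 0$); and should one prefer to argue over a field, one may replace $p$ throughout by a prime divisor $q$ of $p$, since $q \nmid N(x)$ already implies $p \nmid N(x)$.
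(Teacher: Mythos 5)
Your proof is correct. Where it differs from the argument in Williams et al.\ (the paper does not reprove this lemma, it only cites it) is in the step that turns ``the polynomial $N$ is nonzero modulo $p$'' into a probability bound. Your route is to establish a general Schwartz--Zippel-type statement for multilinear polynomials over $\Z/p\Z$ on the Boolean cube -- that a nonzero multilinear polynomial of degree $d$ is nonzero at a uniform Boolean point with probability at least $2^{-d}$ -- by induction on the number of variables, with a careful case split to keep the exponent at $d$ rather than $d+1$. The original argument is more hands-on: fix one witness copy $C_0$ with $G[C_0]\cong H$, condition on the random choices for the vertices \emph{outside} $C_0$, and observe that the resulting restriction of $N$ to the $|H|$ variables indexed by $C_0$ still has the monomial $\prod_{v\in C_0} x_v$ with coefficient exactly $1$ (since every other contributing set has the same size $|H|$, so $C_0$ is the unique one containing $C_0$); being a nonzero multilinear polynomial in $|H|$ variables, it takes a nonzero value on at least one of the $2^{|H|}$ points, which gives the $2^{-|H|}$ bound after averaging over the conditioning. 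The two approaches thus trade generality for directness: your inductive lemma is stronger (it bounds by degree rather than by the number of live variables after conditioning, and would still give $2^{-|H|}$ even without the witness-fixing step), whereas the original proof avoids the induction entirely by exploiting the homogeneity of $N$ and the uniqueness of the top monomial over $C_0$. Your closing remark that primality of $p$ is never needed is also correct and matches the lemma statement, which allows any integer $p\geq 2$.
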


\begin{proof}[Proof of Theorem~\ref{thm:main_result_mod}]
The proof is most similar to the proof of the tight lower bound under ETH in Theorem~\ref{thm:main_result}. We start our reduction from the problem of \emph{finding} a clique of size~$k$. In case $\mathcal{K}$ is dense and we aim to establish the ETH hardness result, we perform the following two initial steps before the main reduction:
\begin{enumerate}
\item Given $G$ and $k$, we construct a graph $\hat{G}$ such that $G$ contains a clique of size $k$ if and only if $\hat{G}$ contains a clique of size $\ell$ where $k\leq \ell\leq ck$ for some overall constant $c$. The details of the construction are given in the proof of Theorem~\ref{thm:main_result}.
\item We use Lemma~\ref{lem:isolation} to isolate an $\ell$-clique in $\hat{G}$, assuming there is any, with high probability.
\end{enumerate}
For the main part of the reduction we then first apply the reduction from counting cliques to counting color-prescribed homomorphisms from the biclique as given by Lemma~\ref{lem:parsi_biclique}. In particular, this reduction is parsimonious. Finally, we proceed from this point on precisely as in the proof of Theorem~\ref{thm:main_result}, the correctness of which follows by Fact~\ref{fac:mod_fac}.

We conclude by pointing out that, in case the randomized construction of Lemma~\ref{lem:isolation} was used, we can perform probability amplification by repeating the final algorithm $2^k$ times to end up in a constant success probability.
\end{proof}

\end{document}